\newif\ifnotes
\newif\ifcr
\notesfalse
\crfalse


\ifnotes
\newcommand{\nir}[1]{$\ll$\textsf{\color{orange} Nir: { #1}}$\gg$}
\newcommand{\omri}[1]{$\ll$\textsf{\color{blue} Omri: { #1}}$\gg$}

\else
\newcommand{\omri}[1]{}
\newcommand{\nir}[1]{}
\fi

\documentclass[a4paper,11pt]{article}
\usepackage{footmisc} 
\usepackage[hypertexnames=false,colorlinks=true,citecolor=Maroon]{hyperref}
\usepackage{newtxtext}
\usepackage{amsthm}
\usepackage{fullpage}
\usepackage{amsmath,amssymb,enumerate,algorithm,algorithmic,latexsym}
\usepackage{tikz}
\usetikzlibrary{fpu}
\usepackage{breakcites}
\usepackage{float}
\newfloat{algorithm}{H}{lop}
\usepackage{color}
\usepackage{colortbl}
\definecolor{Maroon}{cmyk}{0, 0.87, 0.68, 0.32}
\usepackage{url}
\usepackage{graphicx}
\usepackage{xspace}
\usepackage{subfigure}
\usepackage{amsfonts}
\usepackage{caption}
\usepackage{enumitem}
\usepackage{soul}
\usepackage[normalem]{ulem}
\usepackage{url}
\usepackage{graphicx}
\usepackage{bookmark}
\numberwithin{algorithm}{section}

%

\renewcommand{\paragraph}[1]{\vspace{1.5mm}\noindent \textbf{#1}}


\newcommand{\poly}{\mathrm{poly}}


\newcommand{\fhe}{\mathsf{FHE}}

\newcommand{\fheE}{\fhe.\mathsf{Enc}}
\newcommand{\fheD}{\fhe.\mathsf{Dec}}

\newcommand{\fheK}{\mathsf{sk}}

\newcommand{\fheCT}{\mathsf{ct}}

\newcommand{\evciph}{\hat{\fheCT}}

\newcommand{\qhe}{\mathsf{QHE}}
\newcommand{\qheG}{\qhe.\mathsf{Keygen}}
\newcommand{\qheQE}{\qhe.\mathsf{QEnc}}
\newcommand{\qheE}{\qhe.\mathsf{Enc}}
\newcommand{\qheQD}{\qhe.\mathsf{QDec}}
\newcommand{\qheD}{\qhe.\mathsf{Dec}}
\newcommand{\qheEv}{\qhe.\mathsf{Eval}}
\newcommand{\qhepk}{\mathsf{pk}}
\newcommand{\qhesk}{\mathsf{sk}}


\newcommand{\Obf}{\mathsf{Obf}}

\newcommand{\oCC}{\mathbf{CC}}

\newcommand{\obfC}{\widetilde{\mathbf{CC}}}
\newcommand{\CC}[1]{\mathbf{CC}[#1]}
\newcommand{\ccSim}{\mathsf{Sim}}



\newcommand{\sigmaP}{\Sigma.\mathsf{\zkP}}
\newcommand{\sigmaV}{\Sigma.\mathsf{\zkV}}

\newcommand{\qsigmaP}{\Xi.\mathsf{\zkP}}
\newcommand{\qsigmaV}{\Xi.\mathsf{\zkV}}
\newcommand{\qsigmaS}{\Xi.\mathsf{\zkS}}

\newcommand{\Hyb}{\mathsf{Hyb}}


\newcommand{\sfedk}{\mathsf{dk}}
\newcommand{\sfeGen}{\mathsf{SFE.Gen}}
\newcommand{\sfeEnc}{\mathsf{SFE.Enc}}
\newcommand{\sfeD}{\mathsf{SFE.Dec}}
\newcommand{\sfeEval}{\mathsf{SFE.Eval}}
\newcommand{\ciph}{\mathsf{ct}}

\newcommand{\Cir}{{C}}
\newcommand{\HESim}{\mathsf{Sim}}

\newcommand{\SFEin}{x}
\newcommand{\SFEout}{\hat{\SFEin}}


\newcommand{\cmt}{\mathsf{cmt}}
\newcommand{\Com}{\mathsf{Com}}

\newcommand{\comS}{\mathsf{Sen}}
\newcommand{\commS}{\mathsf{Sen}^*}
\newcommand{\comR}{\mathsf{Rec}}
\newcommand{\commR}{\mathsf{Rec}^*}

\newcommand{\decom}{\mathsf{VDcom}}


\newcommand{\Ext}{\mathsf{Ext}}


\newcommand{\Disting}{\mathsf{D}^*}

\newcommand{\compP}{\mathsf{P}_{\star}}
\newcommand{\compV}{\mathsf{V}_{\star}}

\newcommand{\zkslS}{\zkS_{\mathrm{comb}}}
\newcommand{\zkslSmV}{\zkS_{\mathrm{comb}, x, \zkmV}}


\newcommand{\ket}[1]{|{#1}\rangle}

\newcommand{\TD}{\mathrm{TD}}
\newcommand{\Q}{\mathsf{Q}}
\newcommand{\Wat}{\mathsf{R}}

\newcommand{\A}{\mathsf{A}^*}

\newcommand{\prot}[2]{\ve{#1,#2}}

\newcommand{\protView}{\mathsf{VIEW}}

\newcommand{\zkP}{\mathsf{P}}

\newcommand{\zkS}{\mathsf{Sim}}

\newcommand{\zkFail}{\mathtt{Fail}}

\newcommand{\wiP}{\mathsf{WI.P}}

\newcommand{\zkmD}{\mathsf{D}}

\newcommand{\zkmP}{\zkP^*}

\newcommand{\zkV}{\mathsf{V}}
\newcommand{\zkmV}{\zkV^*}
\newcommand{\wiV}{\mathsf{WI.V}}

\newcommand{\view}{\mathsf{OUT}}



\def \SigP {\Sigma.\mathsf{P}}
\def \SigmP {\Sigma.\mathsf{P}^*}
\def \SigV {\Sigma.\mathsf{V}}
\def \SigS {\Sigma.\mathsf{S}}
\def \SigA {\alpha}
\def \SigB {\beta}
\def \SigC {\gamma}

\newcommand{\lang}{\mathcal{L}}

\newcommand{\rel}{\mathcal{R}}
\newcommand{\ins}{x}

\newcommand{\wit}{w}


\newcommand{\textabbrevstyle}[1]{\mbox{#1}}
\newcommand{\textabbrevstylebol}[1]{\mbox{\textbf{#1}}}
\newcommand{\newtextabbrev}[1]{\expandafter\newcommand\csname #1\endcsname{\textabbrevstyle{#1}\xspace}}
\newcommand{\newtextabbrevbol}[1]{\expandafter\newcommand\csname #1\endcsname{\textabbrevstylebol{#1}\xspace}}
\newcommand{\renewtextabbrevbol}[1]{\expandafter\renewcommand\csname
#1\endcsname{\textabbrevstylebol{#1}\xspace}}

\newtextabbrevbol{EXP}
\newtextabbrevbol{Dtime}
\newtextabbrev{PRG}
\newtextabbrev{PRF}
\newtextabbrev{OWF}
\newtextabbrev{PPT}
\newtextabbrev{QPT}
\newtextabbrev{EOWF}
\newtextabbrev{GEOWF}
\newtextabbrev{ZAP}
\newtextabbrevbol{Ntime}
\newtextabbrevbol{coAM}
\newtextabbrevbol{NP}
\newtextabbrevbol{MIP}
\newtextabbrevbol{NEXP}
\newtextabbrevbol{SZK}
\newtextabbrevbol{BPP}
\newtextabbrevbol{coNP}
\newtextabbrevbol{coMA}
\newtextabbrevbol{TFNP}
\newtextabbrev{NIWI}
\newtextabbrevbol{AM}
\newtextabbrev{SNARG}
\newtextabbrev{SNARK}
\newtextabbrev{coRP}
\newtextabbrev{NIUA}
\newtextabbrev{UA}
\newtextabbrev{ZK}
\newtextabbrev{WZK}
\newtextabbrev{WH}
\newtextabbrev{AI}
\newtextabbrev{ECRH}
\newtextabbrev{PIR}
\newtextabbrev{WI}
\newtextabbrev{WIPOK}
\newtextabbrev{CDS}
\newtextabbrev{POK}
\newtextabbrev{FE}
\newtextabbrev{IO}
\newtextabbrev{XIO}
\newtextabbrev{SXIO}
\newtextabbrev{SKFE}
\newtextabbrev{PKFE}
\newtextabbrev{PPRF}
\newtextabbrev{LWE}
\newtextabbrev{QLWE}
\newtextabbrev{PKE}

\newcommand{\QMA}{\textabbrevstylebol{QMA}}

\newtheorem{definition}{Definition}[section]

\newtheorem{lemma}{Lemma}[section]
\newtheorem{corollary}{Corollary}[section]
\newtheorem{theorem}{Theorem}[section]
\newtheorem{claim}{Claim}[section]

\newtheorem{proposition}{Proposition}[section]

\theoremstyle{remark}

\newtheorem{remark}{Remark}[section]


\newcommand{\figref}[1]{Figure~\protect\ref{#1}}

\newcommand{\proref}[1]{Protocol~\protect\ref{#1}}

\newenvironment{boxfig}[2]{\begin{figure}[#1]\fbox{\begin{minipage}{\linewidth}
                        \vspace{0.2em}
                        \makebox[0.025\linewidth]{}
                        \begin{minipage}{0.95\linewidth}
            {{
                        #2 }}
                        \end{minipage}
                        \vspace{0.2em}
                        \end{minipage}}}{\end{figure}}

\newcommand{\pprotocol}[4]{
\begin{boxfig}{h!}{
\begin{center}
\textbf{#1}
\end{center}
    #4
\vspace{0.2em} } \caption{\label{#3} #2}
\end{boxfig}
}

\newcommand{\protocol}[4]{
\pprotocol{#1}{#2}{#3}{#4} }

\newcommand{\negl}{\mathrm{negl}}

\newcommand{\ve}[1]{\langle #1 \rangle}

\newcommand{\set}[1]{\left\{#1\right\}}
\newcommand{\abs}[1]{\left|#1\right|}

\newcommand{\pST}{\; \middle\vert \;}

\newcommand{\zo}{\{0,1\}}

\newcommand{\Nat}{\mathbb{N}}
\newcommand{\secp}{\lambda}

\newcommand{\cupdot}{\mathbin{\mathaccent\cdot\cup}}

\title{Post-quantum Zero Knowledge in Constant Rounds\footnote{This work was supported in part by ISF grants 18/484 and 19/2137, by Len Blavatnik and the Blavatnik Family Foundation, and by the European Union Horizon 2020 Research and Innovation Program via ERC Project REACT (Grant 756482).}}

\author{Nir Bitansky\thanks{Tel Aviv University, \texttt{nirbitan@tau.ac.il}. Member of the Check Point Institute of Information Security. Supported also by the Alon Young Faculty Fellowship.} \and Omri Shmueli\thanks{Tel Aviv University, \texttt{omrishmueli@mail.tau.ac.il}. Supported also by the Zevulun Hammer Scholarship from the Council for Higher Education in Israel.}}
\date{\vspace{-5ex}}
\date{}

\begin{document}
\maketitle

\begin{abstract}
	We construct a constant-round zero-knowledge classical argument for \NP secure against quantum attacks. We assume the existence of Quantum Fully-Homomorphic Encryption and other standard primitives, known based on the Learning with Errors Assumption for quantum algorithms. As a corollary, we also obtain a constant-round zero-knowledge quantum argument for \QMA.

At the heart of our protocol is a new {\em no-cloning} non-black-box simulation technique.

\omri{Why did we take out the "first" in the NP and QMA protocols?}\nir{For submissions (Qcrypt/QIP) we can put "the first" for eprint/arxiv/CR no need --- it has no role but to quickly sell.}
\end{abstract}

\thispagestyle{empty}
\newpage
\tableofcontents
\thispagestyle{empty}
\newpage
\pagenumbering{arabic}

\section{Introduction}
Zero-knowledge protocols allow to prove statements without revealing anything but the mere fact that they are true. Since their introduction by Goldwasser, Micali, and Rackoff \cite{GoldwasserMR89} they have had a profound impact on modern cryptography and theoretical computer science at large. Following more than three decades of exploration, zero-knowledge protocols are now quite well understood in terms of their expressiveness and round complexity. In particular, under standard computational assumptions, arbitrary $\NP$ statements can be proved in only a constant number of rounds \cite{goldreich1986prove,goldreich1996construct}.

In this work, we consider classical zero-knowledge protocols with {\em post-quantum security}, namely, protocols that can be executed by classical parties, but where both soundness and zero knowledge are guaranteed even against efficient quantum adversaries. Here our understanding is far more restricted than in the classical setting. Indeed, not only are we faced with stronger adversaries, but also have to deal with the fact that quantum information behaves in a fundamentally different way than classical information, which summons new challenges in the design of zero-knowledge protocols.

In his seminal work \cite{watrous2009zero}, Watrous developed a new quantum simulation technique and used it to show that classical zero-knowledge protocols for $\NP$, such as the Goldreich-Micali-Wigderson $3$-coloring protocol \cite{goldreich1986prove}, are also zero knowledge against quantum verifiers, assuming commitments with post-quantum hiding. These protocols are, in fact, {\em proof systems} meaning that soundness holds against unbounded adversarial provers, let alone efficient quantum ones. As in the classical setting, to guarantee a negligible soundness error (the gold standard in cryptography) these protocols require a polynomial number of rounds.

Watrous' technique does not apply for classical constant-round protocols. In fact, constant-round zero-knowledge protocols with post-quantum security remains an open question, {\em even when the honest parties and communication are allowed to be quantum.} The gap between classical and quantum zero knowledge stems from fundamental aspects of quantum information such as the no-cloning theorem \cite{Wootters:1982zz} and quantum state disturbance \cite{FuchsPeres96}. These pose a substantial barrier for classical zero-knowledge simulation techniques, a barrier that has so far been circumvented only in specific settings (such as, \cite{watrous2009zero}). Overcoming these barriers in the context of constant-round zero-knowledge seems to require a new set of techniques.

\subsection{Results}\label{subsec:results}

Under standard computational assumptions, we resolve the above open question --- we construct a classical, post-quantumly secure, computational-zero-knowledge argument for \NP in a constant number of rounds (with a negligible soundness error). That is, the honest verifier and prover (given a witness) are efficient classical algorithms. In terms of security, both zero-knowledge and soundness hold against polynomial-size quantum circuits with non-uniform quantum advice.

Our construction is based on fully-homomorphic encryption supporting the evaluation of quantum circuits (QFHE) as well as additional standard classical cryptographic primitives. All are required to be secure against efficient quantum algorithms with non-uniform quantum advice. QFHE was recently constructed \cite{mahadev2018classical, brakerski2018quantum} based on the assumption that the Learning with Errors Problem \cite{Regev09} is hard for the above class of algorithms (from hereon, called \QLWE) and a circular security assumption (analogous to the assumptions required for multi-key FHE in the classical setting). All other required primitives can be based on the \QLWE assumption.

\begin{theorem}[informal]\label{intro_thm}
	Assuming \QLWE and QFHE, there exist a classical, post-quantumly secure, computational-zero-knowledge argument in a constant number of rounds for any $\lang \in \NP$.
\end{theorem}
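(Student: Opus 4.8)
The plan is to follow the Feige--Lapidot--Shamir ``OR-trick'' together with Barak's non-black-box trapdoor, but --- this is the new ingredient --- to run the malicious verifier's code \emph{under quantum fully homomorphic encryption} during simulation, so that its quantum advice is used exactly once and never cloned. Concretely, I would build a constant-round protocol of the following shape, instantiating each component with a constant-round post-quantum primitive (a hiding-and-binding commitment, a collision-resistant hash family, and a witness-indistinguishable argument of knowledge --- all implied by \QLWE): (i) the verifier sends a hash key and public randomness; (ii) the prover commits to a short string --- honestly, to the hash of $0$; (iii) the verifier sends a long uniform string $r$; (iv) the parties run a post-quantum \WI argument of knowledge for ``$x\in\lang$ or the value committed in step (ii) is the hash of a \emph{predictor} program $\Pi$ that, on input the transcript of steps (i)--(ii), outputs the step-(iii) message $r$'', with the honest prover using the \NP-witness $w$ for $x\in\lang$. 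Constant round complexity is inherited from the building blocks.

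Soundness is Barak's argument transported to the post-quantum world. By soundness of the \WI argument, a convincing quantum polynomial-size prover proves one of the disjuncts; I would rule out the predictor disjunct. At the time the prover commits, $r$ is not yet determined and is uniform of super-logarithmic length; by binding of the commitment and collision resistance of the hash, and using the argument-of-knowledge property to extract the prover's unique predictor $\Pi^{*}$, the value $\Pi^{*}(\tau)$ on the already-fixed prefix $\tau$ of steps (i)--(ii) is determined before $r$ is sampled, so the predictor branch holds with only negligible probability; chaining reductions to binding, collision resistance, extraction, and \WI soundness gives a negligible soundness error. It is fine that this is an argument rather than a proof, since the \QLWE-based \WI need only be computationally sound.

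Zero knowledge is the crux. Let $\zkmV$ be a malicious quantum verifier given by a classical circuit and a single copy of quantum advice $\rho$. The naive Barak simulator would commit to ``$\zkmV$'s next-message function'' and later run it to produce $r$; but running $\zkmV$ consumes $\rho$, which is also needed to carry out the very interaction being simulated, and $\rho$ cannot be copied. I would resolve this with QFHE: the simulator generates a QFHE key pair, encrypts $(\text{code of }\zkmV,\rho)$ \emph{once}, consuming its copy of $\rho$, to obtain a ciphertext $c$; it forms the \emph{classical} predictor $\Pi$ that has $c$ and the QFHE secret key hard-wired and, on input the transcript of steps (i)--(ii), homomorphically evaluates $\zkmV$'s next-message function on $c$ and decrypts; it commits to the hash of $\Pi$; and it then carries out the \emph{entire} interaction with $\zkmV$ homomorphically, by evolving $c$ --- feeding homomorphically-encrypted prover messages in and decrypting the homomorphically-produced verifier messages to continue --- so that $\rho$ is never cloned. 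In step (iv) it runs the \WI prover with $\Pi$ as the trapdoor witness, which is valid because $\Pi$, run on the transcript of steps (i)--(ii), outputs exactly the $r$ that $\zkmV$ produced in this execution (the same homomorphic evaluation and decryption). Finally the simulator decrypts the homomorphic execution to recover the cleartext transcript and $\zkmV$'s residual state, and outputs them. Indistinguishability from the real interaction $\prot{\zkP(x,w)}{\zkmV(\rho)}$ follows by hybrids: (1) replace the direct execution of $\zkmV$ by the homomorphic-then-decrypt execution --- statistically close by QFHE correctness, argued inductively over the interaction since the prover's strategy is unchanged; (2) switch the step-(ii) commitment from the hash of $0$ to the hash of $\Pi$ --- indistinguishable by post-quantum computational hiding; (3) switch the \WI witness from $w$ to $\Pi$ --- legitimate after (2), and indistinguishable by witness indistinguishability against $\zkmV$. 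Each hybrid reduces to \QLWE or QFHE security, yielding computational zero knowledge; the \QMA corollary then follows from the \NP case by known techniques, but I would establish \NP first.

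The step I expect to be the main obstacle is making this no-cloning simulation actually go through. Three points need care. First, the predictor used as the witness must \emph{certifiably} reproduce the $r$ that appeared in the transcript, and this is delicate because a quantum verifier's message is inherently probabilistic: the predictor cannot simply re-run $\zkmV$ and hope to hit the same $r$, so the homomorphic execution must be arranged so that a single coherent computation produces both the transcript and a matching witness (likely deferring $\zkmV$'s measurements and running the \WI prover itself homomorphically). Second, the homomorphic execution must remain well-defined throughout the \emph{interactive} \WI subprotocol with $\zkmV$, which forces a careful, ``one-step-behind'' message schedule so that the simulator never needs an extra copy of an intermediate quantum state, and also forces the trapdoor statement to be phrased so that any QFHE secret key inside the predictor is pinned down by the step-(ii) commitment before $r$ is revealed (otherwise a cheating prover could ``explain'' a decryption after seeing $r$). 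Third, one must verify that the commitment, the hash, and especially the \WI argument of knowledge --- including its extraction --- retain their guarantees against non-uniform quantum adversaries; this is exactly where hardness of \QLWE against quantum advice is used.
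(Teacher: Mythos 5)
You follow Barak's non-black-box blueprint (prover commits to a predictor, FLS OR-trick, WI argument of knowledge), whereas the paper takes a genuinely different, Goldreich--Kahan-style route: the \emph{verifier} commits to the sigma-protocol challenge up front via a new \emph{quantumly-extractable commitment} built from QFHE, compute-and-compare obfuscation, and function-private SFE. The simulator extracts the challenge by running the verifier's code homomorphically, feeding the resulting encrypted target into the verifier's own obfuscated compute-and-compare program to release the QFHE secret key and the plaintext $\beta$, and then decrypting the encrypted verifier state. No predictor is ever committed to, and no statement about a quantum computation is ever proved inside the protocol.

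Your proposal has a concrete gap at the heart of the simulation. The predictor $\Pi$ is committed in step (ii), \emph{before} the verifier emits $r$ in step (iii). For a quantum verifier, $r$ is a measurement outcome, hence not a deterministic function of the verifier's state and the transcript. Consequently, when $\Pi(\tau)$ re-runs a homomorphic evaluation of $\zkmV$ on the encrypted state it produces a fresh, independent sample of that measurement and will disagree with the $r$ that actually appeared in the transcript with overwhelming probability, leaving the simulator without a valid witness for the trapdoor branch. Hard-wiring into $\Pi$ the post-measurement ciphertext (or $r$ itself) would make $\Pi(\tau)$ deterministic, but that data only comes into existence \emph{after} the commitment is sent, while binding of the commitment fixes $\Pi$ first --- a circular dependency that deferring measurements or running the WI prover homomorphically does not break, because a single coherent computation still has to yield $\Pi$ before it yields $r$. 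The paper's construction sidesteps this entirely: the quantity being extracted is the verifier's challenge $\beta$, which is classically and information-theoretically pinned down by the verifier's own first (classical) message, so extraction never needs to reproduce a measurement outcome.

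A second gap is that, as you define it, $\Pi$ invokes the quantum algorithm $\qheEv$, so ``$\Pi(\tau)=r$'' is a statement about a quantum computation, not an NP statement; moreover, since $|\Pi|$ scales with $|\zkmV|$, proving it inside a protocol fixed independently of the adversary would require a \emph{succinct}, constant-round, post-quantum argument for verifying quantum computations. The paper explicitly observes that such protocols are not known, which is one of the principal reasons it abandons the Barak template. In the paper, the only statements proved in the protocol (by either party) are classical NP statements about commitment consistency and transcript explainability; the entire non-black-box work is done privately by the simulator and is never the subject of any argument.
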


Combining our zero-knowledge protocol with previous work by Broadbent et al. \cite{broadbent2016zero, broadbent2019zero}, yields constant-round zero-knowledge arguments for \QMA with quantum honest parties.

\begin{corollary}[informal]
	Assuming \QLWE and QFHE, there exist a quantum, post-quantumly secure, computational-zero-knowledge argument in a constant number of rounds for any $\lang \in \QMA$.
\end{corollary}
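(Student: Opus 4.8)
The plan is to combine \thmref{intro_thm} with the generic compiler of Broadbent, Ji, Song, and Watrous \cite{broadbent2016zero}, which turns a post-quantum zero-knowledge classical argument for \NP into a post-quantum zero-knowledge (quantum) argument for \QMA, and to check that when the compiler is instantiated with constant-round sub-protocols the resulting protocol is itself constant round. The compiler uses three ingredients: (i) a post-quantum computational zero-knowledge argument for \NP; (ii) a post-quantum commitment scheme with the appropriate binding and hiding guarantees; and (iii) a coin-flipping sub-protocol that no cheating prover can bias (needed for soundness) and whose outcome the zero-knowledge simulator can force to a value of its choice, while remaining indistinguishable from an honest execution (needed for zero knowledge). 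Ingredient (i) is exactly \thmref{intro_thm}; ingredient (ii) follows from \QLWE; and for ingredient (iii) I would plug in the constant-round fully-simulatable coin-flipping protocol constructed in this paper. Each of the three components runs in a constant number of rounds, and the compiler composes them a constant number of times, so the overall protocol is constant round.

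Recalling the structure of the \cite{broadbent2016zero} protocol makes clear why these ingredients suffice. First one reduces the input to an instance of a local Clifford Hamiltonian problem, which is complete for \QMA. The prover, holding a witness state, encodes it under a Clifford authentication code, commits classically to the Clifford keys, and sends the encoded state. Prover and verifier then run the coin-flipping sub-protocol to draw a uniformly random Hamiltonian term $r$. The prover measures the encoded state as dictated by $r$, sends the outcome together with the part of the authentication keys needed to decode it, and proves in zero knowledge, using the \NP argument, that the revealed data is consistent with the commitments and that the decoded outcome satisfies the $r$-th term. Soundness against quantum provers follows from hardness of the Hamiltonian problem, binding of the commitment, unbiasability of the coin flip (a cheating prover cannot steer $r$ away from a term it violates), and soundness of the \NP argument. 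For zero knowledge, the simulator forces the coin flip to a value $r$ it picks itself, prepares a dummy encoded state tailored to pass the $r$-th check (possible because a single Clifford-authenticated measurement reveals nothing beyond a random-looking outcome), and runs the simulator of the \NP argument on the consistency statement; hiding of the commitment makes the committed keys indistinguishable from the honest ones.

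Assembling the pieces, the full simulator for the \QMA argument runs, against the adversarial verifier, the coin-flipping simulator of this paper to force the coin outcome, and then runs the \NP-argument simulator from \thmref{intro_thm} on the consistency statement, threading the verifier's residual quantum state between the two sub-simulations. Since both sub-simulators run in strict polynomial time and output states negligibly close to the real executions, their sequential composition yields a strict-polynomial-time simulator for the \QMA argument with negligible error, and the round count stays constant; soundness is argued term by term as above.

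The hard part is ingredient (iii): constructing a coin-flipping protocol that is fully simulatable against a quantum adversary in a constant number of rounds. The classical approach would rely on rewinding the party that commits first, but rewinding a quantum adversary runs into the no-cloning theorem and quantum state disturbance, exactly the barriers flagged in the introduction; surmounting them is what the new no-cloning non-black-box simulation technique is for, and it is the technical core of the construction. A secondary, more routine point is checking that chaining the coin-flipping simulator with the \NP-argument simulator neither blows up the running time nor accumulates more than negligible error; this holds because all simulators in question are strict polynomial time with negligible simulation error, so a constant number of compositions is harmless.
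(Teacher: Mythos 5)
Your proposal follows essentially the same route as the paper: plug the constant-round post-quantum \NP argument of \thmref{intro_thm} together with a constant-round fully-simulatable coin-flipping protocol into the Broadbent--Ji--Song--Watrous compiler, and observe that the round count stays constant. Two corrections are worth making. First, the simulators in question run in \emph{expected} polynomial time, not strict polynomial time, so the composition argument must be phrased accordingly; the paper handles this with Markov-style cutoff arguments inside the coin-flipping security proof. Second, and more substantively, you label the composition step ``routine,'' whereas the paper explicitly identifies it as the non-modular part of the argument. The zero-knowledge and coin-flipping simulators are only \emph{quantum-semi-composable} (\defref{def:qzk}): the simulator is given oracle access to fresh samples of the verifier's instance-advice distribution, rather than a single fixed quantum advice state (which might not be efficiently generatable and, once disturbed, cannot be recreated). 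Because of this, the BJSW proof cannot be invoked fully as a black box; one must check that whenever the QMA simulator hands off to the \NP-argument or coin-flipping simulator, the residual quantum state of the verifier at that point can be resampled, which is exactly what the QSC formulation buys. The paper introduces this definition precisely so that the composition goes through, and illustrates its use inside the coin-flipping proof.
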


\paragraph{Main Technical Contribution: Non-Black-Box Quantum Extraction.} Our main technical contribution is a new technique for extracting information from quantum circuits in a constant number of rounds. The technique circumvents the quantum information barriers previously mentioned. A key feature that enables this is using the adversary's circuit representation in a non-black-box manner.

The technique, in particular, yields a constant round extractable commitment. In such a commitment protocol, the verifier can commit to a classical (polynomially long) string. This commitment is perfectly binding, and hiding against efficient quantum receivers. Furthermore, it guarantees the existence of a simulator, which given non-black-box access to the sender's code, can simulate its view while extracting the committed plaintext. Further details are given in the technical overview below.

\subsection{Technical Overview}

We next discuss the main challenges in the design of post-quantum zero knowledge in constant rounds, and our main technical ideas toward overcoming these challenges.

\subsubsection{Classical Protocols and the Quantum Barrier} \label{sec:quantum_barrier}

To understand the challenges behind post-quantum zero knowledge, let us first recall how classical constant-round protocols work, and identify why they fail in the quantum setting. Classical constant-round protocols typically involve three main steps: (1) a prover commitment $\alpha$ to a set of bits, (2) a verifier challenge $\beta$, and (3) a prover response $\gamma$, in which it opens the commitments corresponding to the challenge $\beta$. For instance, in the 3-coloring protocol of \cite{goldreich1986prove}, the prover commits to the (randomly permuted) vertex colors, the verifier picks some challenge edge, and the prover opens the commitments corresponding to the vertices of that edge. To guarantee a negligible soundness error, this is repeated in parallel a polynomial number of times.

As describe so far, the protocol satisfies a rather weak zero-knowledge guarantee --- a simulator can efficiently simulate the verifier's view in the protocol {\em if it knows the verifier's challenge $\beta$ ahead of time.} To obtain an actual zero-knowledge protocol, we need to exhibit a simulator for any {\em malicious} verifier, including ones who may arbitrarily choose their challenge depending on the prover's message $\alpha$. For this purpose, an initial step (0) is added where the verifier commits ahead of time to its challenge, later opening it in step (2) \cite{goldreich1996construct}.

The added step allows the simulator to obtain the verifier's challenges ahead of time by means of {\em rewinding}. Specifically, having obtained the verifier commitment, the simulator takes a snapshot of the verifier's state and then runs it twice: first it generates a bogus prover commitment, and obtains the verifier challenge, then with the challenges at hand, it returns to the snapshot (effectively rewinding the verifier) and runs the verifier again to generate the simulated execution. The binding of the verifier's commitment guarantees that it will never use a different challenge, and thus simulation succeeds. 

\paragraph{Barriers to Post-Quantum Security.} By appropriately instantiating the verifier commitment, the above protocol can be shown  to be sound against unbounded provers, and in particular efficient quantum provers. One could expect that by instantiating the prover's commitments so to guarantee hiding against quantum adversaries, we would get post-quantum zero knowledge. However, we do not know how to prove that such a protocol is zero knowledge against quantum verifiers. Indeed, the simulation strategy described above fails due to two basic concepts of quantum information theory:

\begin{itemize}
	\item {\bf No Cloning:} General quantum states cannot be copied. In particular, the simulator cannot take a snapshot of the verifier's state.
	
	\item {\bf Quantum State Disturbance:} General quantum circuits, which in particular perform measurements, are not reversible. Once the simulator evaluates the verifier's quantum circuit to obtain its challenge, the verifier's original state (prior to this bogus execution) has already been disturbed and cannot be recovered.
\end{itemize}

Watrous \cite{watrous2009zero} showed that in certain settings the rewinding barrier can be circumvented. He presents a \emph{quantum rewinding lemma} that roughly, shows how {\em non-rewinding} simulators that succeed in simulating only with some noticeable probability can be amplified into full-fledged simulators. The quantum rewinding lemma allows proving that classical protocols, like the GMW protocol are post-quantum zero knowledge (assuming commitments with hiding against quantum adversaries). The technique is insufficient, however, to prove post-quantum zero knowledge of existing constant-round protocols {\em with a negligible soundness error}, such as the GK protocol described above. For such protocols, non-rewinding simulators with a noticeable success probability are not known.

\paragraph{Can Non-Black-Box Techniques Cross the Quantum Barriers?} Rewinding is, in fact, often an issue {\em also in the classical setting.} Starting with the work of Goldreich and Krawczyk \cite{GoldreichK96}, it was shown that constant-round zero-knowledge protocols with certain features, such as a public-coin verifier, cannot be obtained using simulators that only use the verifier's next message function as a black box. That is, simulators that are based solely on rewinding. Surprisingly, Barak \cite{Barak01} showed that these barriers can be circumvented using {\em non-black-box techniques}. He constructed a constant-round public-coin zero-knowledge protocol where the simulator takes advantage of the explicit circuit representation of the verifier. Following Barak's work, different non-black-box techniques have been introduced to solve various problems in cryptography (c.f., \cite{DengGS09,ChungLP13,Goyal13, BitanskyP15b,ChungPS16}).

A natural question is whether we can leverage classical protocols with non-black-box simulators, such as Barak's, in order to circumvent the discussed barriers in the quantum setting. Trying to answer this question reveals several challenges. One inherent challenge is that classical non-black-box techniques naturally involve cryptographic tools that support classical computations. Obtaining zero knowledge against quantum verifiers would require analogous tools for quantum computations. As an example, Barak relies on the existence of constant-round succinct proof systems for the correctness of classical computations; to obtain post-quantum zero knowledge, such a protocol would need to support also quantum computations, while (honest) verification should remain classical. Existing protocols for classical verification of quantum computations \cite{Mahadev18} are neither constant round nor succinct.

Another family of non-black-box techniques \cite{BitanskyP15b,bitansky2019weak}, different from that of Barak, is based on fully-homomorphic encryption. Here (as mentioned above) constructions for homomorphic evaluation of quantum computations exist \cite{mahadev2018classical,brakerski2018quantum}. The problem is that the mentioned non-black-box techniques {\em do perform state cloning}. Roughly speaking, starting from the same state, they evaluate the verifier's computation (at least) twice: once homomorphically, under the encryption, and once in the clear.\footnote{In fact, Barak's technique also seems to require state cloning. Roughly speaking, the same verifier state is used once for simulating the main verifier execution and once when computing the proof for the verifier's computation.} An additional hurdle is proving soundness against quantum provers. Known non-black-box techniques are sound against efficient classical provers, and often use tools that are not known in the quantum setting, such as constant-round knowledge extraction (which is further discussed below).

Our main technical contribution is devising a non-black-box technique that copes with the above challenges. We next explain the main ideas behind the technique.

\subsubsection{Our Technique: A No-Cloning Extraction Procedure}

Toward describing the technique, we restrict attention to a more specific problem. Specifically, constructing a constant-round post-quantum zero-knowledge protocol can be reduced to the problem of constructing constant-round {\em quantumly-extractable commitments}. We recall what such commitments are and why they are sufficient, and then move to discuss the commitments we construct.

A quantumly-extractable commitment is a classical protocol between a sender $\comS$ and a receiver $\comR$. The protocol satisfies the standard (statistical) binding and post-quantum hiding, along with a plaintext extraction guarantee. Extraction requires that there exists an efficient quantum simulator $\Ext$ that given any malicious sender $\commS$, represented by a polynomial-size quantum circuit, can simulate the view of $\commS$ in the commitment protocol while extracting the committed plaintext message. Specifically, $\Ext(\commS)$ outputs a classical transcript $\widetilde T$, a quantum state $\ket{\widetilde \psi}$, and an extracted plaintext $\widetilde m$ that are computationally indistinguishable from a real transcript, state, and plaintext $(T,\ket{\psi},m)$, where $T$ and $\ket{\psi}$ are the transcript and sender state generated at the end of a real interaction between the receiver $\comR$ and sender $\commS$, and $m$ is the plaintext fixed by the commitment transcript $T$.

Such commitments allow enhancing the classical four-step protocol described before to satisfy post-quantum zero-knowledge. We simply instantiate the verifier's commitment to the challenge $\beta$ in step (0) with a quantumly-extractable commitment. To simulate a malicious quantum verifier $\zkmV$, the zero-knowledge simulator can then invoke the commitment simulator $\Ext(\zkmV)$, with $\zkmV$ acting as the sender, to obtain a simulated commitment as well as the corresponding challenge $\beta$. Now the simulator knows the challenge ahead of time, before producing the prover message $\alpha$ in step (1), and using the (simulated) verifier state $\ket{\widetilde \psi}$, can complete the simulation, {\em without any state cloning.} (Proving soundness is actually tricky on its own due to malleability concerns. We remain focused on zero knowledge for now).

The challenge is of course to obtain constant-round commitments with {\em no-cloning extraction}. Indeed, classically-extractable commitments have been long known in constant rounds under minimal assumptions, based on rewinding (and thus state cloning) \cite{PrabhakaranRS02}. We next describe our non-black-box technique and how it enables quantum extraction without state cloning.

\paragraph{The Non-Black-Box Quantum Extraction Technique: A Simple Case.} To describe the technique, we first focus on a restricted class of adversarial senders that are {\em non-aborting and explainable}. The notion of non-aborting explainable senders considers senders $\commS$ whose messages can always be {\em explained} as a behavior of the honest (classical) sender with respect to {\em some} plaintext and randomness (finding this explanation may be inefficient); in particular, they never abort. The notion further restricts that of {\em (aborting) explainable adversaries} from \cite{bitansky2019weak}, which also allows aborts. To even further simplify our exposition, we first address classical (rather than quantum) senders, but crucially, while avoiding any form of state cloning. Later on, we shall address general quantum adversaries.

Our protocol is inspired by \cite{BitanskyP15b, bitansky2019weak} and relies on two basic tools. The first is fully-homomorphic encryption (FHE) --- an encryption scheme that allows to homomorphically apply any polynomial-size circuit $C$ to an encryption of $x$ to obtain a new encryption of $C(x)$, proportional in size to the result $|C(x)|$ (the size requirement is known as {\em compactness}). The second is {\em compute-and-compare program obfuscation} (CCO).  A compute-and-compare program $\CC{f,s,z}$ is given by a function $f$  (represented as a circuit), a target string $s$ in its range, and a message $z$; it outputs $z$ on every input $x$ such that $f(x)=s$, and rejects all other inputs. A corresponding obfuscator compiles any such program into a program $\obfC$ with the same functionality. In terms of security, provided that the target $s$ has high entropy conditioned on $f$ and $z$, the obfuscated program is computationally indistinguishable from a simulated dummy program, independent of $(f,s,z)$. Such post-quantumly-secure obfuscators are known under QLWE \cite{goyal2017lockable,wichs2017obfuscating,goyal2019perfect}.

\medskip\noindent
To commit to a message $m$, the protocol consists of three steps:
\begin{enumerate}
	\item
	The sender $\comS$ samples:
	\begin{itemize}
		\item two random strings $s$ and $t$, \omri{Maybe we should change it to $t$ and $s$?}\nir{yeah, make it consistent w/ the body (note that $u,v$ also appear in the paragraph above)}
		\item a secret key $\fheK$ for an FHE scheme,
		\item an FHE encryption $\fheCT_t = \fheE_{\fheK}(t)$ of $t$,
		\item an obfuscation $\obfC$ of $\CC{f,s,z}$, where $z = (m,\fheK)$ and $f = \fheD_{\fheK}$ is the FHE decryption circuit.
	\end{itemize}     It then sends $(\fheCT_t,\obfC)$ to the receiver $R$.
	
	\item
	The receiver $\comR$ sends a guess $t'$.
	
	\item
	$\comS$ rewards a successful guess: if $t = t'$, it sends back $s$ (and otherwise $\bot$).
\end{enumerate}
The described commitment protocol comes close to our objective. First, it is binding --- the obfuscation $\obfC$ uniquely determines $z=(m,\fheK)$. Second, it is hiding --- a receiver (even if malicious) gains no information about the message $m$. To see this, we argue that no receiver sends $t'=t$ at the second message, but with negligible probability. Indeed, given only the first sender message $(\fheCT_t,\obfC)$, the receiver obtains no information about $s$. Hence, we can invoke the CCO security and replace the obfuscation $\obfC$ with a simulated one, which is independent of the secret FHE key $\fheK$. This, in turn, allows us to invoke the security of encryption to argue that the first message $(\fheCT_t,\obfC)$ hides $t$. It follows that the third sender message is $\bot$ (rather than the target $s$) with overwhelming probability, which again by CCO security implies that the entire view of the receiver can be simulated independently of $m$.

Lastly, a non-black-box simulator, given the circuit representation of an explainable sender $\commS$, can simulate the sender's view, while extracting $m$. It first runs the sender to obtain the first message $(\fheCT_t,\obfC)$. At this point, it can use the sender's circuit $\commS$ to continue the emulation of $\commS$ {\em homomorphically under the encryption} $\fheCT_t$. The key point is that, under the encryption, we do have $t$. We can (homomorphically) feed $t$ to the sender, and obtain an encryption $\fheCT_s$ of $s$. Now, the simulator feeds $\fheCT_s$ to the obfuscation $\obfC$, and gets back $z = (m,\fheK)$. (Note that here the compactness of FHE is crucial --- the sender $\commS$ could be of arbitrary polynomial size, whereas $\obfC$ and thus also $\fheCT_s$ are of fixed size.)

Having extracted $m$, it remains to simulate the inner (for now, classical) state $\psi$ of the sender $S^*$ and the full interaction transcript $T$. These are actually available, but in encrypted form, as a result of the previous homomorphic computation. Here we use the fact that the extracted $z$ also includes the decryption key $\fheK$, allowing us to obtain the state $\psi$ and transcript $T$ {\em in the clear}.

An essential difference between the above extraction procedure and previous non-black-box extraction techniques (e.g., \cite{BitanskyP15b,bitansky2019weak}) is that {\em it does not perform any state cloning}. As explained earlier, previous procedures would perform the same computation twice, once under the encryption, and once in the clear. Here we perform the computation once, partially in the clear, and partially homomorphically. Crucially, we have a mechanism to peel off the encryption at the end of second part so that we do not have to redo the computation in the clear.

\paragraph{Indistinguishability through Secure Function Evaluation.} The described protocol does not quite achieve our objective. The simulated interaction is, in fact, easy to distinguish from a real one. Indeed, in a simulated interaction the simulator's guess in the second message is $t'=t$, whereas the receiver cannot produce this value. To cope with this problem, we augment the protocol yet again, and perform the second step under a {\em secure function evaluation} (SFE) protocol. This can be thought of as homomorphic encryption with an additional {\em circuit privacy} guarantee, which says that the result of homomorphic evaluation of a circuit, reveals nothing about the evaluated circuit to the decryptor, except of course from the result of evaluation.

\newpage\noindent
The augmented protocol is similar to the previous one, except for the last two steps, now done using SFE:
\begin{enumerate}
	\item
	{\color{gray} The sender $\comS$ samples:
		\begin{itemize}
			\item two random strings $s$ and $t$,
			\item a secret key $\fheK$ for an FHE scheme,
			\item an FHE encryption $\fheCT_t = \fheE_{\fheK}(t)$ of $t$,
			\item an obfuscation $\obfC$ of $\CC{f,s,z}$, where $z = (m,\fheK)$ and $f = \fheD_{\fheK}$ is the FHE decryption circuit.
		\end{itemize}     It then sends $(\fheCT_t,\obfC)$ to the receiver $\comR$.}
	
	\item
	The receiver $\comR$ sends $\fheCT'_{t'}$, a guess $t'$ encrypted using SFE. (The honest receiver sets $t'$ arbitrarily.)
	
	\item
	$\comS$ homomorphically evaluates the function that given input $t$, returns $s$ (and otherwise $\bot$). $\comS$ then returns the resulting ciphertext to $\comR$.
\end{enumerate}
\omri{Thinking about it now, I think that it might be beneficial to slightly clarify what exactly the simulator is. What about the following or something of that sort:
	"Given $\fheCT_t$, the simulator in such protocol first homormophically evaluates the SFE encryption circuit, then the circuit of the sender, and then the SFE decryption circuit.}\nir{Right. added text below. It's  more intuitive and simpler to think about emulating the entire protocol under the encryption --- no need to break it to different circuits --- see also relevant comments in the body (ZK section)}
The homomorphic computation done by the simulator in the new protocol is augmented accordingly --- instead of sending $t$ and obtaining $s$ directly, it now sends an SFE encryption of $t$ and obtains back an SFE encryption of $s$, which it can then decrypt to obtain $s$. Thus, as before, the homomorphic computation results in an FHE encryption of $s$. Indistinguishability of the simulated sender view from the real sender view now follows since the SFE encryption $\fheCT'_{t'}$ hides $t'$. The SFE circuit privacy guarantees that the homomorphic SFE evaluation does not leak any information about the target $s$, as long as the receiver does not send an SFE encryption of $t$.

\paragraph{A Malleability Problem and its Resolution.} While we could argue before that a malicious receiver cannot output $t$ in the clear, arguing that it does not output an SFE encryption of $t$ is more tricky. In particular, the receiver might be able to somehow maul the FHE encryption $\fheCT_t$ to get an SFE encryption $\fheCT'_{t}$ of the value $t$, without actually ``knowing'' the value $t$. Classically, such malleability problems are solved using {\em extraction}. If we could efficiently extract the value encrypted in the SFE encryption $\fheCT'$, then we could rely on the previous argument. However, as explained before, efficient extraction is classically achieved using rewinding and thus state cloning. While so far we have focused on avoiding state cloning for the sake of simulating the sender, we should also avoid state cloning when proving hiding of the commitment as we are dealing with quantum receivers. It seems like we are back to square one.

To circumvent the problem, we rely on the fact that the hiding requirement of the commitment is relatively modest --- commitments to different plaintexts should be indistinguishable. This is in contrast the efficient simulation requirement for the sender (needed for efficient zero knowledge simulation). Here one commonly used solution is {\em complexity leveraging} --- we can design the SFE, FHE, and CCO so that extraction from SFE encryptions can be done in brute force, without any state cloning, and without compromising the security of the FHE and CCO. This comes at the cost of assuming subexponential (rather than just polynomial) hardness of the primitives in use.

A different solution, which is also the one we use in the body of the paper, relies on hardness against efficient quantum adversaries with {\em non-uniform quantum advice} (instead of subexponential hardness). Specifically, the receiver sends a commitment to the SFE encryption key in the beginning of the protocol. The reduction establishing the hiding of the protocol gets as non-uniform advice the initial receiver (quantum) state that maximizes the probability of breaking hiding, along with the corresponding SFE key. This allows for easy extraction from SFE encryptions, without any state cloning.

The full solution contains additional steps meant to establish that the receiver's messages are appropriately structured (e.g., the receiver's commitment defines a valid SFE key, and the SFE encryption later indeed uses that key). This is done using standard techniques based on witness-indistinguishable proofs, which exist in a constant number of rounds \cite{goldreich1986prove} assuming commitments with post-quantum hiding (and in particular, QLWE).

\paragraph{Dealing with Quantum Adversaries.} Above, we have assumed for simplicity that the sender is classical and have shown a simulation strategy that requires no state cloning. We now explain how the protocol is augmented to deal with quantum senders (for now still restricting attention to non-aborting explainable senders). The first natural requirement in order to deal with quantum senders is that the cryptographic tools in use (e.g., SFE encryption) will be postqantum secure. This can be guaranteed assuming QLWE.

As already mentioned earlier in the introduction, post-quantum security alone is not enough --- we need to make sure that our non-black-box extraction technique can also work with quantum, rather than classical, circuits representing the sender $\commS$. For this purpose, we use {\em quantum} fully-homomorphic encryption (QFHE). In a QFHE scheme, the encryption and decryption keys are (classical) strings and the encryption and decryption algorithms are classical provided that the plaintext is classical (and otherwise quantum). Most importantly, QFHE allows to homomorphically evaluate quantum circuits. Such QFHE schemes were recently constructed in \cite{mahadev2018classical, brakerski2018quantum} based on \QLWE and a circular security assumption (analogous to the assumptions required for multi-key FHE in the classical setting).

The augmented protocol simply replaces the FHE scheme with a QFHE scheme (other primitives, such as the SFE and compute-and-compare are completely classical in terms of functionality and only need to be post-quantum secure). In the augmented protocol, the honest sender and receiver still act classically. In contrast, the non-black-box simulator described before is now quantum --- it homomorphically evaluates the quantum sender circuit $\commS$. A technical point is that QFHE should support the evaluation of a quantum circuit with an additional quantum auxiliary input --- in our case the quantum sender $\commS$ and its inner state after it sends the first message. This is achieved by existing QFHE schemes (for instance, by using their public key encryption mode, and encrypting the initial state prior to the computation).

\paragraph{Dealing with Aborts.} So far, we have dealt with explainable senders that are non-aborting. This is indeed a strong restriction and in fact, quantumly-extractable commitments against this class of senders can be achieved using black-box techniques (see more in the related work section). However, considering an adversary who, with noticeable probability, may abort at some stage of the protocol, existing black-box techniques completely fail (even if the adversary is explainable up to the abort). In contrast, as we shall see, our non-black-box technique will enable simulation also for aborting senders.

In our protocol, an aborting sender $\commS$ may refuse to perform the SFE evaluation in the last step of the protocol. In this case, the simulator will get stuck --- the simulated transcript and sender state $\ket{\psi}$ will remain forever locked under the encryption (since the simulator cannot use the obfuscation $\obfC$ to get the decryption key $\fheK$). Accordingly, the described simulator successfully simulates senders that never abort, but fails to simulate senders that abort (noticeably often). We next observe that there is, in fact, a non-rewinding simulation strategy also for the other extreme, namely for senders $\commS$ that (almost) always abort. Here the simulator would simply send {\em in the clear} (rather than under FHE) an SFE encryption $\fheCT'_{t'}$ of an arbitrary string $t'$, just like the honest receiver $\comR$. In this case, the simulated sender view is identical to its view in a real interaction (and since the sender $\commS$ aborts, there is no need to extract the plaintext message).

We show that the two simulators described, $\zkS_{\mathrm{na}}$ for never-aborting senders and $\zkS_{\mathrm{aa}}$ for always-aborting senders, can be combined into a simulator for general senders (which sometimes abort). This is enabled by the fact that simulated receiver messages $\fheCT'_{t'}$ generated by the two simulators are indistinguishable due to the hiding of SFE encryptions. Accordingly, the sender's choice of whether to abort or not is (computationally) independent of whether we are simulating using the first simulator $\zkS_{\mathrm{na}}$ or the second $\zkS_{\mathrm{aa}}$. This gives rise to a combined simulator $\zkS_{\mathrm{comb}}$, which flips a random coin $b\gets \set{\mathrm{na},\mathrm{aa}}$ to predict whether an abort will occur, and then runs $\zkS_{b}$. The combined simulator $\zkS_{\mathrm{comb}}$ succeeds if it guessed correctly, which occurs with probability (negligibly close to) half.

\omri{Changed the following paragraph. Previous version in comment in code.}
\nir{looks good, slightly edited}
\paragraph{Applying Watrous' Quantum Rewinding Lemma.} The above is reminiscent of the simulation strategy in classical 3-message zero-knowledge protocols (with a large soundness error), such as the GMW 3-coloring protocol \cite{GoldreichMW87}. In these protocols, for each possible verifier challenge $\beta$ there exists a non-rewinding simulator $\zkS_\beta$, and the combined simulator $\zkS_{\mathrm{comb}}$ tries to guess the challenge $\beta$ and apply the corresponding simulator.
Similarly to the combined simulator in our protocol, the verifier's choice of challenge $\beta$ is (computationally) independent of $\zkS_{\mathrm{comb}}$'s guess, and thus the simulator $\zkS_{\mathrm{comb}}$ succeeds in simulating with some fixed noticeable probability (specifically $2^{-|\beta|}$).

The advantage of such simulators (non-rewinding and successful with fixed noticeable probability) is that they can be amplified to full-fledged simulators, both classically and quantumly. In the classical setting, a full-fledged simulator $\zkS$ can be obtained by rerunning $\zkS_{\mathrm{comb}}$ until it succeeds. We can, in fact, apply the same rerunning strategy also for quantum verifiers. However, this does not guarantee zero knowledge against verifiers with quantum auxiliary input (since each execution of $\zkS_{\mathrm{comb}}$ may disturb the verifier's auxiliary state). To obtain zero knowledge against verifiers with quantum auxiliary input, we apply Watrous' quantum rewinding lemma \cite{watrous2009zero}, which shows how to faithfully amplify the combined simulator $\zkS_{\mathrm{comb}}$ in the presence of quantum auxiliary input.

\omri{Nir, what do you think about a figure (here) showing the resemblance between GMW and our protocol? btw, do we want to include any comments about the possibility that this simulator is relevant for other themes in ZK? CANCELLED}
\nir{it's a very simple resemblance so not sure what a picture can add, but I don't object. I wouldn't write anything about relevance to other themes, since we have nothing concrete to say here.}

%

\paragraph{From Explainable Adversaries to Malicious Ones.} The only remaining gap is the assumption that senders are explainable; that is, the messages they send (up to the point that they possibly abort), can always be explained as messages that would be sent by the honest (classical) sender for some plaintext and randomness. The simulator $\zkS_{\mathrm{na}}$ (for never-aborting verifiers) crucially relies on this; in particular, the CCO $\obfC$ and the FHE ciphertext $\fheCT_t$ must be formed consistently with each other for the simulator to work. Importantly, it suffices that {\em there exists an explanation} for the messages, and we do not have to efficiently extract it as part of the simulation;\footnote{This is in contrast to other restrictions of the adversary considered in the literature, like semi-honest and semi-malicious adversaries \cite{GoldreichMW87, HaitnerIKLP11,BoyleG0KS13}.} indeed, efficient quantum extraction is exactly the problem we are trying to solve.

The commitment protocol against explainable senders naturally gives rise to a zero-knowledge protocol against explainable verifiers. As is often the case in the design of zero knowledge protocols (see discussion in \cite{bitansky2019weak}), dealing with explainable verifiers is actually the hard part of designing zero-knowledge protocols. Indeed, we use a generic transformation of \cite{bitansky2019weak}, slightly adapted to our setting, which converts zero-knowledge protocols against explainable verifiers to ones against arbitrary malicious verifiers. The transformation is based on constant-round (post-quantumly-secure) witness-indistinguishable proofs, which as mentioned before can be obtained based on QLWE.

\subsection{More Related Work on Post-Quantum Zero Knowledge}
The study of post-quantum zero-knowledge (QZK) protocols was initiated by van de Graaf \cite{van1997towards}, who first observed that traditional zero-knowledge simulation techniques, based on rewinding, fail against quantum verifiers. Subsequent work has further explored different flavors of zero knowledge and their limitations \cite{watrous2002limits}, and also demonstrated that relaxed notions such as zero-knowledge with a trusted common reference string can be achieved \cite{Kobayashi03,damgaard2004zero}. Later on, Peikert and Shiehian constructed non-interactive post-quantum zero knowledge from QLWE in the common random string model \cite{peikert2019noninteractive}. Watrous \cite{watrous2009zero} was the first to show that the barriers of quantum information theory can be crossed, demonstrating a post-quantum zero-knowledge protocol for \NP in a polynomial number of rounds (in the plain model).
\omri{DFS show computational QZK proofs and perfect QZK arguments for NP, but with a CRS. Kobayashi shows qNIZK for NP, but assumes entanglement between parties (not only CRS), and also that they both are quantum.}

\paragraph{Zero Knowledge for \QMA.} Another line of work aims at constructing quantum (rather than classical) protocols for $\QMA$ (rather than \NP). Following a sequence of works \cite{ben2006secure,liu2006consistency,dupuis2010secure, dupuis2012actively,morimae2015quantum}, Broadbent, Ji, Song and Watrous \cite{broadbent2016zero} show a zero-knowledge quantum proof system for all of \QMA~(in a polynomial number of rounds).

\paragraph{Quantum Proofs and Arguments of Knowledge.} Extracting knowledge from quantum adversaries was investigated in a sequence of works \cite{unruh2012quantum,hallgren2011classical,lunemann2011fully,ambainis2014quantum}. A line of works considered different variants of quantum proofs and arguments of knowledge (of the witness), proving both feasibility results and limitations. In particular, Unruh  \cite{unruh2012quantum} shows that assuming post-quantum injective one-way functions, some existing systems are a quantum proof of knowledge. He identifies a certain {\em strict soundness} requirement that suffices for such an implication. Ambainis, Rosmanis and Unruh \cite{ambainis2014quantum} give evidence that this requirement may be necessary.

Based on QLWE, Hallgren, Smith, and Song \cite{hallgren2011classical} and Lunemann and Nielsen \cite{lunemann2011fully} show argument of knowledge where it is also possible to simulate the prover's state (akin to our simulation requirement of the sender's state). Unruh further explores arguments of knowledge in the context of computationally binding quantum commitments \cite{unruh2016computationally, unruh2016collapse}. All of the above require a polynomial number of rounds to achieve a negligible knowledge error.

\paragraph{Zero-Knowledge Multi-Prover Interactive Proofs.} Two recent works by Chiesa et al. \cite{ChiesaFGS18} and by Grilo, Slofstra, and Yuen \cite{GriloSY19} show that \NEXP and $\MIP^*$, respectively, have {\em perfect} zero-knowledge multi-prover interactive proofs (against entangled quantum provers).

\paragraph{Concurrent Work.}
Broadbent and Grilo \cite{broadbent2019zero} construct quantum sigma protocols for QMA, that is, 3-message protocols that are zero-knowledge but have large soundness error. Relying on their protocol and our zero-knowledge protocol and extractable commitment, we obtain a conceptually simple constant-round zero-knowledge protocol for QMA with a negligible soundness error (in a previous version of our work, we constructed such a protocol based on earlier work of \cite{broadbent2016zero}). Coladangelo, Vidick, and Zhang \cite{coladangelo2019non} construct non-interactive zero-knowledge arguments with preprocessing for QMA in the common reference string model. The challenges tackled and corresponding techniques in our work are substantially different than those in both of the above mentioned works.
%
\omri{Nir, note the above references to BG19 and CVZ19.}\nir{edited the above a bit}

Ananth and La Placa \cite{Priv_Comm} developed a non-black-box quantum extraction protocol that share some of our ideas and is based on similar computational assumptions. They used it to obtain quantum zero-knowledge, but only against explainable non-aborting verifiers.


\paragraph{A Word on Strict Commitments and Non-Aborting Verifiers.} In \cite{unruh2012quantum}, Unruh introduces a notion of {\em strict commitments}, which are commitments that fix not only the plaintext, but also the randomness (e.g. Blum-Micali \cite{BlumM84}), and are known to exist based on injective one-way functions. As mentioned in our technical overview, using such commitments it is possible to obtain zero-knowledge in constant rounds {\em against non-aborting explainable verifiers} through the GK four-step template we discussed in the overview. Roughly speaking, this is because when considering verifiers that always open their (strict) commitments, we are assured that measuring their answer does not disturb the verifier state, as this answer is information-theoretically fixed. This effectively allows to perform rewinding.

\section{Preliminaries}\label{sec:prel}

\noindent We rely on standard notions of classical Turing machines and Boolean circuits:

\begin{itemize}
\item
A \PPT algorithm is a probabilistic polynomial-time Turing machine.

\item
We sometimes think about \PPT algorithms as polynomial-size uniform families of circuits, these are equivalent models.
A polynomial-size circuit family $\mathcal{C}$ is a sequence of circuits $\mathcal{C} = \set{C_\secp}_{\secp\in\Nat}$, such that each circuit $C_\secp$ is of polynomial size $\secp^{O(1)}$.
We say that the family is uniform if there exists a deterministic polynomial-time algorithm $M$ that on input $1^\secp$ outputs $C_\secp$.

\nir{3: Why do we have this item?}\omri{We sometimes think on the same algorithm in two different ways i.e. the SFE encryption and decryption algorithms are \PPT in definition (and should be, I don't want to think of cryptographic algorithms as circuits) but in the simulation we think about their circuit implementations as we evaluate both homomorphically. Same for the QFHE decryption algorithm, we put the decryption circuit inside the CC obfuscation program. Let's talk about this if you still feel the item is redundant.}\nir{You don't need this, it only makes things cumbersome. If you insist just have an item saying any $T$-time turing machine can be converted into a circuit of size $\tilde O(T)$, but this is well known...}

\item
For a \PPT algorithm $M$, we denote by $M(x;r)$ the output of $M$ on input $x$ and random coins $r$. For such an algorithm and any input $x$, we write $m\in M(x)$ to denote the fact that $m$ is in the support of $M(x;\cdot)$.
\end{itemize}

\noindent We follow standard notions from quantum computation.
\begin{itemize}
	\item
	A \QPT algorithm is a quantum polynomial-time Turing machine.
	
	\item
	We sometimes think about \QPT algorithms as polynomial-size uniform families of quantum circuits, these are equivalent models.
	A polynomial-size quantum circuit family $\mathcal{C}$ is a sequence of quantum circuits $\mathcal{C} = \set{C_\secp}_{\secp\in\Nat}$, such that each circuit $C_\secp$ is of polynomial size $\secp^{O(1)}$.
	We say that the family is uniform if there exists a deterministic polynomial-time algorithm $M$ that on input $1^\secp$ outputs $C_\secp$.
	
	\nir{3:again, why do we need this?}\omri{From the same reason. We think of the quantum algorithm of the simulator as \QPT and then generate its circuit and give it to the Watrous rewinding algorithm.}\nir{Again you don't need this.}
	
\item
	An interactive algorithm $M$, in a two-party setting, has input divided into two registers and output divided into two registers.
	For the input, one register $I_m$ is for an input message from the other party, and a second register $I_a$ is an auxiliary input that acts as an inner state of the party.
	For the output, one register $O_m$ is for a message to be sent to the other party, and another register $O_a$ is again for auxiliary output that acts again as an inner state. For a quantum interactive algorithm $M$, both input and output registers are quantum.
\end{itemize}

%
%

\paragraph{The Adversarial Model.}
Throughout, efficient adversaries are modeled as quantum circuits with non-uniform quantum advice (i.e. quantum auxiliary input).
Formally, {\em a polynomial-size adversary} $\A = \set{\A_\secp, \rho_\secp}_{\secp\in\Nat}$, consists of a polynomial-size non-uniform sequence of quantum circuits $\{ \A_\secp \}_{\secp \in \Nat}$, and a \nir{non-uniform?} sequence of polynomial-size mixed quantum states $\{ \rho_\secp \}_{\secp \in \Nat}$.

For an interactive quantum adversary in a classical protocol, it can be assumed without the\nir{remove "the"} loss of generality that its output message register (the register containing the message to be sent to the other side, not the register containing output quantum auxiliary information\nir{parenthesis more confusing than helpful}) is always measured in the computational basis at the end of computation. This assumption is indeed without the loss of generality, because whenever a quantum state is sent through a classical channel then qubits decohere and are effectively measured in the computational basis.

\paragraph{Indistinguishability in the Quantum Setting.}

\begin{itemize}
	\item
	Let $f:\Nat \rightarrow [0, 1]$ be a function.
	\begin{itemize}
		\item
		$f$ is negligible if for every constant $c \in \Nat$ there exists $N \in \Nat$ such that for all $n > N$, $f(n) < n^{-c}$.
		
		\item
		$f$ is noticeable if there exists $c \in \Nat, N \in \Nat$ such that for every $n \geq N$, $f(n) \geq n^{-c}$.

		\item
		$f$ is overwhelming if it is in\nir{in $=>$ of} the form $1 - \mu(n)$, for a negligible function $\mu$.
	\end{itemize}

	\item
	We may consider random variables over bit strings or over quantum states. This will be clear from the context. 

	\item
	For two random variables $X$ and $Y$ supported on quantum states, quantum distinguisher circuit $\zkmD$ with, quantum auxiliary input $\rho$, and $\mu \in [0, 1]$, we write $X \approx_{\zkmD, \rho, \mu} Y$ if
	\begin{align*}
	\abs{
		\Pr[\zkmD(X; \rho)=1] - \Pr[\zkmD(Y; \rho)=1]
	} \leq \mu.
	\end{align*}
	
	\item
	Two ensembles of random variables $\mathcal{X}=\{X_{i}\}_{\secp\in \Nat, i \in I_\secp}$, $\mathcal{Y}=\{Y_{i}\}_{\secp\in \Nat, i \in I_\secp}$ over the same set of indices $I = \cupdot_{\secp \in \Nat}I_\secp$ are said to be {\em computationally indistinguishable}, denoted by $\mathcal{X}\approx_{c} \mathcal{Y}$, if for every polynomial-size quantum distinguisher $\zkmD=\set{\zkmD_\secp, \rho_\secp}_{\secp\in\Nat}$ there exists a negligible function $\mu(\cdot)$ such that for all $\secp \in \Nat, i \in I_\secp$,
	\begin{align*}
	X_i \approx_{\zkmD_{\secp}, \rho_\secp, \mu(\secp)} Y_i\enspace.
	\end{align*}

	\item The trace distance between two distributions $X, Y$ supported over quantum states, denoted $\TD(X, Y)$, is a generalization of statistical distance to the quantum setting and represents the maximal distinguishing advantage between two distributions supported over quantum states, by unbounded quantum algorithms.
	We thus say that ensembles $\mathcal{X}=\{X_{i}\}_{\secp\in \Nat, i \in I_\secp}$, $\mathcal{Y}=\{Y_{i}\}_{\secp\in \Nat, i \in I_\secp}$, supported over quantum states, are statistically indistinguishable (and write $\mathcal{X}\approx_{s} \mathcal{Y}$), if there exists a negligible function $\mu(\cdot)$ such that for all $\secp \in \Nat, i \in I_\secp$,
	\begin{align*}
	\TD\left( X_i, Y_i \right) \leq \mu(\secp) \enspace.
	\end{align*}
\end{itemize}

\medskip
In what follows, we introduce the cryptographic tools used in this work.
By default, all algorithms are classical and efficient unless stated otherwise, and security holds against polynomial-size non-uniform quantum adversaries with quantum advice.

\subsection{Interactive Protocols, Witness Indistinguishability, and Zero Knowledge}
We define proof and argument systems that are secure against quantum adversaries.
We start with classical protocols and proceed to define quantum protocols.
In what follows, we denote by $(\zkP, \zkV)$ a protocol between two parties $\zkP$ and $\zkV$.
For common input $\ins$, we denote by $\view_{\zkV}\prot{\zkP}{\zkV}(\ins)$ the output of $\zkV$ in the protocol. For honest verifiers, this output will be a single bit indicating acceptance or rejection of the proof.
Malicious quantum verifiers may have arbitrary quantum output (which is formally captured by the verifier outputting its inner quantum state).

\begin{definition}[Classical Proof and Argument Systems for NP] \label{def:proofs_args}
	Let $(\zkP, \zkV)$ be a protocol with an honest \PPT prover $\zkP$ and an honest \PPT verifier $\zkV$ for a language $\lang \in\NP$, satisfying:
	\begin{enumerate}
		
		\item {\bf Perfect Completeness:}
		For any $\secp \in\Nat,\ins \in \lang \cap \zo^\secp, w \in \mathcal{R}_{\lang}(x)$,
		$$
		\Pr[ \view_{\zkV}\prot{\zkP(w)}{\zkV}(\ins) = 1 ]  = 1 \enspace.
		$$
		
		\item {\bf Soundness:}
		The protocol satisfies one of the following.
		\begin{itemize}
			\item {\bf Computational Soundness:} For any quantum polynomial-size prover $\zkmP = \set{\zkmP_\secp, \rho_\secp}_{\secp \in \Nat}$, there exists a negligible function $\mu(\cdot)$ such that for any security parameter $\secp\in \Nat$ and any $\ins \in \zo^\secp\setminus\lang$,
			\begin{align*}
			\Pr\left[ \view_{\zkV}\prot{\zkmP_\secp(\rho_\secp)}{\zkV}(\ins) = 1 \right] \leq \mu(\secp)\enspace.
			\end{align*}
			A protocol with computational soundness is called an argument.
			
			\item {\bf Statistical Soundness:} There exists a negligible function $\mu(\cdot)$, such that for any (unbounded) prover $\zkmP$, any security parameter $\secp\in \Nat$, and any $\ins \in \zo^\secp\setminus\lang$,
			\begin{align*}
			\Pr\left[ \view_{\zkV}\prot{\zkmP}{\zkV}(\ins)=1 \right] \leq \mu(\secp)\enspace.
			\end{align*}
			A protocol with statistical soundness is called a proof.
		\end{itemize}
	\end{enumerate}
\end{definition}

\begin{definition}[Quantum Proof and Argument Systems for QMA] \label{def:proofs_args_quantum}
	Let $(\zkP, \zkV)$ be a quantum protocol with an honest \QPT prover $\zkP$ and an honest \QPT verifier $\zkV$ for a language $\lang \in\QMA$, satisfying:
	\begin{enumerate}
		
		\item {\bf Statistical Completeness:}
		There is a polynomial $k(\cdot)$ and a negligible function $\mu(\cdot)$ s.t. for any $\secp \in\Nat$,$\ins \in \lang \cap \zo^\secp$, $w \in \mathcal{R}_{\lang}(x)$\footnote{For a language $\lang$ in QMA, for an instance $x \in \lang$ in the language, the set $\mathcal{R}_{\lang}(x)$ is the (possily infinite) set of quantum witnesses that make the BQP verification machine accept with some overwhelming probability $1 - \negl(\secp)$.},
		$$
		\Pr[ \view_{\zkV}\prot{\zkP(w^{\otimes k(\secp)})}{\zkV}(\ins) = 1 ] \geq 1 - \mu(\secp) \enspace.
		$$
		
		\item {\bf Soundness:}
		As in Definition \ref{def:proofs_args}.
	\end{enumerate}
\end{definition}

\subsubsection{Witness Indistinguishability}
We rely on classical constant-round (public-coin) proof systems for NP that are witness-indistinguishable; that is, proofs that use different witnesses (for the same statement) are computationally indistinguishable (for quantum attackers).

\begin{definition} [WI Proof System for NP] \label{def:arg}
A classical protocol proof system $(\zkP, \zkV)$ for a language $\lang \in\NP$ (as in Definition \ref{def:proofs_args}) is witness-indistinguishable if it satisfies:

\paragraph{Witness Indistinguishability:}
For every quantum polynomial-size verifier $\zkmV = \{ \zkmV_\secp,\rho_\secp \}_\secp$,$$
\{\view_{\zkmV_\secp}\prot{\zkP(w_0)}{\zkmV_\secp(\rho_\secp)}(\ins)\}_{\secp,\ins,\wit_0,\wit_1}
\approx_{c}
\{\view_{\zkmV_\secp}\prot{\zkP(w_1)}{\zkmV_\secp(\rho_\secp)}(\ins)\}_{\secp,\ins,\wit_0,\wit_1}
\enspace,
$$
where $\secp\in \Nat, \ins\in \lang\cap\zo^\secp$, and $\wit_0,\wit_1\in\rel_{\lang}(\ins)$ are witnesses for $\ins$.
\end{definition}

\paragraph{Instantiations.}
3-message, public-coin classical proof systems with WI follow from classical zero-knowledge proof systems such as the parallel repetition of the 3-coloring protocol \cite{goldreich1991proofs}, which is in turn based on non-interactive perfectly-binding commitments.
For the proof system to be WI against quantum attacks, we need the non-interactive commitments to be computationally hiding against quantum adversaries, which can be instantiated for example from QLWE.

\subsubsection{Sigma Protocols}
We use the abstraction of {\em Sigma Protocols}, which are public-coin three-message proof systems with a special zero knowledge guarantee.
We define both classical and quantum Sigma Protocols.

\begin{definition} [Classical Sigma Protocol for NP]
A classical sigma protocol for $\lang \in \NP$ is a classical proof system $(\sigmaP, \sigmaV)$ (as in Definition \ref{def:proofs_args}) with 3 messages and the following syntax.
\begin{itemize}
	\item
	$(\alpha, \tau) \gets \sigmaP_1(x,w):$ Given an instance $x \in \lang$ and a witness $w \in \rel_{\lang}(\ins)$, the first prover execution outputs a public message $\alpha$ for $\sigmaV$ and a private inner state $\tau$.
	
	\item
	$\beta \gets \sigmaV(x):$ The verifier simply outputs a string of $\poly(|x|)$ random bits.
	
	\item
	$\gamma \gets \sigmaP_3(\beta, \tau):$ Given the verifier's string $\beta$ and the private state $\tau$, the prover outputs a response $\gamma$.
\end{itemize}
\nir{2: say explicitly what are $P_1,P_3$. Also simplify notation: $(\alpha, \tau) \gets P_1(x,w)$ (better not to use $r$ for state) $\gamma \gets P_3(\beta;\tau)$.}\omri{ok?}
The protocol satisfies the following.

\paragraph{Special Zero-Knowledge:} There exists a \PPT simulator $\SigS$ such that,
$$
\set{(\alpha,\gamma) \; | \; (\alpha, \tau) \gets \SigP_1(\ins,\wit), \gamma \gets \SigP_3(\beta, \tau) }_{\secp, x, w, \beta}
\approx_{c}
\set{(\SigA,\SigC) \; | \; (\SigA,\SigC)\gets \SigS(\ins,\SigB)}_{\secp, x, w, \beta}\enspace,
$$
where $\secp\in \Nat$, $\ins\in \lang\cap\zo^\secp$, $\wit\in\rel_{\lang}(\ins)$ and $\beta \in \{ 0, 1 \}^{\poly(\secp)}$.
\end{definition}

\noindent The next claim follows directly from the special zero-knowledge requirement, and will be used throughout.
\begin{claim} [First-Message Indistinguishability, \cite{bitansky2018multi}, Claim 8.1] \label{first_message_indisting}
\nir{This claim is not needed. see zk section}
In every $\Sigma$ protocol:
$$
\set{\alpha \; | \; (\alpha, \tau) \gets \SigP_1(\ins,\wit) }_{\secp, x, w, \beta}
\approx_{c}
\set{\alpha \; | \; (\SigA,\SigC)\gets \SigS(\ins,0^{|\beta|})}_{\secp, x, w, \beta}\enspace,
$$
where $\secp\in \Nat$, $\ins\in \lang\cap\zo^\secp$, $\wit\in\rel_{\lang}(\ins)$ and $\beta \in \{ 0, 1 \}^{\poly(\secp)}$.

\nir{since you're using $\SigP_1$ also before, define it (and $\SigP_3$) in the beginning of the sigma def, then here only need to address $\SigS_1$}\omri{decided to write it differently.}
\end{claim}

\paragraph{Instantiations.}
Like witness-indistinguishable proofs, Sigma protocols are known to follow from the parallel repetition of the 3-coloring protocol \cite{goldreich1991proofs}.
For the protocol to have special zero knowledge against quantum attacks, we need the non-interactive commitment $\alpha$ to be computationally hiding against quantum adversaries, which can be instantiated for example from QLWE.

\begin{definition} [Quantum Sigma Protocol for QMA]
	A quantum sigma protocol for $\lang \in \QMA$ is a quantum proof system $(\qsigmaP, \qsigmaV)$ (as in Definition \ref{def:proofs_args_quantum}) with 3 messages and the following syntax.
	\begin{itemize}
		\item
		$(\alpha, \tau) \gets \qsigmaP_1(x,w^{\otimes k(\secp)}):$ Given an instance $x \in \lang\cap\{ 0, 1 \}^\secp$ and $k(\secp)$ witnesses $w \in \rel_{\lang}(\ins)$ (for a polynomial $k(\cdot)$), the first prover execution outputs a public message $\alpha$ for $\qsigmaV$ and a private inner state $\tau$.
		
		\item
		$\beta \gets \qsigmaV(x):$ The verifier simply outputs a string of $\poly(|x|)$ random bits.
		
		\item
		$\gamma \gets \qsigmaP_3(\beta, \tau):$ Given the verifier's string $\beta$ and the private state $\tau$, the prover outputs a response $\gamma$.
	\end{itemize}
	The protocol satisfies the following.
	
	\paragraph{Special Zero-Knowledge:} There exists a \QPT simulator $\qsigmaS$ such that,
	$$
	\set{(\alpha,\gamma) \; | \; (\alpha, \tau) \gets \qsigmaP_1(\ins,w^{\otimes k(\secp)}), \gamma \gets \qsigmaP_3(\beta, \tau) }_{\secp, x, w, \beta}
	\approx_{c}
	\set{(\SigA,\SigC) \; | \; (\SigA,\SigC)\gets \qsigmaS(\ins,\SigB)}_{\secp, x, w, \beta}\enspace,
	$$
	where $\secp\in \Nat$, $\ins\in \lang\cap\zo^\secp$, $\wit\in\rel_{\lang}(\ins)$ and $\beta \in \{ 0, 1 \}^{\poly(\secp)}$.
\end{definition}


\paragraph{Instantiations.}
Quantum sigma protocols follow from the parallel repetition of the 3-message quantum zero-knowledge protocols of \cite{broadbent2019zero} for QMA\footnote{The authors in \cite{broadbent2019zero} use the name "sigma protocols" differently then in this work. Specifically, \cite{broadbent2019zero} call their 3-message protocols, that are zero-knowledge but have large soundness error, "sigma protocols". In this work we call the parallel repetition of such protocols (which have amplified soundness but weakened zero knowledge) "sigma protocols".}.

\subsubsection{Quantum Zero-Knowledge Protocols}
We next define post-quantum zero-knowledge classical protocols and zero-knowledge quantum protocols.

\begin{definition}[Post-Quantum Zero-Knowledge Classical Protocol] \label{def:qzk}
	Let $(\zkP, \zkV)$ be a classical protocol (argument or proof) for a language $\lang \in\NP$ as in Definition \ref{def:proofs_args}.
	The protocol is quantum zero-knowledge if it satisfies:
	
	\paragraph{Quantum Zero Knowledge:} There exists a quantum polynomial-time simulator $\zkS$, such that for any quantum polynomial-size verifier $\zkmV = \set{\zkmV_\secp, \rho_\secp}_{\secp \in \Nat}$,
	$$
	\{ \view_{\zkmV_\secp}\prot{\zkP(w)}{\zkmV_\secp(\rho_{\secp})}(x)\}_{\secp, x, w}
	\approx_{c}
	\{\zkS(x,\zkmV_\secp, \rho_\secp)\}_{\secp, x, w}\enspace,
	$$
	where $\secp \in \Nat$, $x \in \lang \cap \{ 0, 1 \}^\secp$, $w \in \rel_{\lang}(x)$.
	\begin{itemize}
		\item If $\zkmV$ is a classical circuit, then the simulator is computable by a classical polynomial-time algorithm.
			
		\nir{why do we have this item?}\omri{Without this item, the definition of post-quantum ZK is incomparable to classical ZK.}
	\end{itemize}
\end{definition}

\omri{remarks about the classical simulator item, and about the simulator being able to simulate the transcript w.l.o.g., because we can always efficiently compute a verifier circuit that records the transcript inside its inner state, which the simulator simulates.}

\begin{definition}[Zero-Knowledge Quantum Protocol] \label{def:qma_qzk}
	Let $(\zkP, \zkV)$ be a quantum protocol (argument or proof) for a language $\lang \in\QMA$ as in Definition \ref{def:proofs_args_quantum}, where the prover uses $k(\secp)$ copies of a witness.
	The protocol is quantum zero-knowledge if it satisfies:
	
	\paragraph{Quantum Zero Knowledge:} There exists a quantum polynomial-time simulator $\zkS$, such that for any quantum polynomial-size verifier $\zkmV = \set{\zkmV_\secp, \rho_\secp}_{\secp \in \Nat}$,
	$$
	\{ \view_{\zkmV_\secp}\prot{\zkP(w^{\otimes k(\secp)})}{\zkmV_\secp(\rho_{\secp})}(x)\}_{\secp, x, w}
	\approx_{c}
	\{\zkS(x,\zkmV_\secp, \rho_\secp)\}_{\secp, x, w}\enspace,
	$$
	where $\secp \in \Nat$, $x \in \lang \cap \{ 0, 1 \}^\secp$, $w \in \rel_{\lang}(x)$.
\end{definition}

\subsection{Additional Tools}

\subsubsection{Compute-and-Compare Obfuscation}
We define compute-and-compare (CC) circuits and obfuscators for CC circuits.

\begin{definition}[Compute-and-Compare Circuit]
	Let $f:\zo^n\rightarrow\zo^\secp$ be a circuit, and let $u\in\zo^\secp, z\in \zo^*$ be strings. Then $\CC{f,u,z}(x)$ is a circuit that returns $z$ if $f(x)=y$, and $\bot$ otherwise. $\CC{f,u,z}$ has a canonical description from which $f$, $u$, and $z$ can be read.
\end{definition}

We now define compute-and-compare (CC) obfuscators (with perfect correctness). In what follows $\Obf$ is a \PPT algorithm that takes as input a CC circuit $\CC{f,u,z}$ and outputs a new circuit $\obfC$.

\begin{definition}[CC obfuscator]
A \PPT algorithm $\Obf$ is a compute-and-compare obfuscator if it satisfies:
\begin{enumerate}
\item
{\bf Perfect Correctness:} For any circuit $f:\zo^n\rightarrow\zo^\secp$, $u\in \zo^\secp$ and $z \in \zo^*$,
$$
\Pr\left[\forall x \in \zo^{n}:\obfC(x) = \CC{f,u,z}(x) \pST \obfC\gets \Obf(\CC{f,u,z})\right]=1 \enspace.
$$

\item
{\bf Simulation:} There exists a $\PPT$ simulator $\ccSim$ such that for every two polynomials $\ell_1(\cdot)$, $\ell_2(\cdot)$,
$$\{
\obfC \; | \; u \gets \{ 0, 1 \}^\secp , \obfC \gets \Obf(\CC{f,u,z})
\}_{\secp, f, z}
\approx_c
\{
\ccSim(1^{\ell_1(\secp)}, 1^{\ell_2(\secp)}, 1^{\secp})
\}_{\secp, f, z} \enspace ,
$$
where $\secp \in \Nat$, $f : \{ 0, 1 \}^{n} \rightarrow \{ 0, 1 \}^{\secp}$ is a $\ell_1(\secp)$-size circuit, $z \in \{ 0, 1 \}^{\ell_2(\secp)}$.

\nir{again the infinite sequence formulation is cumbersome and hard to read. Just have an ensemble indexed by $f,u,z$}\omri{now?}

\end{enumerate}

\end{definition}

\paragraph{Instantiations.} Compute-and-compare obfuscators with almost-perfect correctness are constructed in \cite{goyal2017lockable,wichs2017obfuscating} based on QLWE.
CC obfuscators with perfect correctness are constructed \cite{goyal2019perfect} by Goyal, Koppula, Vusirikala and Waters, also based on QLWE.

\subsubsection{Non-Interactive Commitments}
We define non-interactive commitment schemes. 

\begin{definition} [Non-Interactive Commitment] \label{def:com}
	A non-interactive commitment scheme is given by a \PPT algorithm $\Com(\cdot)$ with the following syntax:
	
	\begin{itemize}
		\item $\cmt \gets \Com(1^\secp, x) : $
		A randomized algorithm that takes as input a security parameter $1^\secp$ and input $x \in \{ 0, 1 \}^*$, and outputs a commitment $\cmt$.
	\end{itemize}

	The commitment algorithm satisfies:
	\begin{enumerate}
		\item {\bf Perfect Binding:}
		For any $\secp_0, \secp_1 \in \Nat$, $x_0, x_1 , r_0, r_1\in \{ 0, 1 \}^*$, $\Com(1^{\secp_0}, x_0 ; r_0) = \Com(1^{\secp_1}, x_1 ; r_1)$ implies $x_0 = x_1$.
		
		\item {\bf Computational Hiding:}
		For any polynomial $\ell(\cdot)$,
		$$
		\{ \Com(1^\secp, x_{0}) \}_{\secp, x_0, x_1}
		\approx_c
		\{ \Com(1^\secp, x_{1}) \}_{\secp, x_0, x_1} \enspace ,
		$$
		where $\secp \in \Nat$, $x_0, x_1 \in \{ 0, 1 \}^{\ell(\secp)}$.
	\end{enumerate}
\end{definition}

\paragraph{Instantiations.} The above non-interactive commitments are known based on various standard assumptions, including QLWE \cite{GoyalHKW17,LombardiS19}.

\subsubsection{Quantum Fully Homomorphic Encryption}
We rely on quantum fully homomorphic encryption, specifically, a scheme where a classical input can be encrypted classically and a quantum input quantumly.
The formal definition follows.
\begin{definition} [Quantum Fully-Homomorphic Encryption]
	A quantum fully homomorphic encryption scheme is given by six algorithms $(\qheG,$ $\qheE,$ $\qheQE,$ $\qheD,$ $\qheQD,$ $\qheEv)$ with the following syntax:
	\begin{itemize}
		\item
		$(\qhepk, \qhesk) \gets \qheG(1^\secp):$ A \PPT algorithm that given a security parameter $1^\secp$, samples a classical public key $\qhepk$ and a classical secret key $\qhesk$.
		
		\item
\nir{have macros for ciphertexts}\omri{done}
		$\ciph \gets \qheE_{\qhepk}(x):$ A \PPT algorithm that takes as input a classical string $x \in \{ 0, 1 \}^*$ and outputs a classical ciphertext $\ciph$.
		
\nir{why define bit encryption? we're encrypting strings, it would make the notation more cumbersome later.}\omri{done}

		\item
		$\ket{\phi} \gets \qheQE_{\qhepk}(\ket{\psi}):$ A \QPT algorithm that takes as input a quantum state $\ket{\psi}$ and outputs a quantum ciphertext $\ket{\phi}$.
		 		
		\item
		$x \gets \qheD_{\qhesk}(\ciph):$ A \PPT algorithm that takes as input a classical ciphertext $\ciph$ and outputs a string $x$.
		
		\item
		$\ket{\psi} \gets \qheQD_{\qhesk}(\ket{\phi}):$ A \QPT algorithm that takes as input a quantum ciphertext $\ket{\phi}$ and outputs a quantum state $\ket{\psi}$.
		
		\item
		$\ket{\hat{\phi}} \gets \qheEv_{\qhepk}(C, \ciph, \ket{\phi} ):$ A \QPT algorithm that takes as input a general quantum circuit $C$, a classical ciphertext $\ciph$ and a quantm ciphertext $\ket{\phi}$ and outputs an evaluated quantum ciphertext $\ket{\hat{\phi}}$
		
\nir{I suggest to give eval as input also explicit classical ciphertext $\mathbf{c}$ }\nir{remove all the lengths and the part about measurement --- should be addressed in the properties and not in the syntax. Just say it takes a quantum circuit C, classical CT, quantum CT, and outputs classical CT and quantum CT.}

	\end{itemize}
	The scheme satisfies the following.
	\begin{itemize}
		\item {\bf Quantum Semantic Security:} For every polynomial $\ell(\cdot)$,
		\begin{align*}
			&\left\{
			\left( \ciph, \ket{\phi} \right) \pST
			\begin{array}{l}
			(\qhepk, \qhesk) \gets \qheG(1^\secp), \\
			\ciph \gets \qheE_{\qhepk}(x_0), \\
			\ket{\phi} \gets \qheQE_{\qhepk}(\ket{\psi_0})
			\end{array}
			\right\}_{\secp, x_0, \ket{\psi_0}, x_1, \ket{\psi_1}}
			\approx_{c} \\
			&\left\{
			\left( \ciph, \ket{\phi} \right) \pST
			\begin{array}{l}
			(\qhepk, \qhesk) \gets \qheG(1^\secp), \\
			\ciph \gets \qheE_{\qhepk}(x_1), \\
			\ket{\phi} \gets \qheQE_{\qhepk}(\ket{\psi_1})
			\end{array}
			\right\}_{\secp, x_0, \ket{\psi_0}, x_1, \ket{\psi_1}} \enspace ,
		\end{align*}
		where $\secp \in \Nat$, $x_0, x_1 \in \{ 0, 1 \}^{\ell(\secp)}$ and $\ket{\psi_0}$, $\ket{\psi_1}$ are $\ell(\secp)$-qubit states.
				
		\nir{2: why don't we define this?}\omri{done}
		
		\item {\bf Compactness:}
		There exists a polynomial $\poly(\cdot)$ s.t. for every quantum circuit $C$ with $\ell$ output qubits and an enryption of an input for $C$, the output size of the evaluation algorithm is $\ell\cdot \poly(\secp)$, where $\secp$ is the security parameter of the scheme.
		\nir{the complexity of evaluation does scale with the size of $C$, the resulting ciphertext is compact.}\omri{This doesn't contradict what I wrote. I only wrote that the output size of evaluation is compact, not the computation itself, of course.}
		
		\nir{not sure that classicality is an actual word, or at least not one in use. Perhaps call it "Measurement-Preserving Homomorphism" (no need to add the word quantum)}\nir{this definition shouldn't be too asymptotic. say that for any poly $s(\secp)$ there is a negligible $\mu(\secp)$ such that for any size-$s$ circuit $C$ and state $\ket{\Psi}$ ... $TD(\rho_0,\rho_1)<\mu(\secp)$ ($pk,sk,r,\psi$ shouldn't be infinite sequences)}
		
		\item {\bf Measurement-Preserving Homomorphism:}
		For every polynomial $s(\cdot)$ there exists a neligible function $\negl(\cdot)$ such that for every $\secp \in \Nat$, size-$s(\secp)$ quantum circuit $C$, input $(x, \ket{\psi})$ for $C$ which is comprised of a classical string $x$ and quantum state $\ket{\psi}$, subset $M$ of the output qubits of $C$, public and secret key pair $(\qhepk, \qhesk)\in \qheG(1^\secp)$ and randomness strings $(r_x, r_{\ket{\psi}})$:
		$$
		\TD\left( D_0, D_1 \right) \leq \negl(\secp) \enspace ,
		$$
		where $D_0, D_1$ are the distributions which are defined as follows:
		\begin{itemize}
			\item $D_0:$
			Compute $\ket{\psi'} \gets C\left( x, \ket{\psi} \right)$, measure the subset of qubits of $\ket{\psi'}$ which are in $M$ and output the obtained state.
			
			\item $D_1:$
			\begin{itemize}
				\item Encrypt $\ciph = \qheE_{\qhepk}(x; r_x)$, $\ket{\phi} = \qheQE_{\qhepk}(\ket{\psi}; r_{\ket{\psi}})$.
				
				\item Evaluate $\ket{\hat{\phi}} \gets \qheEv_{\qhepk}(C, \ciph, \ket{\phi})$.
				
				\item Measure the $|M|$ packets of qubits that correspond to the output qubits in $M$ (by compactness, each packet is exactly of size $\poly(\secp)$).
				
				\item Decrypt the measured $|M|$ packets with $\qheD_{\qhesk}(\cdot)$, and decrypt the rest of the qubits with $\qheQD_{\qhesk}(\cdot)$. Output the obtained state.
			\end{itemize}
		\end{itemize}
%
	\end{itemize}
		
\end{definition}

%
%

\paragraph{Instantiations.} Mahadev \cite{mahadev2018classical} shows how to build quantum FHE based on super-polynomial QLWE modulus and a circular security assumption with respect to a secret key and an additional trapdoor information.
\nir{don't say slightly non-standard just say what she assumes}\omri{I don't know how to say this. I would just write "super-polynomial QLWE modulus and a non-standard circular security assumption".}
Brakerski \cite{brakerski2018quantum} subsequently shows how to construct quantum FHE based on polynomial QLWE modulus and a circular security assumption (analogous to the assumptions required for multi-key FHE in the classical setting).
\nir{don't call it standard, it's debatable. Use similar wording to intro.}\omri{ok}
\nir{It's not more general it's more specific... just say that the classical-preservation is not required in the standard definition, but is satisfied by B18...}\omri{ok}
The above definition is more specific then the standard definition of QFHE.
Specifically, \emph{measurement-preservation} and (statistical) correctness for \emph{every} triplet $(\qhepk, \qhesk, r)$ of public and secret keys and randomness $r$ for the encryption algorithm, is not an explicit part of the standard definition. The construction of Brakerski satisfies this more general definition. This follows readily from the main Theorem (4.1) in \cite{brakerski2018quantum}.


\subsubsection{Function-Hiding Secure Function Evaluation} \label{def:SFE}
We define two-message function evaluation protocols with statistical circuit privacy and quantum input privacy.
\begin{definition}[2-Message Function Hiding SFE] \label{def:1hop}
	A two-message secure function evaluation protocol $(\sfeGen,$ $\sfeEnc,$ $\sfeEval,$ $\sfeD)$ has the following syntax:
	\begin{itemize}
		\item
		$\sfedk \gets \sfeGen(1^\secp):$  a probabilistic algorithm that takes a security parameter $1^\secp$ and outputs a secret key $\sfedk$.\nir{call it sk (dk suggests decryption key, and it's not a decryption key). If you have multiple sk later, use sk,sk' etc}
		\item
		$\ciph \gets\sfeEnc_{\sfedk}(\SFEin):$ a probabilistic algorithm that takes a string $\SFEin \in \zo^*$, and outputs a ciphertext $\ciph$.
		\item
		$\evciph\gets\sfeEval(\Cir,\ciph):$ a probabilistic algorithm that takes a (classical) circuit $\Cir$ and a ciphertext $\ciph$ and outputs an evaluated ciphertext $\evciph$.
		\item
		$\SFEout = \sfeD_{\sfedk}(\evciph):$ a deterministic algorithm that takes a ciphertext $\evciph$ and outputs a string $\SFEout$.
	\end{itemize}
	
	\noindent The scheme satisfies the following.
\nir{no need to talk about a sequence of circuits here...}
	\begin{itemize}
		\item {\bf Perfect Correctness:} For any polynomial $s(\cdot)$, for any $\secp \in \Nat$, size-$s(\secp)$ circuit $C$ and input $x$ for $C$,
		$$
		\Pr\left[
		\sfeD_{\sfedk}(\evciph) = \Cir(\SFEin) \pST
		\begin{array}{l}
		\sfedk \gets \sfeGen(1^\secp),\\
		\ciph \gets \sfeEnc_{\sfedk}(\SFEin),\\
		\evciph \gets \sfeEval(\Cir,\ciph)
		\end{array}
		\right] = 1\enspace.
		$$
	
		
		\item {\bf Quantum Input Privacy:} For every polynomial $\ell(\cdot)$,
		\begin{align*}
		\left\{
		\ciph \pST
		\begin{array}{l}
		\sfedk \gets \sfeGen(1^\secp), \\
		\ciph \gets \sfeEnc_{\sfedk}(x_0)
		\end{array}
		\right\}_{\secp, x_0, x_1}
		\approx_{c}
		\left\{
		\ciph \pST
		\begin{array}{l}
		\sfedk \gets \sfeGen(1^\secp), \\
		\ciph \gets \sfeEnc_{\sfedk}(x_1)
		\end{array}
		\right\}_{\secp, x_0, x_1} \enspace ,
		\end{align*}
		where $\secp \in \Nat$ and $x_0, x_1 \in \{ 0, 1 \}^{\ell(\secp)}$.
		
\nir{perhaps use instead ensemble and computational ind notation as in previous defs. In any case drop the $\ell'$.}\omri{done}
		
		\item {\bf Statistical Circuit Privacy:} There exist unbounded algorithms, probabilistic $\HESim$ and deterministic $\Ext$ such that:
		\begin{itemize}
			\item For every $x \in \{ 0, 1 \}^*$, $\ciph \in \sfeEnc(x)$, the extractor outputs $\Ext(\ciph) = x$.
			
			\item For any polynomial $s(\cdot)$,
			\begin{align*}
			\{
			\sfeEval(\Cir, \ciph^*)
			\}_{\secp, \Cir, \ciph^*}
			\approx_s
			\{
			\HESim( \; 1^\secp, \Cir(\Ext(1^\secp, \ciph^*)) \; ) \}_{\secp, \Cir, \ciph^*}\enspace,
			\end{align*}
			where $\secp\in\Nat$, $\Cir$ is a $s(\secp)$-size circuit, and $\ciph^* \in \{ 0, 1 \}^{*}$.
		\end{itemize}
			
	\end{itemize}
\end{definition}

\noindent The next claim follows directly from the circuit privacy property, and will be used throughout the analysis.
\begin{claim}[Evaluations of Agreeing Circuits are Statistically Close] \label{claim:SFE_eval}
	\nir{if this is the only property, might as well simplify and state it as circuit privacy. Can probably throw the extractor and ask that for any two functionally equivalent $C_0,C_1$ (of the same size) statistical closeness holds. }\omri{We are using the extractor as well (in the soundness proof).}
	
	For any polynomial $s(\cdot)$,
	$$
	\{ \sfeEval(C_{0}, \ciph^*) \}_{\secp, C_0, C_1, \ciph}
	\approx_s
	\{ \sfeEval(C_{1}, \ciph^*) \}_{\secp, C_0, C_1, \ciph} \enspace ,
	$$
	where $\secp \in \Nat$, $C_0$, $C_1$ are two $s(\secp)$-size functionally-equivalent circuits, and $\ciph^* \in \{ 0, 1 \}^*$.
\end{claim}

\paragraph{Instantiations.} Such secure function evaluation schemes are known based on QLWE \cite{ostrovsky2014maliciously, brakerski2018two}.

\subsubsection{Quantum Rewinding Lemma}
\nir{change this to a more general lemma that accounts for non-purified circuits.}\nir{In the pretext, just say what the lemma means. Then state it. Only then discuss that in watrous it's stated differently and explain the difference.p}

We use Lemma 9 from \cite{watrous2009zero}, which constructs a quantum algorithm for amplifying the success probability of quantum sampler circuits under some conditions.


\begin{lemma} [Lemma 9, \cite{watrous2009zero}] \label{lem:watrous}
	There is a quantum algorithm $\Wat$ that gets as input:
	\begin{itemize}
		\item
		A general quantum circuit $\Q$ with $n$ input qubits that outputs a classical bit $b$ and an additional $m$ output qubits.
		
		\item
		An $n$-qubit state $\ket{\psi}$.
		
		\item
		A number $t \in \Nat$.
	\end{itemize}
	$\Wat$ executes in time $t\cdot \poly(|\Q|)$ and outputs a distribution over $m$-qubit states $D_{\psi} := \Wat( \Q, \ket{\psi}, t )$ with the following guarantees.
	
	For an $n$-qubit state $\ket{\psi}$, denote by $\Q_{\psi}$ the conditional distribution of the output distribution $\Q(\ket{\psi})$, conditioned on $b = 0$, and denote by $p(\psi)$ the probability that $b = 0$.
	If there exist $p_0, q \in (0, 1)$, $\varepsilon \in \left( 0, \frac{1}{2} \right)$ such that:	
	\begin{itemize}
		\item
		Amplification executes for enough time: $t \geq \frac{\log(1/\varepsilon)}{4\cdot p_0 (1 - p_0)}$,
		
		\item
		There is some minimal probability that $b = 0$:
		For every $n$-qubit state $\ket{\psi}$, $p_0 \leq p(\psi)$,
		
		\item
		$p(\psi)$ is input-independant, up to $\varepsilon$ distance: For every $n$-qubit state $\ket{\psi}$, $\abs{p(\psi) - q} < \varepsilon$, and
		
		\item
		$q$ is closer to $\frac{1}{2}$:
		$\; p_0(1 - p_0) \leq q(1 - q)$,
	\end{itemize}
	then for every $n$-qubit state $\ket{\psi}$,
	$$
	\TD\Big( \Q_{\psi}, D_{\psi} \Big) \leq 4\sqrt{\varepsilon}\frac{\log(1/\varepsilon)}{p_0 (1 - p_0)} \enspace .
	$$
\end{lemma}

The exact wording in the Lemma from \cite{watrous2009zero} differs from the above in two manners.
First, the original lemma states that for each circuit $\Q$ there exists an amplified circuit $\Q'$, but actually\nir{in actuality => actually}\omri{done} the proof of the Lemma proves that there is an algorithm $\Wat$ that on input $\Q$, executes an amplified version of $\Q$ (and thus the circuit implementation of $\Wat(\Q)$ can be thought of as $\Q'$).
Second, the original lemma deals with unitary quantum circuits i.e. $\Q$ contains no measurement gates.
By standard quantum circuit purification, it follows that the above formulation is equivalent to the analogous statement that includes only unitary circuits.

\section{Constant-Round Zero-Knowledge Arguments for NP \nir{can remove "with quantum security". There's the context of the paper for that}\omri{Done.}} \label{sec:const-round_qzk}
	In this section we construct a classical argument system for an arbitrary NP language $\lang$, with a constant number of rounds, quantum soundness and quantum zero-knowledge (according to Definition \ref{def:qzk}).

	\nir{can move the ZK def to preliminaries (see relevant comments there)}	
	\nir{Just say it's $V*s$ (quantum) output (this is the most general thing --- you can in particular think of $V^*$ that outputs its view including the transcript). Also, no need to define $OUT$ separately for $\zkV$ and $\zkmV$, it's anyhow just the output, but when you define it for the first time (in the prelim) say that if it is quantum then this output is quantum etc.}
	\omri{done}
	
	\omri{At the end, make sure that the classicality preservation is noted and explained, this only requires iterating $\zkslS$ repeatedly $\secp$ times until there's a success or $\secp$ tries failed.
	Also, it should be noted that for the classical case the QHE is swapped to a regular FHE.}
	
	\paragraph{Ingredients and notation:}
	\begin{itemize}
		\item
		A non-interactive commitment scheme $\Com$.
		\item
		A CC obfuscator $\Obf$.
		\item
		A quantum fully homomorphic encryption scheme $(\qheG, \qheE, \qheQE, \qheD,\newline \qheQD, \qheEv )$.
		\item
		A 2-message function-hiding secure function evaluation scheme $(\sfeGen,$ $\sfeEnc,$ $\sfeEval,$ $\sfeD)$.
		\item
		A 3-message WI proof $(\wiP, \wiV)$ for $\lang \in \NP$.
		\item
		A 3-message sigma protocol $(\SigP, \SigV)$ for $\lang \in \NP$.
	\end{itemize}
	
	\nir{Remove this from here. Instead, add to the protocol an item: "If either party sends a message of an incorrect form or provides a non-convincing WI proof, the other party terminates the interaction".}\omri{done.}
%
	
	\noindent We describe the protocol in \figref{fig:const_round_qzk}.
	
	\protocol
	{\proref{fig:const_round_qzk}}
	{A classical constant-round zero-knowledge argument for $\lang \in \NP$ with quantum security.}
	{fig:const_round_qzk}
	{
		\begin{description}
			\item[Common Input:] An instance $\ins \in \lang\cap \zo^\secp$, for security parameter $\secp \in \Nat$.
			
			\item[$\zkP$'s private input:] A classical witness $w \in \mathcal{R}_{\lang}(\ins)$ for $\ins$.
			\nir{remove (it goes w/o saying it's poly-size, since it's an NP relation)}\omri{done}\nir{remove "showing that..." (in general, concise is easier to read. It's good to make things more explicit when  they're not clear/standard, but otherwise, you want the protocol to be concise}\omri{done}
		\end{description}
		
		\begin{enumerate}
			\item {\bf Prover Commitment:} \nir{just "Prover Commitment"}\omri{done} \label{qzk:prover_commit}
			$\zkP$ sends non-interactive commitments to the witness $\wit$ and to a string of zeros in the length of an SFE secret key $\sfedk$: $\cmt_1 \gets \Com(1^\secp, w)$, $\cmt_2 \gets \Com(1^\secp, 0^{|\sfedk|})$.
			
			\item {\bf Extractable Commitment to Verifier Challenge:} \label{qzk:verifier_commit}
			\nir{Extractable Commitment to Verifier Challenge}\omri{done}
			\nir{make this the first item in the list of steps.}\omri{done}
			\begin{enumerate}
				\item
				$\zkV$ computes a challenge $\beta \gets \SigV$.
				
				\item \label{qzk:verifier_CC}
				$\zkV$ computes $s \gets \{ 0, 1 \}^{\secp}$, $t \gets \{ 0, 1 \}^{\secp}$, $(\qhepk, \qhesk) = \qheG(1^{\secp}; r)$ where $r$ is the sampled randomness for the QFHE key generation algorithm\omri{change to $\qhesk$ in the future.}. $\zkV$ sends
				$$
				\qhepk, \enspace
				\ciph_{\zkV}\gets \qheE_{\qhepk}(t), \enspace
				\obfC \gets \Obf\Big(\oCC\big[ \qheD_{\qhesk}(\cdot), s, (r, \beta) \big]\Big)\enspace .
				$$
				
				\nir{I think you should just give the secret key and not r (see corresponding remark in the description of the simulator)}
				
				\item \label{qzk:prover_sfe}
				$\zkP$ computes $\sfedk \gets \sfeGen(1^\secp)$ and sends $\ciph_{\zkP} \gets \sfeEnc_{\sfedk}(0^\secp)$.
				
				\item \label{qzk:verifier_response}
				$\zkV$ sends $\evciph \gets \sfeEval\Big(\oCC\big[ \mathsf{Id(\cdot)}, t, s \big],\, \ciph_{\zkP} \Big)$, where $\mathsf{Id(\cdot)}$ is the identity function.
			\end{enumerate}
			
			\item {\bf Sigma Protocol Execution:} \nir{Sigma Protocol Execution (in any case messages exchange should be message exchange)}\omri{done}
			\begin{enumerate}
				\item \label{qzk:prover_alpha}
				$\zkP$ computes $(\alpha, \tau) \gets \SigP_1(\ins, \wit)$ and sends $\alpha$.
				\item \label{qzk:verifier_beta}
				$\zkV$ sends the challenge $\beta$. \nir{$\zkV$ sends the challenge $\beta$.}\omri{done}
			\end{enumerate}
			
			\item {\bf WI Proof by the Verifier:}\label{qzk:verifier_wi}
			$\zkV$ gives a WI proof of the following statement: \nir{this subscript notation is very confusing, lose it. Just say that $\zkV$ gives a WI proof that... etc}
			\begin{itemize}
				\item
				The transcript of the verifier so far is explainable.
				\item
				{\bf Or, }$\cmt_1$ is a commitment to a non-witness $u \notin \mathcal{R}_{\lang}(\ins)$.
				\nir{would be more clear to write or $\cmt_1$ is a commitment to a non-witness $u \notin \mathcal{R}_{\lang}(\ins)$. If you insist, add "namely $\cmt_1=...$ for some $r$."}\omri{done}
			\end{itemize}
		The witness that $\zkV$ uses for the proof is its randomness, that proves that the transcript is explainable.
		\nir{also, say which witness $V$ uses}\omri{done}
			
			\item {\bf WI Proof by the Prover:}\label{qzk:prover_wi}
			\nir{lose confusing subscript notation}
			$\zkP$ gives a WI proof of the following statement:
			\begin{itemize}
				\item $x \in \lang$.
								
				\item {\bf Or,} $\cmt_1$, $\cmt_2$ are both valid commitments and furthermore, $\ciph_{\zkP}$ is a valid SFE encryption and is encrypted with a key $\sfedk$ which is the content of the commitment $\cmt_2$.
			\end{itemize}
		The witness that $\zkP$ uses for the proof is $w$, that proves $x \in \lang$.
			
			\item {\bf Sigma Protocol Completion:} \label{qzk:prover_gamma} \nir{Sigma Protocol Completion?}\omri{yes}
			$\zkP$ sends $\gamma = \SigP_3(\beta, \tau)$.
			
			\item
			{\bf Acceptance:} $\zkV$ accepts if $\SigV(\alpha, \beta, \gamma) = 1$.
			
			\item
			{\bf Reactions to Aborts:} During the protocol, if either party sends a message of an incorrect form or provides a non-convincing WI proof, the other party terminates the interaction.
		\end{enumerate}
	}
	
%
%
%
%
%
%

	\subsection{Quantum Soundness}
	\begin{proposition} [The Protocol is Sound]
		Let $\zkV$ be the verifier from \proref{fig:const_round_qzk}.
		For any quantum polynomial-size prover $\zkmP = \set{\zkmP_\secp, \rho_\secp}_{\secp \in \Nat}$, there exists a negligible function $\mu(\cdot)$ such that for any security parameter $\secp\in \Nat$ and any $\ins \in \zo^\secp\setminus\lang$,
		\begin{align*}
		\Pr\left[ \view_{\zkV}\prot{\zkmP_\secp(\rho_\secp)}{\zkV}(\ins) = 1 \right] \leq \mu(\secp)\enspace.
		\end{align*}
	\end{proposition}

	\nir{Use Proposition. Lemma is typically used for mathematical statement that can probably be used elsewhere. }\omri{ok, changing all lemmas that match this description as "Proposition".}

	\begin{proof}
		
\nir{I think that the structure of the proof could be organized better. A reduction to the soundness of the sigma protocol is a bit unnatural in the sense that the sigma protocol is information theoretically sound, so it's weird that you're bothering to construct and efficient prover $\zkmP$. Instead, I would describe a sequence of experiments, going from the real interaction to an ideal interaction, show that the chance of winning is preserved through the hybrids, and that this chance is negligible in the last hybrid, due to the (statistical) soundness of the sigma protocol. This is not a must, but in my eyes would make things easier to read. In particular, it allows understanding exactly where each one of the tools is used.}
\omri{done?}

		Let $\zkmP = \{ \zkmP_\secp, \rho_\secp \}_{\secp \in \Nat}$ a polynomial-size quantum prover and let $x = \{ x_\secp \}_{\secp \in \Nat}$ be a sequence such that $\forall \secp \in \Nat : x_\secp \in \{ 0, 1 \}^\secp \setminus \lang$.
		We prove soundness by a hybrid argument. We consider a series of hybrid processes with output over $\{ 0, 1 \}$, starting from $\view_{\zkV}\prot{\zkmP(\rho)}{\zkV}(\ins)$ the output distribution of $\zkV$ in the interaction with $\zkmP$.
		The proof will show that the probability to output $1$ is negligible, which proves the soundness of the protocol.
		
		We assume without the loss of generality that the first prover message is deterministic, and that the commitments $\cmt_1$, $\cmt_2$ it sends are both valid commitments and furthermore, there is some SFE secret key $\sfedk \in \sfeGen(1^\secp)$ such that $\cmt_2 \in \Com(1^\secp, \sfedk)$.
		First note that if the above property is false, then the whole WI statement of the prover is false (because the first statement in $\zkmP$'s OR statement, that claims $x \in \lang$, is always false in the case of a cheating prover).
		
		This assumption is without the loss of generality because we can consider a new prover that chooses the first message (and quantum inner state at the end of this message) as the message that maximizes the probability that $\zkV$ outputs $1$.
		If this message is such that $\cmt_1$, $\cmt_2$ are not consistent with the prover's WI statement, then by the soundness of the proof that $\zkmP$ gives, with overwhelming probability $\zkV$ outputs $0$ and soundness already holds.
		
		As a final note, observe that because $\cmt_1$, $\cmt_2$ are consistent with the prover's WI statement, $\cmt_1$ is necessarily a commitment to a non-witness $u \notin \rel_{\lang}(x)$, and denote by $r_u$ a string s.t. $\cmt_1 = \Com(1^\secp, u; r_u)$.
	
		Define the following hybrid distributions.
		\begin{itemize}
			\item $\Hyb_0:$
			The output distribution of $\view_{\zkV}\prot{\zkmP(\rho)}{\zkV}(\ins)$.
			
			\item $\Hyb_1:$
			This hybrid process is identical to $\Hyb_0$, with the exception that in step \ref{qzk:verifier_wi} (verifier WI), $\zkV$ uses the information $(u, r_u)$ as witness for its WI statement, instead of the witness that shows its transcript is explainable.
			
			\item $\Hyb_2:$
			This hybrid process is identical to $\Hyb_1$, with the exception that $\zkV$ obtains $\sfedk$, and when it gets the prover message $\ciph_{\zkP}$ in step \ref{qzk:prover_sfe}, it performs the following check: If $t = \sfeD_{\sfedk}( \ciph_{\zkP} )$ then the process halts and outputs $\bot$, otherwise the interaction carries on regularly.
			
			\item $\Hyb_3:$
			This hybrid process is identical to $\Hyb_2$, except that in step \ref{qzk:verifier_response} when the verifier responds with an SFE evaluation, instead of performing an SFE evaluation of $\CC{\mathsf{Id(\cdot)}, t, s}$, the verifier performs an SFE evaluation of $C_\bot$, a circuit that always outputs $\bot$.
			
			\item $\Hyb_4:$
			This hybrid process is identical to $\Hyb_3$, except that the verifier does not perform the check at step \ref{qzk:prover_sfe} like described in $\Hyb_2$.
			
			\item $\Hyb_5:$
			This hybrid process is identical to $\Hyb_4$, except that in step \ref{qzk:verifier_CC} where $\zkV$ sends its first message, the reward value of the CC program $\obfC$ it uses is $(r, 0^{|\beta|})$ instead of $(r, \beta)$.
		\end{itemize}
		
		We now explain why each consecutive pair of the distributions above are statistically indistinguishable (recall that for a pair of distributions over a single bit, they are statistically indistinguishable iff they are computationally indistinguishable).
		We will then use the last process $\Hyb_5$ to show that soundness follows from the soundness of the sigma protocol $(\SigP, \SigV)$.
		\begin{itemize}
			\item $\Hyb_0 \approx_{s} \Hyb_1:$
			Follows from the witness indistinguishability property of the WI proof that the verifier gives.
			
			\item $\Hyb_1 \approx_{s} \Hyb_2:$
			Follows from Claim \ref{claim:find_t}, which says that the probability that $\ciph_{\zkP}$ is an encryption of (the correct) $t$ with the secret key $\sfedk$ (that is inside $\cmt_2$) is negligible, and thus the erasure of such cases can't be noticed by a distinguisher.
			
			\item $\Hyb_2 \approx_{s} \Hyb_3:$
			As a basic explanation, this indistinguishability follows from the combination of the circuit privacy property of the SFE and the soundness of the WI proof that $\zkmP$ gives.
			
			As a fuller explanation, assume toward contradiction there's a distinguisher $\Disting$ that tells the difference between the two distributions, and by an averaging argument, consider the transcript (and inner quantum state of $\zkmP$) generated at the end of step \ref{qzk:prover_sfe} (where $\zkmP$ sends $\ciph_\zkP$), which maximizes $\Disting$'s distinguisability adventage - other than the prover's ciphertext $\ciph_{\zkP}$, this transcript fixes $t, s$, which in turn fix the circuit $\oCC\big[ \mathsf{Id(\cdot)}, t, s \big]$.
			We now consider three cases, and explain why we get a contradiction in each of them.
			\begin{enumerate}
				\item $\ciph_{\zkP} \in \sfeEnc_{\sfedk}(t)$:
				In this case, no matter what will be generated next in the transcript, the output will be $\bot$ (by the check described in $\Hyb_2$), thus it is impossible to distinguish the outputs of the two processes and we get a contradiction.
				
				\item $\exists y \in \{ 0, 1 \}^\secp \setminus \{t\}$ s.t. $\ciph_{\zkP} \in \sfeEnc_{\sfedk}(y)$:
				In this case, $\bot = \oCC\big[ \mathsf{Id(\cdot)}, t, s \big](y)$ and thus we get a contradiction by using the circuit privacy property of the SFE.
				
				\item Else:
				In that case, either $\ciph_{\zkP}$ is a ciphertext encrypted with some other SFE key $\sfedk'$, or it is not a valid ciphertext at all and in any case, it is not a valid ciphertext encrypted with the secret key $\sfedk$.
				In that case, the WI statement of the prover is necessarily false, and thus a $1$ output happens with at most negligible probability in both cases (by the soundness of the WI proof of $\zkmP$), thus the statistical distance between them is at most negligible, in contradiction.
			\end{enumerate}
			
			\item $\Hyb_3 \approx_{s} \Hyb_4:$
			Follows from the same reasoning as in the indistinguishability $\Hyb_1 \approx_{s} \Hyb_2$.
			
			\item $\Hyb_4 \approx_{s} \Hyb_5:$
			Follows from the simulation property (obfuscation security) of the CC obfuscation scheme.
		\end{itemize}
		
		Now, assume toward contradiction that $\zkmP$ succeeds in making the verifier accept with some noticeable probability $\varepsilon(\secp)$, that is, the probability for the output $1$ in $\Hyb_0$ is noticeable.
		$\Hyb_0 \approx_s \Hyb_5$, and thus the probability for the output $1$ in $\Hyb_5$ is also noticeable.
		Finally, we get a contradiction to the soundness of the sigma protocol $(\sigmaP, \sigmaV)$, by using the prover sigma protocol messages from steps \ref{qzk:prover_alpha}, \ref{qzk:prover_gamma} as messages to convince a sigma protocol verifier $\sigmaV$. Since the probability that the verifier $\zkV$ is convinced in $\Hyb_5$ is noticeable, and such verifier is convinced if and only if the sigma protocol verifier is convinced, we get our contradiction.

	\end{proof}

	%
	\begin{claim}[Producing an SFE Encryption of $t$ with $\sfedk$ is Hard] \label{claim:find_t}
		Let $\zkmP = \set{\zkmP_\secp, \rho_\secp}_{\secp \in \Nat}$ be a quantum polynomial-size prover in \proref{fig:const_round_qzk}, sending a deterministic first message $\cmt_1$, $\cmt_2$ where there exists $\sfedk \in \sfeGen(1^\secp)$ s.t. $\cmt_2 \in \Com(1^\secp, \sfedk)$.
		Then there exists a negligible function $\mu(\cdot)$ such that the probability that $t = \sfeD_{\sfedk}(\ciph_{\zkP})$ is bounded by $\mu(\secp)$.
	\end{claim}
	
	\begin{proof}
		The proof will be based on the security of the QFHE, and on the security of the CC obfuscation.
		We start with observing that the security of the QFHE implies that for every efficient quantum adversary $\A = \{ \A_\secp, \rho_\secp \}_{\secp \in \Nat}$, there's a negligible function $\mu(\cdot)$ s.t. the probability that $\A$ finds $t$ given $\qhepk, \ciph \gets \qheE_{\qhepk}(t)$ for a uniformly random $t \gets \{ 0, 1 \}^{\secp}$, is bounded by $\mu(\secp)$ - we will assume toward contradiction that our claim is false, that is, we assume that $\zkmP$ sends $\ciph_{\zkP}$ s.t. $t = \sfeD_{\sfedk}(\ciph_{\zkP})$ with noticeable probability (for infinitely many security parameters), and get a contradiction with the last claim about the hardness of finding a random encrypted $t$.
		
		Using $\zkmP$ and the fact that $t = \sfeD_{\sfedk}(\ciph_{\zkP})$ with noticeable probability, we now describe a (non-uniform) algorithm $\A$ that finds $t$ given $\qhepk, \ciph \gets \qheE_{\qhepk}(t)$ for $t \gets \{ 0, 1 \}_{\secp}$ and thus breaks the security of the QFHE.
		As part of the non-uniform advice of $\A$, it will have the secret SFE key $\sfedk$, which is fixed.
		Given $\qhepk, \ciph \gets \qheE_{\qhepk}(t)$, the algorithm $\A$ will use the simulator $\ccSim^{CC}$ (from the simulation property of the CC obfuscation) and send to $\zkmP$ the following, as the protocol message sent at step \ref{qzk:verifier_CC},
		$$
		\qhepk, \; \ciph, \; \ccSim^{\text{CC}}(1^{|\qheD|}, 1^{\ell + |\beta|}, 1^{\secp}) \enspace ,
		$$
		where $\ell$ is the randomness complexity of the QFHE key generation algorithm $\qheG$.
		$\zkmP$ will respond with $\ciph_{\zkP}$, and $\A$ uses $\sfedk$ to output $\sfeD_{\sfedk}(\ciph_{\zkP})$.
		
		We now use the simulation property guarantee of the CC obfuscation: Note that the probability that $\zkmP$ outputs $\ciph_{\zkP}$ s.t. $\sfeD_{\sfedk}(\ciph_{\zkP}) = t$ in the simulated setting, where $\A$ sends $\ccSim^{\text{CC}}(1^{|\qheD_{0^{|\qhesk|}}|}, 1^{|\qhesk| + |\beta|}, 1^{\secp})$ instead of $\obfC$, is negligibly close to the probability that it outputs $\ciph_{\zkP}$ s.t. $\sfeD_{\sfedk}(\ciph_{\zkP}) = t$ in the regular setting where it gets $\obfC$ - this is due to the security of the CC obfuscator.
		Because we know that in the regular interaction, $\zkmP$ sends $\ciph_{\zkP}$ s.t. $t = \sfeD_{\sfedk}(\ciph_{\zkP})$ with a noticeable probability, this implies that so does $\A$, in contradiction.
	\end{proof}

\omri{Old version of soundness proof is inside a false statement in code. I suggest to get rid of it.}

	\subsection{Quantum Zero-Knowledge} \label{subsec:qzk}
\nir{General comment: I think that the choice to not to abstract out explainability and have a separate compiler makes the whole thing way more difficult to read and analyze. I strongly suggest that for the journal version we separate the two layers. In retrospect I would also abstract out the extractable commitment. I would: (1) construct extractable commitments against explainable senders and semi-malicious receivers (no WI proofs needed). Then (2) use that to construct ZK against explainable verifiers and arbitrary provers (uses WI proofs from the prover and non-uniformity), and then (3) compile to ZK against arbitrary verifier using WI from verifier.}

	We construct a quantum polynomial-time universal simulator $\zkS$ that for a quantum verifier $\zkmV$, an arbitrary quantum auxiliary input $\rho$ and an instance in the language $x \in \lang$, simulates the output distribution of the verifier after the real interaction, $\view_{\zkmV}\prot{\zkP}{\zkmV(\rho)}(\ins)$.
	Throughout this section, a malicious verifier $\zkmV$ is modeled as a family of non-uniform quantum circuits with auxiliary quantum input, consistently with the rest of the paper.
	
	\nir{Throughout this section $\zkmV$ should be a family of circuits consistently with the rest of the paper}\omri{ok?}
	
	\paragraph{High-Level Description of Simulation.}
	Our simulation is composed as follows.
	We first describe two simulators, $\zkS_{\mathrm{a}}$ and $\zkS_{\mathrm{na}}$ that try to simulate different types of transcripts, specifically, $\zkS_{\mathrm{a}}$ will try to simulate an aborting interaction, and $\zkS_{\mathrm{na}}$ will try to simulate a non-aborting interaction.
	By "aborting interaction" and "non-aborting interaction" we formally mean the following:
	\nir{I suggest to use "aborting" and "non-aborting" terminology and instead of $\zkS_0,\zkS_1,\zkS'$ use $\zkS_{\mathrm{a}}$ $\zkS_{\mathrm{na}}$ and $\zkS_{\mathrm{comb}}$. Whatever you do, lose the ' symbol here.}\omri{done}
	\begin{itemize}
		\item
		{\bf An aborting interaction} is one where the verifier $\zkmV$ either aborts before the end of step \ref{qzk:prover_wi} (prover WI), or fails to prove its WI statement in step \ref{qzk:verifier_wi}.
		
		\item
		{\bf A non-aborting interaction} is one that is not aborting.
		More precisely, a non-aborting interaction is one where the verifier did not abort before the end of step \ref{qzk:prover_wi} (prover WI), and also succeeded in proving its WI statement in step \ref{qzk:verifier_wi}.
	\end{itemize}
	
	Our next step will be to describe a unified simulator $\zkS_{\mathrm{comb}}$ that randomly chooses $b \gets \{ \mathrm{a}, \mathrm{na} \}$ and then uses $\zkS_b$ to simulate the interaction.
	We will prove that on input $(x, \zkmV, \rho)$, $\zkS_{\mathrm{comb}}$ outputs a quantum state that is computationally indistinguishable from $\view_{\zkmV}\prot{\zkP}{\zkmV(\rho)}(\ins)$, with the following exception: $\zkslS$ outputs a quantum state $\widetilde{\view}$ that indistinguishable from the real verifier output $\view_{\zkmV}\prot{\zkP}{\zkmV(\rho)}(\ins)$ conditioned on  $\widetilde{\view}\neq \zkFail$. Furthermore $\widetilde{\view} \neq \zkFail$ with probability negligibly close to $1/2$.
	In other words, $\zkslS$ is going to succeed simulating only with probability (negligibly close to) $\frac{1}{2}$.
	\nir{Last sentence should be more clear. $\zkslS$'s outputs a quantum state $\widetilde{\view}$ that indistinguishable from the real verifier output $\view_{\zkmV}\prot{\zkP}{\zkmV(\rho)}(\ins)$ conditioned on  $\widetilde{\view}\neq \mathrm{fail}$. Furthermore $\widetilde{\view} \neq \mathrm{fail}$ with probability negligibly close to $1/2$.}
	
	We further show that $\zkslS$ satisfies the required conditions for applying Watrous' quantum rewinding lemma so that the success probability can be amplified from $\approx 1/2$ to $\approx 1$.
	\nir{Last sentence is out of context. We further show that $\zkslS$ satisfies the required conditions for applying Watrous' quantum rewinding lemma so that the success probability can be amplified from $\approx 1/2$ to $\approx 1$.}

\paragraph{The Actual Proof.} We start by describing the above mentioned simulators.
	
	\medskip
	\paragraph{$\zkS_{\mathrm{a}}(x, \zkmV, \rho):$} \nir{make step titles consistent with the protocol}
	\begin{enumerate}
		\item {\bf Simulation of Initial Commitments and Verifier Message:}
		\begin{enumerate}
			\item
			$\zkS_{\mathrm{a}}$ computes $\sfedk \gets \sfeGen(1^\secp)$ and sends to $\zkmV$ the commitments $\cmt_1 \gets \Com(1^\secp, 0^{|\wit|})$, $\cmt_2 \gets \Com(1^{\secp}, \sfedk)$.
			
			\item \label{simulation:verifier_CC}
			$\zkmV$ sends $\qhepk$, $\ciph_{\zkmV}$, $\obfC$.		
		\end{enumerate}
		
		\item {\bf Trying to get an Abort:}
		$\zkS_{\mathrm{a}}$ interacts with $\zkmV$ as the honest prover $\zkP$ until the end of step \ref{qzk:prover_wi} of the original protocol, with exactly 2 differences:
		\begin{itemize}
			\item
			The message $\alpha$ at step \ref{qzk:prover_alpha} is generated by the sigma protocol simulator $\alpha \gets \SigS(\ins, 0^{|\beta|})$, and not by the sigma protocol prover.
			
			\item
			At step \ref{qzk:prover_wi}, the witness used to prove the WI statement is for the second statement in the OR expression (that the commitments $\cmt_1, \cmt_2$ are valid and consistent), and not the first (that $\ins \in \lang$).
		\end{itemize}
	\nir{I think this is making it hard for the reader, who doesn't really remember the prover. I would write all the steps explicitly and mark in red those different from the real prover (including the previous prover commitment step)}
		
		\item {\bf Simulation Verdict:}
		If at some point $\zkmV$ aborts or fails in its WI proof, $\zkS_{\mathrm{a}}$ outputs the aborting verifier's output. Otherwise, $\zkS_{\mathrm{a}}$ outputs $\zkFail$.
		\nir{This description of the simulator should be as concise as possible (not the place for a high-level explanation). For example in the last item can be: "If at some point $\zkmV$ aborts or fails in its WI proof, $\zkS_0$ outputs the aborting verifier's output. Otherwise, $\zkS_0$ outputs $\mathrm{fail}$."}\omri{done}
	\end{enumerate}

	\medskip
	$\zkS_{\mathrm{na}}(x, \zkmV, \rho):$
	\begin{enumerate}
		
		\item {\bf Simulation of Initial Commitments and Verifier Message:}
		\begin{enumerate}
			\item \label{simulation:prover_commit}
			$\zkS_{\mathrm{na}}$ computes $\sfedk \gets \sfeGen(1^\secp)$ and sends to $\zkmV$ the commitments $\cmt_1 \gets \Com(1^\secp, 0^{|\wit|})$, $\cmt_2 \gets \Com(1^{\secp}, \sfedk)$.
			
			\item \label{simulation:verifier_CC}
			$\zkmV$ sends $\qhepk$, $\ciph_{\zkmV}$, $\obfC$.		
		\end{enumerate}
		
		\item {\bf Non-Black-Box Extraction Attempt:} \label{simulation:extraction}
		\nir{remove  this sentence, already captured by title. Should be clear but concise.}\omri{done.}
		\begin{enumerate}
			\item
			$\zkS_{\mathrm{na}}$ computes 
			$$
			r_1 \gets \{ 0, 1 \}^*,\hspace{5mm} \ciph_{t}^{\text{SFE}} = \qheEv_{\qhepk}(\sfeEnc_{\sfedk}(\cdot; r_1), \, \ciph_{\zkmV})\enspace.
			$$ 
			$\zkS_{\mathrm{na}}$ also encrypts $\rho^{(1)}$, the inner (quantum) state of the verifier after its first message:
			$$
			\ciph_{\rho^{(1)}} \gets \qheQE_{\qhepk}(\rho^{(1)})\enspace.
			$$
			
			\nir{This is not the place to analyze the validity. If you want can put a short comment in parenthesis and different color/font, like you do when you write comments in code. (I would remove this altogether, the notation is already expressive enough)}\omri{done.}
			
			\nir{Breaking the homomorphic computation this way is a bit cumbersome and breaks the intuition. I suggest to define the circuit $\zkmV_2$ that given value $\tilde t$ and verifier state after first message $\rho_1$ emulates the second verifier message given an sfe enc of $\tilde t$. Then perform one homomorphic computation of this circuit all at once, no need to think about intermediate homomorphic states (in fact, you only defined homomorphism over fresh ct's, which is fine and sufficient).}
			\nir{the notation $\ciph_{t}^{\text{SFE}}$ is confusing, it's an FHE CT of an SFE CT (if you don't break the FHE evaluation, then this confusion will go away)}
			
			\item \label{simulation:extraction_second_step}
			$\zkS_{\mathrm{na}}$ performs a quantum homomorphic evaluation of the verifier's response. It computes,
			$$
			\left( \ciph_{s}^{\text{SFE}}, \ciph_{\rho^{(2)}} \right) \gets \qheEv_{\qhepk}\left(\zkmV, \big( \ciph_{t}^{\text{SFE}}, \ciph_{\rho^{(1)}} \big)\right) \enspace .
			$$
			\nir{same...validity analysis shouldn't be mixed with the simulator's description.}\omri{done}
			
			\item \label{simulation:extraction_last_step}
			$\zkS_{\mathrm{na}}$ computes $\ciph_s \gets \qheEv_{\qhepk}\left(\sfeD_{\sfedk}(\cdot), \ciph_{s}^{\text{SFE}} \right)$, and then computes $(r, \beta') = \obfC(\ciph_s)$.
			
			\nir{here the homomorphic evaluation should cease, and the CC evaluation should be a separate logical step}
			
			\item \label{simulation:extraction_check}
			$\zkS_{\mathrm{na}}$ checks validity: $(\qhepk', \qhesk) = \qheG(1^\secp ; r)$, if $\qhepk' \neq \qhepk$ then it halts simulation and outputs $\zkFail$.
			Otherwise, $\zkS_{\mathrm{na}}$ obtains the inner state of $\zkmV$ by decryption: $\rho^{(2)} \gets \qheQD_{\qhesk}(\ciph_{\rho^{(2)}})$.
			Additionally, $\zkS_{\mathrm{na}}$ simulates the missing transcript (for the verifier to later prove that its messages were explainable): for the prover message at step \ref{qzk:prover_sfe} it inserts $\ciph_t = \sfeEnc_{\sfedk}(t; r_1)$, and for the verifier message at step \ref{qzk:verifier_response} it inserts $\evciph_s = \qheD_{\qhesk}(\ciph_{s}^{\text{SFE}})$.
			
			\nir{I don't think this check is needed. At this point you're assuming the verifier is explainable, if later it will fail in the WI you're anyhow going to announce $\mathrm{fail}$. You can just give the secret key instead of $r$.}
			\nir{just define the transcript as part of the output of the circuit that you're homomoprhically evaluating, and throw away the last line.}
		\end{enumerate}
		
		\item {\bf Sigma Protocol Messages Simulation:}
		\begin{enumerate}
			\item \label{simulation:sigma_simulation}
			$\zkS_{\mathrm{na}}$ executes the sigma protocol simulator $(\alpha, \gamma) \gets \SigS(x, \beta')$ and sends $\alpha$ to $\zkmV$.
			
			\item \label{simulation:verifier_sends_beta}
			$\zkmV$ returns $\beta$.
		\end{enumerate}
		
		\item {\bf WI Proof by the Malicious Verifier:} \label{simulation:verifier_wi}
		$\zkS_{\mathrm{na}}$ takes the role of the honest prover $\zkP$ in the WI proof $\zkmV$ gives.
		If $\zkmV$ fails to prove the statement, the simulation fails and the output is $\zkFail$.
		
		\item {\bf Simulation of the Prover's WI Proof and Information Reveal:} \label{simulation:prover_wi}
		$\zkS_{\mathrm{na}}$ gives $\zkmV$ a WI proof using the witness that shows $\cmt_1$, $\cmt_2$ are both valid commitments (and that $\cmt_2$ is a commitment to the SFE key $\sfedk$ used in step \ref{qzk:prover_sfe}).
		After the proof, $\zkS_{\mathrm{na}}$ sends $\gamma$ to $\zkmV$.
		
		\item {\bf Simulation Verdict:}
		If $\zkmV$ completed interaction without aborting and gave a convincing WI proof, $\zkS_{\mathrm{na}}$ outputs the verifier's output. Otherwise, $\zkS_{\mathrm{na}}$ outputs $\zkFail$.
		\nir{make concise as for previous sim.}\omri{done}
	\end{enumerate}

	\medskip
	\paragraph{$\zkslS(x, \zkmV, \rho):$} Sample $b \gets \{ 0, 1 \}$ and execute $\zkS_b(x, \zkmV, \rho)$.

\nir{doesn't make sense to have an enumerated list for a single item}\omri{done}

	\medskip
	\paragraph{$\zkS(x, \zkmV, \rho):$}
	\begin{enumerate}
		\item
		Generate the circuit $\zkslSmV$, which is the circuit implementation of $\zkslS$, with hardwired input $x$, $\zkmV$, that is, the only input to $\zkslSmV$ is the quantum state $\rho$.
		
		\item
		Let $\Wat$ be the algorithm from Lemma \ref{lem:watrous}.
		The output of the simulation is $\Wat(\zkslSmV, \rho, \secp)$.
	\end{enumerate}

	\medskip
	\paragraph{Proof of Simulation Validity.}\nir{edited the text below a bit}
	We now turn to prove that the simulated output $\zkS(x, \zkmV, \rho)$ is computationally indistinguishable from $\view_{\zkmV}\prot{\zkP}{\zkmV(\rho)}(\ins)$.
	This is done in several steps:
	\begin{enumerate}
		\item
		{\bf Simulating aborting interactions:} Let $\zkmV_{\mathrm{a}}$ \nir{replace with $\zkmV_{\mathrm{a}}$}\omri{done} be the augmented verifier that is identical to $\zkmV$, with the exception that if $\zkmV$ does not abort, $\zkmV_{\mathrm{a}}$ outputs $\zkFail$.\nir{I suggest to change $\bot$ to $\mathrm{fail}$.}\omri{done} Then the output of $\zkS_{\mathrm{a}}$ is indistinguishable from the output of $\zkmV_{\mathrm{a}}$ in a real interaction.

		\item
		{\bf Simulating non-aborting interactions:} Let $\zkmV_{\mathrm{na}}$ \nir{replace with $\zkmV_{\mathrm{na}}$}\omri{done} be the augmented verifier that is identical to $\zkmV$, with the exception that if $\zkmV$ aborts, $\zkmV_{\mathrm{na}}$ outputs $\zkFail$.\nir{I suggest to change $\bot$ to $\mathrm{fail}$.}\omri{done} Then the output of $\zkS_{\mathrm{na}}$ is indistinguishable from the output of $\zkmV_{\mathrm{na}}$ in a real interaction.
		\item
		The above two statements imply:
		\begin{itemize}
			\item
			$\zkslS \neq \zkFail$ with probability negligibly close to $\frac{1}{2}$, for every verifier and auxiliary input $\rho$.
			
			\item The output of $\zkmV$ in a real interaction is indistinguishable from the output of $\zkslS$ conditioned on $\zkslS \neq \zkFail$.
 \nir{$\bot$ to $\mathrm{fail}$}\omri{done}
		\end{itemize}
	These in turn imply that we can use Watrous' quantum rewinding lemma in order to amplify $\zkslS$ into a full-fledged simulator $\zkS$
	\end{enumerate}

\nir{lemma => proposition}\omri{done}
	\begin{proposition} [Similarity of Aborting Part] \label{claim:aborting_similarity}		
		Let $\zkmV = \set{\zkmV_\secp, \rho_\secp}_{\secp \in \Nat}$ a polynomial-size quantum verifier, and let $\view_{\zkmV_{\mathrm{a}}}$ be the verifier's output at the end of protocol such that if $\zkmV$ does not abort, the output is $\zkFail$. Then,
		$$
		\{ \view_{\zkmV_{\mathrm{a}}}\prot{\zkP(w)}{\zkmV_\secp(\rho_{\secp})}(\ins)\}_{\secp, \ins, w}
		\approx_{c}
		\{ \zkS_{\mathrm{a}}(\ins, \zkmV_\secp, \rho_\secp)\}_{\secp, \ins, w} \enspace ,
		$$
		where $\secp\in\Nat$, $\ins\in\lang\cap\zo^\secp$, $w \in \mathcal{R}_{\lang}(x)$.
		
		\nir{the notation would make more sense as $\view_{\zkmV}^A$ also (identify previous simulator name with this $A->0$ here or $0->A$ there)}\omri{for now I matched it with the high-level simulators description.}
		\nir{did not end with a verifier abort}\omri{done, "successful" is switched to "non-aborting" in entire section.}
		\nir{remove rest of sentence}\omri{done}
		\nir{there's inconsistency above with $\secp$ subscripts for $\rho$ and also for $\ins$}\omri{fixed, I think.}
	\end{proposition}

	\begin{proof}
		We prove the claim by a hybrid argument, specifically, we consider hybrid distributions, all of which will be computationally indistinguishable.
		\nir{Why not $H$ or $Hyb$, which is more standard. Also use plain subscripts instead of $H^{(i)}$}\omri{done}
		\begin{itemize}
			\item $\Hyb_0:$ The output distribution of $\zkS_{\mathrm{a}}$.
			
			\item $\Hyb_1:$ This hybrid process is identical to $\Hyb_0$, with the exception that when the simulator gives a WI proof in the simulation, it uses the witness $w$ in the proof, that proves the first statement in the OR statement ($x \in \lang$) rather then the second statement.
			\nir{why do you have this? the simulation never reaches this step anyhow, if the verifier didn't abort, then it outputs $\mathrm{fail}$}
			\nir{reader does not remember what is step 5, hence you should refer to it by first its title the "Prover WI (step 5)" and be more explicit }\omri{I hope that now it is better.}
			
			\item $\Hyb_2:$ This hybrid process is identical to $\Hyb_1$, with the exception that $\cmt_1$ is a commitment to $\wit$ rather than to $0^{|\wit|}$.
			
			\item $\Hyb_3:$ This hybrid process is identical to $\Hyb_2$, with the exception that the message $\alpha$ that the simulator sends to $\zkmV$ is generated by the actual sigma protocol $(\alpha, \tau) \gets \SigP_1(\ins, \wit)$, and not by the sigma protocol simulator $\SigS(\ins, 0^{|\beta|})$.
			Note that this process is exactly $\view_{\zkmV_{\mathrm{a}}}\prot{\zkP(w)}{\zkmV(\rho)}(\ins)$.
			\nir{is this consistent with the sigma protocol notation from the definition?}\omri{I think it is now.}
		\end{itemize}
		
		It is \nir{=> It is}\omri{done} left to reason about the indistinguishability between each two subsequent hybrids\nir{two subsequent hybrids}\omri{done}.
		\begin{itemize}
			\item $\Hyb_0 \approx_{c} \Hyb_1:$
			Follows from the witness-indistinguishability property of the WI proof that the simulator gives (as the prover) in step \ref{qzk:prover_wi} of the protocol.
			
			\item $\Hyb_1 \approx_{c} \Hyb_2:$
			Follows from the hiding property of the commitment $\cmt_1$.
			
			\item $\Hyb_2 \approx_{c} \Hyb_3:$
			Follows from Claim \ref{first_message_indisting}.
		\end{itemize}
		\nir{People cannot be expected to remember the hybrids. Before proving the indistinguishability of each two you need to remind them of the difference, even in just a few words.}
	\end{proof}
	
	\omri{This is the heaviest proof in the paper, it is exhausting and technical. I wasn't sure how to make it significantly simpler without paying in its "convincingness".}

	\begin{proposition} [Similarity of Non-Aborting Part] \label{claim:non_aborting_similarity}
		Let $\zkmV = \set{\zkmV_\secp, \rho_\secp}_{\secp \in \Nat}$ a polynomial-size quantum verifier, and let $\view_{\zkmV_{\mathrm{na}}}$ be the verifier's output at the end of protocol such that if $\zkmV$ aborts, the output is $\zkFail$. Then,
		$$
		\{ \view_{\zkmV_{\mathrm{na}}}\prot{\zkP(w)}{\zkmV_\secp(\rho_{\secp})}(\ins)\}_{\secp, \ins, w}
		\approx_{c}
		\{ \zkS_{\mathrm{na}}(\ins, \zkmV_\secp, \rho_\secp)\}_{\secp, \ins, w} \enspace ,
		$$
		where $\secp\in\Nat$, $\ins\in\lang\cap\zo^\secp$, $w \in \mathcal{R}_{\lang}(x)$.
	\nir{same comments as for the previous lemma}\omri{Hopefully they are addressed as well?}
	\end{proposition}

	\begin{proof}
		We prove the claim by a hybrid argument, specifically, we consider hybrid distributions, all of which will be computationally indistinguishable.
		\nir{same comments about notation as before}\omri{corrected.} \nir{as before, when referring to steps aim to use their semantic name and just step number.}

\omri{NIR'S SUGGESTION FOR SIMPLER PROOF IN FALSE STATEMENT INSIDE CODE. For next version.}

		\begin{itemize}
		\item $\Hyb_0:$ The output distribution of $\zkS_{\mathrm{na}}$.
		
		\item $\Hyb_1:$ This hybrid process is identical to $\Hyb_0$, with the exception that when the simulator gives a WI proof in the simulation, it uses the witness $w$ in the proof, that proves the first statement in the OR statement ($x \in \lang$) rather then the second statement.
		
		\item $\Hyb_2:$ This hybrid process is identical to $\Hyb_1$, with the exception that $\cmt_1$ is a commitment to the witness $\wit$ rather than to $0^{|\wit|}$, and $\cmt_2$ is a commitment to $0^{|\sfedk|}$ rather than to the generated SFE key $\sfedk \gets \sfeGen(1^\secp)$.
		
		\item $\Hyb_3:$ This hybrid process is identical to $\Hyb_2$, with the exception that if the verifier's message $\beta$ from part \ref{simulation:verifier_sends_beta} of the simulation does not match the extracted $\beta'$ from part \ref{simulation:extraction_last_step} of the simulation, the process halts on the spot and outputs $\zkFail$.
		
		\item $\Hyb_4:$ This hybrid process is identical to $\Hyb_3$, with the exception that in parts \ref{simulation:sigma_simulation}, \ref{simulation:prover_wi} where the simulator sends $\alpha$ and $\gamma$, instead of computing $\alpha, \gamma$ using $\SigS$, it computes $(\alpha, \tau) \gets \SigP_1(\ins, \wit)$ and $\gamma \gets \SigP_3(\beta, \tau)$.
		\nir{is this consistent with the sigma protocol notation from the definition?}\omri{I think it is now.}
		
		\item $\Hyb_5:$ This hybrid process is identical to $\Hyb_4$, with the exception that it does not perform the check described in $\Hyb_3$, that is, even if the extracted challenge and sent challenge are distinct $\beta'\neq \beta$, the process carries on regularly.
		
		\item $\Hyb_6:$ \nir{this seems more like a proof then a hybrid. Keep your hybrids simple enough to describe in three lines: either break it, or push the complexity to the proof of indistinguishability. Currently this is too hard to read.}At this point in our series of hybrid distributions, we do not use the extracted challenge $\beta'$, and we would like to move to a process that does not perform extraction.
		The current hybrid will still perform extraction, but will not use the extracted information.
		This hybrid process is identical to $\Hyb_5$, with the changes described next.
		If the first verifier message \emph{is not} explainable then the process chooses to fail and outputs $\zkFail$.
		If the first verifier message \emph{is} explainable, note that it fixes a public and secret key pair $(\qhepk, \qhesk) = \qheG(1^\secp; r)$, and a string $s \in \{ 0, 1 \}^{\secp}$ hidden inside the CC program $\obfC$.
		In that case, the process acts like $\Hyb_5$, except that at the end of step \ref{simulation:extraction_second_step} of the simulation, the process inefficiently obtains $\qhesk$\nir{from where? from $\qhepk$? you didn't explicitly define that it sets a unique secret key, so you formally need to say some $\qhesk$ that is consistent with it.}\omri{I think I did explicitly say it: "First, note that if the first verifier message...", but now changed a bit.} and uses it to decrypt $\left( \ciph_{s}^{\text{SFE}}, \ciph_{\rho^{(2)}} \right)$, instead of using the program $\obfC$ to get $\qhesk$.
		The process also inefficiently obtains $s$ and performs a check: if $s \neq \sfeD_{\sfedk}(\qheD_{\qhesk}( \ciph_{s}^{\text{SFE}} ))$\nir{from where does the process get $s$?}\omri{right, I didn't say that $s$ is also fixed. added.} then the process fails and outputs $\zkFail$, and otherwise continues the simulation regularly as in the rest of $\Hyb_5$.
		
		\item $\Hyb_7:$ This process will get rid of extraction altogether and will not perform the homomorphic evaluation of the verifier's response.
		This distributions is the output of a process that acts like $\Hyb_6$, with the exception that if the first verifier message is explainable (in particular, $\ciph_{\zkV}$ is a QFHE encryption of some $t \in \{ 0, 1 \}^{\secp}$), then as the prover message from step \ref{qzk:prover_sfe} of the protocol, the process sends $\sfeEnc_{\sfedk}(t)$.
		If at step \ref{qzk:verifier_response} the verifier responds with $\evciph$ s.t. $s = \sfeD_{\sfedk}(\evciph)$ then the simulation continues regularly as in $\Hyb_6$, and otherwise the process fails and outputs $\zkFail$.
		
		\item $\Hyb_8:$ Like the previous two processes, this process is also inefficient.
		This hybrid process is identical to $\Hyb_7$, with the exception that it does not perform the check on the verifier's response $\evciph$, and continues regularly either way, even when $s \neq \sfeD_{\sfedk}(\evciph)$.
		
		\item $\Hyb_9:$ We now go back to an efficient hybrid process.
		This hybrid process is identical to $\Hyb_8$, with the exception that instead of performing the inefficient check on the verifier's first message from step \ref{simulation:verifier_CC} (and then either halting and outputting $\zkFail$, or sending $\sfeEnc_{\sfedk}(t)$ to $\zkmV$), the process always sends $\sfeEnc_{\sfedk}(0^\secp)$ to $\zkmV$, and continues simuation regularly.
		Observe that this process is exactly $\view_{\zkmV_{\mathrm{na}}}\prot{\zkP(w)}{\zkmV(\rho)}(\ins)$.
	\end{itemize}
		
	We now prove that\nir{prove that}\omri{done} each pair of consecutive distributions are computationally indistinguishable, and our proof is finished.
	\begin{itemize}
		\item $\Hyb_0 \approx_{c} \Hyb_1:$
		This indistinguishability follows from the witness-indistinguishability property of the WI proof that the simulator gives in step \ref{simulation:prover_wi} of the simulation.
		
		\item $\Hyb_1 \approx_{c} \Hyb_2:$
		This indistinguishability follows from the hiding of the commitments $\cmt_1, \cmt_2$ that the simulator gives in step \ref{simulation:prover_commit} of the simulation.
		
		\item $\Hyb_2 \approx_{s} \Hyb_3:$
		\nir{here this wouldn't be true without FHE correctness etc. For some reason you're arguing twice about the correctness of the process...}\omri{I am not sure about this. We are not really using the FHE correctness here yet, as we are not getting rid of extraction yet.}
		The indistinguishability follows from the perfect correctness of both the CC obfuscation and the SFE schemes, along with the soundness of the WI proof.
		Assume toward contradiction that the two distributions are distinguishable and fix, by an averaging argument, the partial transcript $T'$ that is generated at the end of step \ref{simulation:verifier_sends_beta} of the simulation, which maximizes distinguishability.
		We consider two cases for $T'$:
		\begin{itemize}
			\item
			$T'$ is explainable.
			In that case it follows from the perfect correctness of the CC obfuscation and the perfect correctness of the SFE evaluation, that the extracted $\beta'$ and the sent $\beta$ are necessarily equal, and the processes are identical (and have statistical distance of 0), in contradiction.
			
			\item
			$T'$ is not explainable.
			In that case, recall that $\cmt_1$ is a commitment to a witness and thus the statement in the verifier's WI proof is necessarily false.
			By the soundness of the WI proof, $\zkmV$ will fail in proving the statement with overwhelming probability, which implies that with the same probability the output in the process $\Hyb_2$ is $\zkFail$.
			Because with \emph{at least} the same probability, the output in $\Hyb_3$ is also $\zkFail$, the contradiction follows.
		\end{itemize}
			
		\item $\Hyb_3 \approx_{c} \Hyb_4:$
		This indistinguishability follows from the special zero-knowledge property of the sigma protocol.
		
		\item $\Hyb_4 \approx_{s} \Hyb_5:$
		The statistical indistinguishability follows from the exact same reasoning that explains why distributions $\Hyb_2 \approx_{s} \Hyb_3$.
		
		\item $\Hyb_5 \approx_{s} \Hyb_6:$
		This indistinguishability will follow from the perfect correctness of the CC obfuscation, the statistical correctness of the QFHE and from the soundness of the WI proof that $\zkmV$ gives.
		Formally, assume toward contradiction that the two distributions are distinguishable and fix, by an averaging argument, the partial transcript $T'$ and inner quantum state $\rho^{(1)}$ of $\zkmV$ generated at the end of step \ref{simulation:verifier_CC} of the simulation, that maximize the distinguishability.
		Denote by $\tilde{\Hyb_5}$, $\tilde{\Hyb_6}$ the distributions that carry on from the point that $T', \rho^{(1)}$ are fixed, according to $\Hyb_5$, $\Hyb_6$, respectively.
		Consider two cases for $T'$.
		\begin{itemize}
			\item
			$T'$ is not explainable.
			In that case, $\tilde{\Hyb_6}$ outputs $\zkFail$ with probability 1, and by the soundness of the WI proof of the verifier, the proof is going to fail with overwhelming probability (in the process $\tilde{\Hyb_5}$) and with at least the same probability the output is going to be $\zkFail$, and the two distributions will have at most negligible statistical distance, in contradiction.
			
			\item
			$T'$ is explainable, which means that the verifier's first message fixes $t, s, \qhesk$, and also $r_t$ the QFHE encryption randomness s.t. $\ciph_{\zkmV} = \qheE_{\qhepk}(t ; r_t)$.
			Consider the quantum circuit $C$ that for input $(t, \rho^{(1)})$, encrypts $\ciph_t \gets \sfeEnc_{\sfedk}(t)$, executes $(\evciph, \rho^{(2)}) \gets \zkmV(\ciph_t, \rho^{(1)})$, decrypts $s' = \sfeD_{\sfedk}(\evciph)$ and outputs $s', \evciph, \rho^{(2)}$.
			Now, observe the following about the distributions $\tilde{\Hyb_5}$, $\tilde{\Hyb_6}$.
			\begin{itemize}
				\item
				$\tilde{\Hyb_5}$ can be described by the following process: Encrypt $\ciph_{\zkmV} = \qheE_{\qhepk}(t ; r_t)$, $\ciph_{\rho^{(1)}} \gets \gets \qheQE_{\qhepk}(\rho^{(1)})$, perform homomorphic evaluation of the circuit $C$, and then decrypt with $\qhesk$ to get $\left( s', \evciph, \rho^{(2)} \right)$.
				
				If $s' \neq s$ then output $\zkFail$, otherwise carry on the simulation as in $\Hyb_5$.
				The fact that $\tilde{\Hyb_5}$ can be described by this process follows from the fact that by the perfect correctness of the CC obfuscation, if the first verifier message is explainable then $\obfC$ indeed executes the decryption circuit $\qheD_{\qhesk}(\cdot)$ s.t. if the result was $s$, it outputs the QFHE key-generation randomness $r$ (which in turn yields $\qhesk$), and if the result wasn't $s$, $\obfC$ necessarily yields $\bot$.
				
				\item
				$\tilde{\Hyb_6}$ can be described by the following process: the exact same homomorphic evaluation process as described above, except that after getting the output $\left( s', \evciph, \rho^{(2)} \right)$, the check is that $s = \sfeD_{\sfedk}(\evciph)$, and if the check fails the output is $\zkFail$, and if the check succeeds then the process continues simulation regularly in the exact same way as in $\tilde{\Hyb_5}$.
			\end{itemize}
		
			The above descriptions of $\tilde{\Hyb_5}$, $\tilde{\Hyb_6}$ imply that the statistical distance between them is bounded by the probability that the check in one process fails and in the other it succeeds, which in turn bounded by the probability that $s' \neq \sfeD_{\sfedk}(\evciph)$.
			The point is, due to the fact that the SFE decryption algorithm is deterministic, it is always the case when evaluating the circuit $C$ (out in the open, not under homomorphic evaluation) we have $s' = \sfeD_{\sfedk}(\evciph)$.
			It follows by the statistical correctness of the QFHE that the probability that the evaluated $s', \evciph$ are s.t. $s' \neq \sfeD_{\sfedk}(\evciph)$, and thus the bound on the statistical distance between $\tilde{\Hyb_5}$, $\tilde{\Hyb_6}$, in contradiction.
		\end{itemize}
		
		\item $\Hyb_6 \approx_{s} \Hyb_7:$
		This indistinguishability follows directly from the statistical correctness of the QFHE.
		Assume toward contradiction that the two distributions are distinguishable and fix, by an averaging argument, the partial transcript $T'$ and inner quantum state $\rho^{(1)}$ of $\zkmV$ generated at the end of step \ref{simulation:verifier_CC} of the simulation, that maximize the distinguishability.
		Denote by $\tilde{\Hyb_6}$, $\tilde{\Hyb_7}$ the distributions that carry on from the point that $T', \rho^{(1)}$ are fixed, according to $\Hyb_6$, $\Hyb_7$, respectively.
		Consider two cases for $T'$.
		\begin{itemize}
			\item
			$T'$ is not explainable.
			In that case both processes act the same and output $\zkFail$, and are indistinguishable.
			
			\item
			$T'$ is explainable, which means that the verifier's first message fixes $t, s, \qhesk$, and also $r_t$ the QFHE encryption randomness s.t. $\ciph_{\zkmV} = \qheE_{\qhepk}(t ; r_t)$.
			In that case, recall the circuit $C$ from the above proof of the indistinguishability $\Hyb_5 \approx_{s} \Hyb_6$, and observe the following about the distributions $\tilde{\Hyb_6}$, $\tilde{\Hyb_7}$.
			\begin{itemize}
				\item
				The distribution $\tilde{\Hyb_6}$ can be described by the following process: Encrypt $\ciph_{\zkmV} = \qheE_{\qhepk}(t ; r_t)$, $\ciph_{\rho^{(1)}} \gets \gets \qheQE_{\qhepk}(\rho^{(1)})$, perform homomorphic evaluation of the circuit $C$, and then decrypt with $\qhesk$ to get $\left( s', \evciph, \rho^{(2)} \right)$.
				If $s = \sfeD_{\sfedk}(\evciph)$ then process continues simulation regularly as in $\Hyb_6$, and otherwise fails and outputs $\zkFail$.
				
				\item
				The distribution $\tilde{\Hyb_7}$ can be described by the following process: Instead of encrypting $t, \rho^{(1)}$ and computing $C$ under homomorphic evaluation (and then decrypting), we simply execute $\left( \evciph, \rho^{(2)} \right) \gets C(t, \rho^{(1)})$ in the clear.
				The process continues in the exact same way as described after the homomorphic evaluation in $\tilde{\Hyb_6}$; If $s = \sfeD_{\sfedk}(\evciph)$ then process continues simulation regularly, and otherwise fails and outputs $\zkFail$.
			\end{itemize}
			The above implies that the only difference between the two processes is the fact that in $\tilde{\Hyb_6}$ we execute $C$ under homomorphic evaluation, and in $\tilde{\Hyb_7}$ we execute $C$ in the clear.
			By the statistical correctness of the QFHE, it follows that the two processes are statistically indistinguishable, in contradiction.	
		\end{itemize}
		
		\item $\Hyb_7 \approx_{s} \Hyb_8:$
		This indistinguishability follows from the perfect correctness of the SFE encryption and the soundness of the WI proof that $\zkmV$ gives.
		Assume toward contradiction that the distributions are distinguishable and fix, by an averaging argument, the partial transcript $T'$ (and inner verifier state $\rho^{(2)}$) generated after the verifier's second message $\evciph$.
		If the first verifier message was explainable and also $s = \sfeD_{\sfedk}(\evciph)$ then then processes are identical, as they carry on simulation in the exact same way.
		If the first verifier message was not explainable then again, both processes fail and output $\zkFail$ and are identical, and if the first verifier message is explainable but $s \neq \sfeD_{\sfedk}(\evciph)$, it follows $\Hyb_7$ outputs $\zkFail$, and in $\Hyb_8$, by the perfect correctness of the SFE evaluation, the transcript cannot be explainable, and thus the WI proof by the verifier fails with overwhelming probability, and with the same probability the output of $\Hyb_8$ is $\zkFail$, and the processes are indistinguishable.
		
		\item $\Hyb_8 \approx_{c} \Hyb_9:$
		This indistinguishability follows from the input privacy property of the SFE encryption.
		More precisely, as usual, we assume toward contradiction that the distributions are distinguishable and we fix the transcript until the end of step \ref{simulation:verifier_CC} of the simulation.
		If the transcript is not explainable then $\Hyb_8$ outputs $\zkFail$, and $\Hyb_9$ outputs $\zkFail$ with overwhelming probability, because the WI proof of the verifier will fail with overwhelming probability.
		If the transcript is explainable, we can get either an SFE encryption of $t$ or of $0$, as $t$ is fixed by the averaging argument.
		By continuing the simulation regularly, as identically performed in both processes, we get the reduction from breaking the security of the SFE encryption to distinguishing between $\Hyb_8$ and $\Hyb_9$.
		\end{itemize}
	\end{proof}

	\nir{the rest of the proofs and corollaries in this section can be replaced with one short proof. }
	
	\begin{corollary} [Probabilities to Abort are Negligibly Close Over Different Cases] \label{cor:probabilities}
		For a quantum auxiliary input $\rho$, instance in the language $x \in \{ 0, 1 \}^\secp \cap \lang$ and witness $w \in \mathcal{R}_{\lang}(x)$, define the following probabilities.
		\begin{itemize}
			\item
			$a(x, \rho):$ The probability that in the simulation $\zkS_{\mathrm{a}}(x, \zkmV, \rho)$, the verifier $\zkmV$ aborted before the end of step \ref{qzk:prover_wi} where the simulator simulates the Prover's WI proof, or failed to prove its WI statement in step \ref{qzk:verifier_wi} (i.e. the simulation of $\zkS_{\mathrm{a}}(x, \zkmV, \rho)$ was aboting).
			
			\item
			$b(x, \rho):$ The probability that in the simulation $\zkS_{\mathrm{na}}(x, \zkmV, \rho)$, the verifier $\zkmV$ aborted before the end of step \ref{simulation:prover_wi} where the simulator simulates the Prover's WI proof, or failed to prove its WI statement in step \ref{simulation:verifier_wi} (i.e. the simulation of $\zkS_{\mathrm{na}}(x, \zkmV, \rho)$ was aboting).

			\item
			$c(x, \rho, w):$ The probability that the interaction $\prot{\zkP(w)}{\zkmV(\rho)}(\ins)$ was aborting.
		\end{itemize}
		There exists a negligible function $\negl(\cdot)$ s.t. for every sequences $\rho = \{ \rho_\secp \}_{\secp \in \Nat}$, $x = \{ x_\secp \}_{\secp \in \Nat}$, $w = \{ w_\secp \}_{\secp \in \Nat}$ where,
		\begin{itemize}
			\item
			$\forall \secp \in \Nat : \rho_\secp$ is a $\secp^c$-size quantum state (for some constant $c \in \Nat$),
			
			\item
			$\forall \secp \in \Nat : x_\secp \in \{ 0, 1 \}^\secp \cap \lang$,
			
			\item
			$\forall \secp \in \Nat : w_\secp \in \mathcal{R}_{\lang}(x_\secp)$,
		\end{itemize}
		we have
		$$
		\forall \secp \in \Nat : \abs{a(x_\secp, \rho_\secp) - b(x_\secp, \rho_\secp)}, \abs{b(x_\secp, \rho_\secp) - c(x_\secp, \rho_\secp, w_\secp)}, \abs{c(x_\secp, \rho_\secp, w_\secp) - a(x_\secp, \rho_\secp)} \leq \negl(\secp) \enspace .
		$$
	\end{corollary}
	
	\begin{proof}
		It immediately follows from Proposition \ref{claim:aborting_similarity} that the distance between $a(x, \rho)$ and $c(x, \rho, w)$ is negligible. By the same reasoning it follows from Proposition \ref{claim:non_aborting_similarity} that $b(x, \rho)$ and $c(x, \rho, w)$ are negligibly close.
		By triangle inequality it follows that also $a(x, \rho)$ and $b(x, \rho)$ are negligibly close.
	\end{proof}

	From the above it follows that the success probability of $\zkslS(x, \zkmV, \rho)$ is negligibly close to $\frac{1}{2}$, regardless of the quantum state $\rho$.
	\begin{corollary} [Success Probability of $\zkslS$ is Input-Oblivious] \label{cor:zkslS_success}
		For every quantum verifier $\zkmV = \{ \zkmV_\secp \}_{\secp \in \Nat}$ there exists a negligible function $\negl(\cdot)$ s.t. for every instance in the language $x = \{ x_\secp \}_{\secp \in \Nat}$ and quantum auxiliary input $\rho = \{ \rho_\secp \}_{\secp \in \Nat}$ for the verifier, we have
		$$
		\forall \secp \in \Nat : \abs{ \Pr\left[ \text{The simulation } \zkslS(x_\secp, \zkmV_\secp, \rho_\secp) \text{ succeeds} \right] - \frac{1}{2}} \leq \negl(\secp) \enspace .
		$$
	\end{corollary}
	
	\begin{proof}
		\begin{align*}
			\forall \secp \in \Nat : &\abs{ \Pr\left[ \text{The simulation } \zkslS(x_\secp, \zkmV_\secp, \rho_\secp) \text{ succeeds} \right] - \frac{1}{2}} \\
			&= \bigg| \frac{1}{2} \cdot \Pr\left[ \text{The simulation } \zkS_{\mathrm{a}}(x_\secp, \zkmV_\secp, \rho_\secp) \text{ succeeds} \right] \\
				&
				\qquad + \frac{1}{2} \cdot \Pr\left[ \text{The simulation } \zkS_{\mathrm{na}}(x_\secp, \zkmV_\secp, \rho_\secp) \text{ succeeds} \right] - \frac{1}{2}\bigg| \\
			&= \bigg| \frac{1}{2} \cdot a(x_\secp, \rho_\secp) + \frac{1}{2} \cdot \big( 1 - b(x_\secp, \rho_\secp) \big) - \frac{1}{2} \bigg| \\
			&= \frac{1}{2} \cdot \abs{ a(x_\secp, \rho_\secp) - b(x_\secp, \rho_\secp) } \leq \negl(\secp) \enspace ,
		\end{align*}
		where the last inequality is due to Corollary \ref{cor:probabilities}.
	\end{proof}
	
	\noindent We next prove that conditioned on succeeding, the output distribution of the simulator $\zkslS$ is indistinguishable from the real interaction.
	
	\begin{proposition} [The Output of a Successful $\zkslS$ is Indistinguishable from Real Interaction] \label{lem:conditional_success}
		Let $\zkmV = \{ \zkmV_\secp, \rho_\secp \}_{\secp \in \Nat}$ be a polynomial-size quantum verifier.
		For $x \in \lang$, let $\widetilde{\zkslS}(x, \zkmV, \rho)$ denote the conditional distribution of $\zkslS(x, \zkmV, \rho)$, conditioned on the simulation being successful.
		Then,
		$$
		\{ \view_{\zkmV}\prot{\zkP(w)}{\zkmV_\secp(\rho_{\secp})}(\ins)\}_{\substack{\secp\in\Nat,\\ \ins\in\lang\cap\zo^\secp,\\ w \in \mathcal{R}_{\lang}(x)}}
		\approx_{c}
		\{ \widetilde{\zkslS}(x, \zkmV_\secp, \rho_\secp) \}_{\substack{\secp\in\Nat,\\ \ins\in\lang\cap\zo^\secp,\\ w \in \mathcal{R}_{\lang}(x)}}\enspace.
		$$
	\end{proposition}
	\begin{proof}		
		Denote the following conditional distributions.
		\begin{itemize}
			\item
			$A_{\zkS} = \{ A_{\zkS, \secp} \}_{\secp\in \Nat}:$
			A conditional distribution of $\zkS_{\mathrm{a}}(x, \zkmV, \rho)$, conditioned on that the output is not $\zkFail$ (might be an empty distribution, if $a(x, \rho) = 0$).
			
			\item
			$S_{\zkS} = \{ S_{\zkS, \secp} \}_{\secp\in \Nat}:$
			A conditional distribution of $\zkS_{\mathrm{na}}(x, \zkmV, \rho)$, conditioned on that the output is not $\zkFail$ (might be an empty distribution, if $b(x, \rho) = 1$).
			
			\item $A_{\prot{\zkP}{\zkmV}} = \{ A_{\prot{\zkP}{\zkmV}, \secp} \}_{\secp\in \Nat}:$
			A conditional distribution of $\view_{\zkmV_{\mathrm{a}}}\prot{\zkP}{\zkmV}$ (from \ref{claim:aborting_similarity}), conditioned on that the output is not $\zkFail$ (might be an empty distribution, if $c(x, \rho, w) = 0$).
			
			\item $S_{\prot{\zkP}{\zkmV}} = \{ S_{\prot{\zkP}{\zkmV}, \secp} \}_{\secp\in \Nat}:$
			A conditional distribution of $\view_{\zkmV_{\mathrm{nm}}}\prot{\zkP}{\zkmV}$ (from \ref{claim:non_aborting_similarity}), conditioned on that the output is not $\zkFail$ (might be an empty distribution, if $c(x, \rho, w) = 1$).
		\end{itemize}
		
		Observe that the distribution $\widetilde{\zkslS}(x, \zkmV, \rho)$ is the distribution generated by outputting a sample from $A_{\zkS}$ with probability $\frac{a(x, \rho)}{1 + a(x, \rho) - b(x, \rho)}$, and a sample from $S_{\zkS}$ with probability $\frac{1 - b(x, \rho)}{1 + a(x, \rho) - b(x, \rho)}$.
		Additionally, observe that the distribution $\view_{\zkmV}\prot{\zkP(w)}{\zkmV(\rho)}(x)$ is the distribution generated by outputting a sample from $A_{\prot{\zkP}{\zkmV}}$ with probability $c(x, \rho, w)$ and from $S_{\prot{\zkP}{\zkmV}}$ with probability $1 - c(x, \rho, w)$.
		We will show that the two distributions are computationally indistinguishable by a hybrid argument.
		Consider the following distributions.
		\begin{itemize}
			\item $\Hyb_0:$
			The distribution $\widetilde{\zkslS}(x, \zkmV, \rho)$.
			
			\item $\Hyb_1:$
			Same as in $\Hyb_0$, with the exception that instead of sampling from $A_{\zkS}$ with probability $\frac{a(x, \rho)}{1 + a(x, \rho) - b(x, \rho)}$ (and from $S_{\zkS}$ with probability $\frac{1 - b(x, \rho)}{1 + a(x, \rho) - b(x, \rho)}$), it samples from $A_{\zkS}$ with probability $a(x, \rho)$ (and from $S_{\zkS}$ with probability $1 - a(x, \rho)$).
			
			\item $\Hyb_2:$
			Same as in $\Hyb_1$, but the probability $a(x, \rho)$ is changed to $c(x, \rho, w)$.
			
			\item $\Hyb_3:$
			Same as in $\Hyb_2$, with the exception that with probability $c(x, \rho, w)$, the process outputs a sample from $A_{\prot{\zkP}{\zkmV}}$ rather than from $A_{\zkS}$.
			
			\item $\Hyb_4:$
			Same as in $\Hyb_3$, with the exception that with probability $1 - c(x, \rho, w)$, the process outputs a sample from $S_{\prot{\zkP}{\zkmV}}$ rather than from $S_{\zkS}$.
			This process is exactly $\view_{\zkmV}\prot{\zkP(w)}{\zkmV(\rho)}(x)$.
		\end{itemize}		
		It is left to explain why each consecutive pair of distributions are computationally indistinguishable, and our proof is finished.
		For the following, define $a'(\secp) := a(x_\secp, \rho_\secp)$, $b'(\secp) := b(x_\secp, \rho_\secp)$, $c'(\secp) := c(x_\secp, \rho_\secp, w_\secp)$.
		\begin{itemize}
			\item $\Hyb_0 \approx_{s} \Hyb_1:$
			Due to the fact that $a'(\secp)$ and $b'(\secp)$ are negligibly close (Corollary \ref{cor:probabilities}), it follows that $a'(\secp)$ and $\frac{a(x, \rho)}{1 + a(x, \rho) - b(x, \rho)}$ are also negligibly close, and thus follows the statistical indistinguishability.
			
			\item $\Hyb_1 \approx_{s} \Hyb_2:$
			The probabilities $a'(\secp)$ and $c'(\secp)$ are negligibly close due to Corollary \ref{cor:probabilities}, and the statistical indistinguishability follows.
			
			\item $\Hyb_2 \approx_c \Hyb_3:$
			Assume toward contradiction that the indistinguishbility does not hold, this means there is a distinguisher $\Disting$, an infinite subset $Q \subseteq \Nat$ and a polynomial $p : \Nat \rightarrow \Nat$, s.t. for all $\secp \in Q$, $\Disting$ distinguishes with advantage at least $1/p(\secp)$ between $\Hyb_{2, \secp}$ and $\Hyb_{3, \secp}$.
			We consider two cases for the function $c'$, and show that in both of them the contradiction follows from \ref{claim:aborting_similarity}.
			\begin{itemize}
				\item {\bf Case 1:} for every polynomial function $q : \Nat \rightarrow \Nat$, there are only finitely-many $\secp \in Q$ s.t. $1 - c'(\secp) > 1/q(\secp)$.
				This means that there is a negligible function $\mu$ s.t. $\forall \secp \in \Nat : 1 - c'(\secp) \leq \mu(\secp)$.
				In that case, the contradiction follows directly from Proposition \ref{claim:aborting_similarity}, because for indices $\secp \in Q$, a sample from $\zkS_{\mathrm{a}}(x_\secp, \zkmV_\secp, \rho_\secp)$ is statistically indistinguishable from a sample from $\Hyb_{2, \secp}$, and a sample from $\view_{\zkmV_{\mathrm{a}}}\prot{\zkP(w_\secp)}{\zkmV_\secp(\rho_\secp)}(\ins_\secp)$ is statistically indistinguishable from a sample from $\Hyb_{3, \secp}$.
				
				\item {\bf Case 2:} there is a polynomial function $q' : \Nat \rightarrow \Nat$, s.t. there are infinitely-many $\secp \in Q$ s.t. $1 - c'(\secp) > 1/q'(\secp)$, we denote this infinite set of indices by $Q'$.
				For these indices we can violate the indistinguishbility from \ref{claim:aborting_similarity}.
				More specifically, for $\secp \in Q'$ we can sample in polynomial time (say, $q'(\secp)^2$) and using polynomial-size quantum advice, from a distribution that is statistically indistinguishable from $S_{\zkS, \secp}$, and reduce distinguishing between $\zkS_{\mathrm{a}}(x_\secp, \zkmV_\secp, \rho_\secp)$ and $\view_{\zkmV_{\mathrm{a}}}\prot{\zkP(w_\secp)}{\zkmV_\secp(\rho_\secp)}(\ins_\secp)$, to distinguishing between $\Hyb_{2, \secp}$ and $\Hyb_{3, \secp}$ in the following way.
				
				\noindent When getting a sample from either $\zkS_{\mathrm{a}}(x_\secp, \zkmV_\secp, \rho_\secp)$ or $\view_{\zkmV_{\mathrm{a}}}\prot{\zkP(w_\secp)}{\zkmV_\secp(\rho_{\secp})}(\ins_\secp)$, if the sample's value was $\zkFail$, approximately sample (as mentioned above, in time $q(\secp)^2$) from $S_{\zkS, \secp}$.
				This can be done, for example, by using a polynomial amount of copies (i.e. $q'(\secp)^2$) of the quantum advice $\rho$ of the verifier.
				This output of the reduction (whether it was $\zkFail$ that was swapped to a sample that is close to $S_{\zkS, \secp}$, or whether it was a non-$\zkFail$ and was not swapped) is sent to the distinguisher $\Disting$.
				Due to the fact that for the cases we got $\zkFail$, the generated sample is statistically indistinguishable from $S_{\zkS, \secp}$, it follows that when we get a sample from $\zkS_{\mathrm{a}}(x_\secp, \zkmV_\secp, \rho_\secp)$ then the output sample of our reduction is statistically close to $\Hyb_{2, \secp}$, and when we get a sample from $\view_{\zkmV_{\mathrm{a}}}\prot{\zkP(w_\secp)}{\zkmV_\secp(\rho_{\secp})}(\ins_\secp)$ then the output sample of our reduction is statistically close to $\Hyb_{3, \secp}$, and we get a contradiction.
			\end{itemize}
			
			\item $\Hyb_3 \approx_c \Hyb_4:$
			Assume toward contradiction that the indistinguishbility does not hold, this means there is a distinguisher $\Disting$, an infinite subset $Q \subseteq \Nat$ and a polynomial $p : \Nat \rightarrow \Nat$, s.t. for all $\secp \in Q$, $\Disting$ distinguishes with advantage at least $1/p(\secp)$ between $\Hyb_{3, \secp}$ and $\Hyb_{4, \secp}$.
			We consider two cases for the function $c'$, and show that in both of them the contradiction follows from \ref{claim:non_aborting_similarity}.
			\begin{itemize}
				\item {\bf Case 1:} for every polynomial function $q : \Nat \rightarrow \Nat$, there are only finitely-many $\secp \in Q$ s.t. $c'(\secp) > 1/q(\secp)$.
				This means that there is a negligible function $\mu$ s.t. $\forall \secp \in \Nat : c'(\secp) \leq \mu(\secp)$.
				In that case, the contradiction follows directly from Proposition \ref{claim:non_aborting_similarity}, because for indices $\secp \in Q$, a sample from $\zkS_{\mathrm{na}}(x_\secp, \zkmV_\secp, \rho_\secp)$ is statistically indistinguishable from a sample from $\Hyb_{3, \secp}$, and a sample from $\view_{\zkmV_{\mathrm{na}}}\prot{\zkP(w_\secp)}{\zkmV_\secp(\rho_\secp)}(\ins_\secp)$ is statistically indistinguishable from a sample from $\Hyb_{4, \secp}$.
				
				\item {\bf Case 2:} there is a polynomial function $q' : \Nat \rightarrow \Nat$, s.t. there are infinitely-many $\secp \in Q$ s.t. $c'(\secp) > 1/q'(\secp)$, we denote this infinite set of indices by $Q'$.
				For these indices we can violate the indistinguishbility from \ref{claim:non_aborting_similarity}.
				More specifically, for $\secp \in Q'$ we can sample, in polynomial time (say, $q'(\secp)^2$) and using polynomial-size quantum advice, from a distribution that is statistically indistinguishable from $A_{\zkS, \secp}$, and reduce distinguishing between $\zkS_{\mathrm{na}}(x_\secp, \zkmV_\secp, \rho_\secp)$ and $\view_{\zkmV_{\mathrm{na}}}\prot{\zkP(w_\secp)}{\zkmV_\secp(\rho_\secp)}(\ins_\secp)$, to distinguishing between $\Hyb_{3, \secp}$ and $\Hyb_{4, \secp}$ in the following way.
				
				\noindent When getting a sample from either $\zkS_{\mathrm{na}}(x_\secp, \zkmV_\secp, \rho_\secp)$ or $\view_{\zkmV_{\mathrm{na}}}\prot{\zkP(w_\secp)}{\zkmV_\secp(\rho_{\secp})}(\ins_\secp)$, if the sample's value was $\zkFail$, approximately sample (as mentioned above, in time $q(\secp)^2$) from $A_{\zkS, \secp}$.
				This can be done, for example, by using a polynomial amount of copies (i.e. $q'(\secp)^2$) of the quantum advice $\rho$ of the verifier.
				This output of the reduction (whether it was $\zkFail$ that was swapped to a sample that is close to $A_{\zkS, \secp}$, or whether it was a non-$\zkFail$ and was not swapped) is sent to the distinguisher $\Disting$.
				Due to the fact that for the cases we got $\zkFail$, the generated sample is statistically indistinguishable from $A_{\zkS, \secp}$, it follows that when we get a sample from $\zkS_{\mathrm{na}}(x_\secp, \zkmV_\secp, \rho_\secp)$ then the output sample of our reduction is statistically close to $\Hyb_{3, \secp}$, and when we get a sample from $\view_{\zkmV_{\mathrm{na}}}\prot{\zkP(w_\secp)}{\zkmV_\secp(\rho_{\secp})}(\ins_\secp)$ then the output sample of our reduction is statistically close to $\Hyb_{4, \secp}$, and we get a contradiction.
			\end{itemize}
		\end{itemize}
	\end{proof}
	
	We conclude with proving that the output of the simulation $\zkS(x, \zkmV, \rho)$ is indeed computationally indistinguishable from the output of the real interaction $\view_{\zkV}\prot{\zkP}{\zkmV(\rho)}(x)$.
	
	\begin{proposition} [Simulation Output is Indistinguishable from Interaction] \label{lem:simulation_indistinguishable}
		For any quantum polynomial-size verifier $\zkmV = \set{\zkmV_\secp, \rho_\secp}_{\secp \in \Nat}$,
		$$
		\{ \view_{\zkmV}\prot{\zkP(w)}{\zkmV_\secp(\rho_{\secp})}(\ins)\}_{\substack{\secp\in\Nat,\\ \ins\in\lang\cap\zo^\secp,\\ w \in \mathcal{R}_{\lang}(x)}}
		\approx_{c}
		\{\zkS(\ins, \zkmV_\secp, \rho_\secp)\}_{\substack{\secp\in\Nat,\\ \ins\in\lang\cap\zo^\secp,\\ w \in \mathcal{R}_{\lang}(x)}} \enspace.
		$$
	\end{proposition}

	\begin{proof}
		Let $\zkmV = \set{\zkmV_\secp}_{\secp \in \Nat}$ be a quantum polynomial-size verifier.
		According to Corollary \ref{cor:zkslS_success}, there is a negligible function $\negl(\cdot)$ s.t. for every instance in the language $x = \{ x_\secp \}_{\secp \in \Nat}$ and auxiliary input quantum state $\rho = \{ \rho_\secp \}_{\secp \in \Nat}$ for the verifier, we have
		$$
		\forall \secp \in \Nat : \abs{ \Pr\left[ \text{The simulation } \zkslS(x_\secp, \zkmV_\secp, \rho_\secp) \text{ succeeds} \right] - \frac{1}{2}} \leq \negl(\secp) \enspace .
		$$		
		Consider the quantum circuit $\zkslSmV$, which is the circuit implementation of $\zkslS$ with hardwired inputs $x$ and $\zkmV$, that gets as input only the quantum state $\rho$.
		As mentioned above, the success probability of $\zkslSmV$ is negligibly close $\frac{1}{2}$, for \emph{any} quantum state $\rho$.
		If we denote the success probability for input $\rho$ by $p(\rho)$ and denote $\varepsilon := \negl(\secp) + 2^{-\secp\cdot\frac{3}{4}}$, $p_0 := \frac{1}{4}$ and $q := \frac{1}{2}$, we can see that the 4 conditions for the Quantum Rewinding Lemma \ref{lem:watrous} are satisfied:
		\begin{itemize}
			\item
			$\secp \geq \frac{\log(1/\varepsilon)}{4\cdot p_0 (1 - p_0)}$.
			
			\item
			For every state $\rho$, $p_0 \leq p(\rho)$.
			
			\item
			For every state $\rho$, $\abs{p(\rho) - q} < \varepsilon$.
			
			\item
			$p_0(1 - p_0) \leq q(1 - q)$.
		\end{itemize}
		This implies that $\Wat(\zkslSmV, \rho, \secp)$ has trace distance bounded by $4\sqrt{\varepsilon}\frac{\log(1/\varepsilon)}{p_0 (1 - p_0)}$ from the success-conditioned output distribution of $\zkslS(x, \zkmV, \rho)$. Since $\varepsilon$ is a negligible function of $\secp$, so is $4\sqrt{\varepsilon}\frac{\log(1/\varepsilon)}{p_0 (1 - p_0)}$.
		
		Finally, recall that $\zkS(x, \zkmV, \rho) = \Wat(\zkslSmV, \rho, \secp)$, and that proposition \ref{lem:conditional_success} says that the success-conditioned distribution of $\zkslS(x, \zkmV, \rho)$ is computationally indistinguishable from \newline$\view_{\zkmV}\prot{\zkP(\wit)}{\zkmV(\rho)}(\ins)$, and our proof is concluded.
	\end{proof}

	\begin{remark} [Classical Universal Simulator for Classical Verifiers]
		As a side note, we observe that the protocol preserves the trait of classical ZK, that is, classical verifiers learn nothing from the protocol (formally, for classical verifiers there is a classical simulator).
		A classical simulator showing this will simply execute $\zkslS(x, \zkmV)$ repeatedly some polynomial number of times (either until it succeeds in one of the tries, or fails in all and then the output is $\zkFail$), specifically, $\secp$ tries will do.
		Since the probability for $\zkslS$ to succeed is $\approx \frac{1}{2}$, the probability to successfully sample from the success-conditioned distribution is overwhelming, and thus the output of the simulator is indistinguishable from the output of the verifier in the real interaction.
	\end{remark}
	\nir{this is a distraction! if you want can add a remark after the proof.}
	\omri{How about now?}

\section{Quantumly-Extractable Classical Commitments}
\nir{stopped putting comments, until we change}

	\nir{The above text shouldn't be here. Instead, have an explicit remark the definition.}\omri{ok}
	
	In this section we show how to use any constant-round post-quantum zero-knowledge argument for NP (and standard cryptographic assumptions) in order to construct a constant-round, quantumly-extractable classical commitment scheme.
	\nir{remove "as an independent cryptographic primitive"}\omri{ok}
	We start with the definition, and proceed to the construction.
	
	\begin{definition}[Quantumly-Extractable Commitment] \label{def:q_ex_commit}
		A quantumly-extractable commitment scheme consists of three interactive \PPT algorithms $(\comS, \comR, \decom)$ with the following syntax.
		
		\begin{itemize}
			\item $\comS(1^\secp, m):$
			The sender algorithm gets as input the public security parameter $1^\secp$ and the secret message $m$ to commit to.
			
			\item $\comR(1^\secp):$
			The receiver algorithm gets only the public security parameter $1^\secp$.
			
			\item The algorithms $\comS, \comR$ interact \nir{remove "in a constant number of rounds", irrelevant to the definition} and generate transcript $T$.\nir{remove "to generate a transcript" (this is not "the purpose" of the interaction)}\omri{ok}
			
			\item $\decom(T, m, r):$
			For a transcript, message and randomness, the decommitment verification algorithm outputs a bit.
			\nir{Decommitment usually refers to the process of sending the sender's randomness. Perhaps call it VDcom or something. Also, the text is not clear --- just say it outputs a bit, otherwise people may think that  "the verification or rejection of the decommitment is a thing" (like i did). Can remove the security parameter (w.l.o.g it's explicit in $T$)}\omri{done}
		\end{itemize}
		
		The scheme satisfies the following conditions.
		
		\begin{itemize}
			\item {\bf Perfect Binding:} Let $m_0, m_1, r_0, r_1 \in \{ 0, 1 \}^*$, and let $T$ be a transcript.
			If $\decom(T, m_0, r_0) = \decom(T, m_1, r_1) = 1$, then $m_0 = m_1$.
			Accordingly, for a transcript $T$ denote by $m_T$ the (unique) string such that if there exist $r$ s.t. $\decom(T, m, r) = 1$, then $m_T := m$, and $m_T := \bot$ otherwise. 
			
			\item {\bf Computational Hiding:} For every polynomial-size quantum receiver $\commR = \{ \commR_\secp, \rho_\secp \}_{\secp \in \Nat}$ and polynomial $\ell(\cdot)$, 
			$$
			\{
			\view_{\commR_\secp}\prot{\comS(m_0)}{\commR_{\secp}(\rho_\secp)}(1^\secp)
			\}_{\secp, m_0, m_1}
			\approx_{c}
			\{
			\view_{\commR_\secp}\prot{\comS(m_1)}{\commR_{\secp}(\rho_\secp)}(1^\secp)
			\}_{\secp, m_0, m_1} \enspace ,
			$$
			where $\secp \in \Nat$, $m_0, m_1 \in \{ 0, 1 \}^{\ell(\secp)}$.

\nir{VIEW wasn't defined. Can just use OUT.}\omri{ok}\omri{I know we can also lose the receiver and just take a polynomial $s(\cdot)$ and consider size-$s(\secp)$ adversaries, but I think this is clearer.}
			
\nir{Remove density}\omri{done}
			
			\item {\bf Extractability:} There exists a quantum polynomial-time algorithm $\Ext$ s.t. for every polynomial-size quantum sender $\commS = \{ \commS_\secp, \rho_\secp \}_{\secp \in \Nat}$ outputs a quantum state $\sigma_\Ext$ and message $m_\Ext$, with the following guarantee.
			$$
			\Bigl\{
			(\sigma, m_T) \; | \; (T, \sigma, m_T) \gets \prot{\commS_\secp(\rho_\secp)}{\comR}(1^\secp)
			\Bigr\}_{\secp \in \Nat}
			$$
			$$
			\approx_{c}
			\Bigl\{ (\sigma_\Ext, m_\Ext) \; | \; (\sigma_\Ext, m_\Ext) \gets \Ext(1^\secp, \commS_\secp, \rho_\secp) \Bigr\}_{\secp \in \Nat} \enspace ,
			$$
			where $\sigma$ is the inner state of $\commS$ after executing the interaction with $\comR$.
			\nir{remove $T$}
			\nir{remove validity}
				
			\end{itemize}
			
			\omri{Add at the end that for classical senders there is a classical extractor, this is the same extractor, only that in the ZK simulation inside the extraction, it uses the classical version of the ZK simulator of the argument system.}
				
	\end{definition}

	\begin{remark} \label{remark:extractable_commitment}
		In the standard definition of extraction and more broadly, of simulation, the simulator does not output the interaction transcript (in classical-interaction protocols).
		It is noted however that it can be assumed without the loss of generality that the simulator also outputs the simulated transcript whenever needed.
		This is because, given a classical (or quantum) interactive circuit, it can be compiled in polynomial time (in the circuit size) to a circuit with identical functionality, that records the interaction transcript into its private inner state. Since the simulator simulates the inner state of the adversary at the end of interaction it in particular simulates the transcript.
	\end{remark}
	
	\medskip\noindent
	We describe the protocol between $\comS$ and $\comR$ in \figref{fig:q_ex_commitment}.
	
	\paragraph{Ingredients and notation:}
	\begin{itemize}
		\item
		A non-interactive commitment scheme $\Com$.
		\item
		A 2-message function-hiding secure function evaluation scheme $(\sfeGen,$ $\sfeEnc,$ $\sfeEval,$ $\sfeD)$.
		\item
		A constant-round post-quantum zero-knowledge argument system $(\zkP_{\text{NP}}, \zkV_{\text{NP}})$ for NP.
	\end{itemize}
	
%
	
	\protocol
	{\proref{fig:q_ex_commitment}}
	{A Quantumly-Extractable Classical Commitment Scheme.}
	{fig:q_ex_commitment}
	{
		\begin{description}
			\item[Common Input:] A security parameter $\secp \in \Nat$.
			\item[Private Input of $\comS$:] A message $m \in \{0, 1\}^*$ to commit to.
		\end{description}
		
		\begin{enumerate}
			\item {\bf Commitment by $\comS$:} \label{com:Sen_commit}
			$\comS$ sends a commitment to $m$, $\cmt_\comS \gets \Com(1^\secp, m)$.
			
			\item {\bf Commitment by $\comR$:} \label{com:Rec_commit}
			$\comR$ sends a commitment to $0$, $\cmt_\comR \gets \Com(1^\secp, 0)$.
			
			\item {\bf ZK Argument by $\comR$:} \label{com:Rec_argument}
			$\comR$ interacts with $\comS$ through $(\zkP_{\text{NP}}, \zkV_{\text{NP}})$ to give a ZK argument that $\cmt_\comR$ is indeed a commitment to $0$, that is, there exists randomness $r_0 \in \{ 0, 1 \}^{\poly(\secp, 1)}$ string\footnote{Let $\poly(\secp, \ell)$ denote the polynomial that represents the amount of randomness the commitment algorithm $\Com(\cdot)$ needs for security parameter $\secp$ and message length $\ell$.} s.t. $\cmt_\comR = \Com(1^\secp, 0; r_0)$.
			
			\item {\bf $\comS$ Challenges $\comR$:} \label{com:Sen_challenge}
			The parties interact so that $\comS$ can offer to send $m$ if $\comR$ managed to trick $\comS$ in the ZK argument.
			\begin{enumerate}
				\item \label{com:Rec_challenge}
				$\comR$ computes $\sfedk \gets \sfeGen(1^\secp)$ and sends $\ciph_{\comR} \gets \sfeEnc_{\sfedk}(0^{\poly(\secp, 1)})$.
				
				\item \label{com:Sen_response}
				$\comS$ sends $\evciph \gets \sfeEval\Big( C_{1\rightarrow m}, \ciph_{\comR} \Big)$, where $C_{1\rightarrow m}$ is the (canonical) circuit that for input $r_1 \in \{ 0, 1 \}^{\poly(\secp, 1)}$ s.t. $\cmt_\comR = \Com(1^\secp, 1; r_1)$, outputs $m$, and for any other input outputs $\bot$.
			\end{enumerate}
			
			\item {\bf ZK Argument by $\comS$:} \label{com:Sen_argument}
			$\comS$ interacts with $\comR$ through $(\zkP_{\text{NP}}, \zkV_{\text{NP}})$ to give a ZK argument for the statement that its transcript until the end of step \ref{com:Sen_response} is consistent, that is, there exists a message and randomness for the honest sender algorithm $\comS$ that generates the transcript.
			
		\end{enumerate}
	}
	
	\paragraph{Decommitment Verification.}
	On input $(T, m, r)$ the decommitment verification algorithm $\decom$ deduces the security parameter $\secp$ (the security parameter is public and can be assumed to be part of the transript).
	It then checks two things:
	\begin{itemize}
		\item The argument that $\comS$ gave at the last step of the transcript $T$ is convincing (this is possible as the argument is publicly verifiable). 
		
		\item The commitment $\cmt_\comS$ from step \ref{com:Sen_commit} in the transcript $T$ indeed decommits to $m, r$ (i.e. $\Com(1^\secp, m;r) = \cmt_\comS$).
	\end{itemize}
	The output is $1$ iff the check succeeds.
	
	\paragraph{Binding and Hiding.}
	The perfect binding property of the scheme follows readily from the perfect binding of the non-interactive commitment scheme $\Com$.
	We next show hiding.
	
	\begin{proposition} [The Commitment Scheme is Computationally Hiding] \label{lem:com_hiding}
		For every polynomial-size quantum receiver $\commR = \{ \commR_\secp, \rho_\secp \}_{\secp \in \Nat}$ and polynomial $\ell(\cdot)$, 
		$$
		\{
		\view_{\commR_\secp}\prot{\comS(m_0)}{\commR_{\secp}(\rho_\secp)}(1^\secp)
		\}_{\secp, m_0, m_1}
		\approx_{c}
		\{
		\view_{\commR_\secp}\prot{\comS(m_1)}{\commR_{\secp}(\rho_\secp)}(1^\secp)
		\}_{\secp, m_0, m_1} \enspace ,
		$$
		where $\secp \in \Nat$, $m_0, m_1 \in \{ 0, 1 \}^{\ell(\secp)}$.
	\end{proposition}

	\begin{proof}	
		We prove the claim by a hybrid argument.
		Define the following hybrid distributions on transcripts.
		\begin{itemize}
			\item $\Hyb_0:$
			This is the output distribution $\protView_{\commR}\prot{\comS(m_{0})}{\commR(\rho)}$.
			
			\item $\Hyb_1:$
			The output distribution of a process that acts like $\Hyb_0$, with the exception that in step \ref{com:Sen_argument}, instead of $\comS$ communicating with $\commR$ to give a ZK argument, we take the ZK simulator $\zkS$ of the argument system $(\zkP_{\text{NP}}, \zkV_{\text{NP}})$ and use it to simulate the argument by $\comS$, by executing $\zkS(T', \commR, \rho')$, where $T'$ (resp. $\rho'$) is the transcript (resp. inner quantum state of $\commR$) generated at the end of step \ref{com:Sen_challenge} of the interaction.
			
			\item
			$\Hyb_2:$
			The output distribution of a process that acts like $\Hyb_1$, with the exception that in step \ref{com:Sen_response}, instead of actually performing an SFE evaluation of $C_{1\rightarrow m_0}$, the process performs an SFE evaluation of the circuit $C_\bot$ that always outputs $\bot$.
			
			\item
			$\Hyb_3:$
			The output distribution of a process that acts like $\Hyb_2$, with the exception that in step \ref{com:Sen_commit}, instead of committing to $m_0$, the sender commits to $m_1$.
			
			\item
			$\Hyb_4:$
			The output distribution of a process that acts like $\Hyb_3$, with the exception that in step \ref{com:Sen_response}, the process performs an SFE evaluation of the circuit $C_{1\rightarrow m_1}$, and not of the circuit $C_\bot$.
			
			\item
			$\Hyb_5:$
			The output distribution of a process that acts like $\Hyb_4$, with the exception that in step \ref{com:Sen_argument}, instead of using the ZK simulator for the sender's argument, the process uses the ZK argument regularly, that is, the sender proves that the transcript so far is consistent.
			Observe that this is exactly the output distribution $\protView_{\commR}\prot{\comS(m_{1})}{\commR(\rho)}$.
		\end{itemize}
		
		We now explain why each consecutive pair of distributions are computationally indistinguishable, and our proof is finished.
		\begin{itemize}
			\item
			$\Hyb_0 \approx_c \Hyb_1:$
			Follows from the post-quantum zero-knowledge property of the protocol $(\zkP_{\text{NP}}, \zkV_{\text{NP}})$.
			
			\item
			$\Hyb_1 \approx_s \Hyb_2:$
			Assume toward contradiction that the two distributions are distinguishable, and fix, by an averaging argument, the partial transcript $T'$ (and inner state of $\commR$) that is generated at the end of step \ref{com:Rec_commit} of the protocol and maximizes the distinguishing advantage between the two distributions.
			
			We consider two cases for the commitment $\cmt_\comR$ in the transcript $T'$: The simpler case is if $\cmt_\comR$ is not a commitment to 0 (i.e. there is no $r_0 \in \{ 0, 1 \}^*$ s.t. $\cmt_\comR = \Com(1^\secp, 0; r_0)$), in that case, by the soundness of the argument that $\commR$ gives in step \ref{com:Rec_argument}, with overwhelming probability $\comS$ is going to reject the proof and end communication, and only with a negligible probability the process continues to a point where the two processes $\Hyb_1$, $\Hyb_2$ differ, in contradiction.
			
			In the second case $\cmt_\comR$ is a valid commitment to $0$.
			In that case, the contradiction follows from (an implication of) the circuit privacy property of the SFE encryption, specifically, it follows from Claim \ref{claim:SFE_eval}.
			From the perfect binding of the non-interactive commitment scheme $\Com$, there is no string $r_1$ s.t. $\cmt_\comR = \Com(1^\secp, 1; r_1)$, which in turn implies that the circuit $C_{1 \rightarrow m_0}$ is identical in functionality to the circuit $C_\bot$ that outputs $\bot$ on any input.
			By Claim \ref{claim:SFE_eval} it follows that the responses from $\comS$ in step \ref{com:Sen_response} are statistically indistinguishable, and thus also the distributions $\Hyb_1$ and $\Hyb_2$, again in contradiction.
			
			\item
			$\Hyb_2 \approx_c \Hyb_3:$
			Follows from the hiding of the commitment $\cmt_\comS$ that $\comS$ gives in step \ref{com:Sen_commit}, that is, the hiding property of the commitment scheme $\Com$.
			
			\item
			$\Hyb_3 \approx_s \Hyb_4:$
			This indistinguishability follows from the exact same reasoning as in the explanation for the indistinguishability $\Hyb_1 \approx_s \Hyb_2$, by swapping $m_0$ with $m_1$ in the explanation.
			
			\item
			$\Hyb_4 \approx_c \Hyb_5:$
			Like the indistinguishability $\Hyb_0 \approx_c \Hyb_1$, this indistinguishability follows again from the zero-knowledge property of the argument system $(\zkP_{\text{NP}}, \zkV_{\text{NP}})$.
		\end{itemize}
	\end{proof}

	\subsection{Extractability}
	We show a quantum polynomial-time extractor $\Ext$ s.t. for any polynomial-size quantum sender $\commS = \{ \commS_\secp, \rho_\secp \}_{\secp \in \Nat}$ extracts the sender's committed message (if it exists) and also simulates its quantum state at the end of the protocol.
	
	
	\medskip
	\paragraph{$\Ext(1^\secp, \commS, \rho):$}
	\begin{enumerate}
		\item {\bf Simulation of Commitments:} \label{com_simulation:Commitments}
		$\commS$ outputs $\cmt_{\comS}$.
		$\Ext$ then sends to $\commS$ a commitment to 1: $\cmt_{\Ext} = \Com(1^{\secp}, 1; r_1)$, where $r_1 \in \{ 0, 1 \}^{\poly(\secp, 1)}$ is the random string used as the randomness of the commitment algorithm.
		
		\item {\bf Simulation of ZK Argument by $\comR$:} \label{com_simulation:Rec_argument}
		$\Ext$ uses the zero-knowledge simulator $\zkS$ of the argument system $(\zkP_{\text{NP}}, \zkV_{\text{NP}})$.
		$\Ext$ executes $\zkS(\cmt_{\Ext}, \commS, \rho^{(1)})$ to simulate the argument that $\comR$ gives to $\commS$ at step \ref{com:Rec_argument} of the protocol ($\rho^{(1)}$ is the inner state of $\commS$ after step \ref{com_simulation:Commitments} of the extraction).
		At the end of the zero-knowledge simulation, we have a simulated argument transcript and a quantum state $\rho'$ for $\commS$ to carry on to the next step of extraction.
		
		\item {\bf Extraction of Message from $\commS$:} \label{com_simulation:Sen_extraction}
		\begin{itemize}
			\item
			$\Ext$ computes $\sfedk \gets \sfeGen(1^\secp)$ and sends $\ciph_{\Ext}\gets \sfeEnc_{\sfedk}(r_1)$.
			
			\item
			$\commS$ outputs a response $\evciph$.
		\end{itemize}
		$\Ext$ then decrypts the evaluated ciphertext to get a message $m'$.
		
		\item {\bf ZK Argument by $\commS$:} \label{com_simulation:Sen_argument}
		$\Ext$ takes the role of the honest receiver $\comR$ in the ZK argument $\commS$ gives.
		
		\item {\bf Extraction Procedure Output:} \label{com_simulation:output}
		The output $(\sigma_{\Ext}, m_{\Ext})$ of the extraction procedure is as follows.
		\begin{itemize}
			\item
			The simulated inner state $\sigma_{\Ext}$ for the sender is set to be the inner state of $\commS$ at the time of halting of the procedure.
			
			\item
			If the argument that $\commS$ gave in step \ref{com_simulation:Sen_argument} of the procedure is convincing then $m_{\Ext} = m'$, otherwise $m_{\Ext} = \bot$.
		\end{itemize}
	\end{enumerate}	
	
	\medskip
	It remains to explain why the extraction process yields an output that is computationally indistinguishable from a tuple $(T, \sigma, m_T)$ generated by the real interaction between $\commS(\rho)$ and $\comR$, and also that the extracted message $m_{\Ext}$ is indeed the message that $T_{\Ext}$ can be decommitted to.
	
	\begin{proposition} \label{lem:ext_successful}
		Let $\commS = \{ \commS_\secp, \rho_\secp \}_{\secp \in \Nat}$ be a polynomial-size quantum sender, then,
		$$
		\Bigl\{
		(\sigma, m_T) \; | \; (T, \sigma, m_T) \gets \prot{\commS_\secp(\rho_\secp)}{\comR}(1^\secp)
		\Bigr\}_{\secp \in \Nat}
		\approx_{c}
		\Bigl\{ \Ext(1^\secp, \commS_\secp, \rho_\secp) \Bigr\}_{\secp \in \Nat} \enspace .
		$$
	\end{proposition}
	
	\begin{proof}
		We prove the claim by a hybrid argument.
		Define the following hybrid processes:
		\begin{itemize}
			\item $\Hyb_0:$
			This distribution is the output distribution $(\sigma, m_T)$ of the real interaction $\prot{\commS(\rho)}{\comR}$.
			
			\item $\Hyb_1:$
			The output distribution of a process that acts like $\Hyb_0$, with the exception that in step \ref{com:Rec_argument}, instead of $\comR$ communicating with $\commS$ to give a ZK argument, we take the ZK simulator $\zkS$ of the argument system $(\zkP_{\text{NP}}, \zkV_{\text{NP}})$ and use it to simulate the argument by $\comR$ by executing $\zkS(T', \commS, \rho')$, where $T'$ (resp. $\rho'$) is the transcript (resp. inner quantum state of $\commR$) generated at the end of step \ref{com:Rec_commit} of the interaction.
			
			\item $\Hyb_2:$
			The output distribution of a process that acts like $\Hyb_1$, with the exception that when $\comR$ sends $\cmt_\comR$, it commits to $1$ instead of to $0$.
			
			\item $\Hyb_3:$
			The output distribution of a process that acts like $\Hyb_2$, with the exception that in step \ref{com:Rec_challenge}, $\comR$ sends an SFE encryption $\ciph_{\comR}$ of the randomness $r_1$ that it used in step \ref{com:Rec_commit} when it committed for $1$.
			Note that this output distribution is identical to the extraction's output $\Ext(1^\secp, \commS, \rho)$, with the only change being that $m_T$ is generated as in $\prot{\commS(\rho)}{\comR}$.
			
			\item $\Hyb_4:$
			The output distribution of a process that acts like $\Hyb_3$, with the exception that the output message $m_T$ is generated differently, specifically, $m_T$ is $m_{\Ext}$ that is generated as in step \ref{com_simulation:output} of the extraction procedure.
			Note that this process is exactly the output distribution $(\sigma_{\Ext}, m_{\Ext}) \gets \Ext(1^\secp, \commS, \rho)$.
		\end{itemize}
		
		We now explain why each pair of consecutive distributions are computationally indistinguishable, and our proof is finished.
		\begin{itemize}
			\item $\Hyb_0 \approx_c \Hyb_1:$
			Assume toward contradiction that the two distributions are distinguishable, and fix, by an averaging argument, the partial transcript $T'$ and inner quantum state $\sigma'$ of $\commS$ generated at the end of step \ref{com_simulation:Commitments} of the simulation, that maximizes the distinguishing advantage of the two distributions.
			Inside such transcript $T'$ we consider the sender commitment $\cmt_\comS$, and the (unique, by the perfect binding of the commitment scheme $\Com$) message $m_{T'}$ that is inside this commitment (if the commitment cannot be opened to any message, $m_{T'} := \bot$).
			
			From our assumption that $\Hyb_0$, $\Hyb_1$ are distinguishable, follows the existence of a distinguisher that breaks the zero-knowledge property of $(\zkP_{\text{NP}}, \zkV_{\text{NP}})$.
			Specifically, the distinguisher uses as non-uniform advice the partial transcript $T'$ and the message $m_{T'}$, gets either a real interaction transcript or a simulation of the argument that $\comR$ gives in step \ref{com:Rec_argument} of the protocol, then executes the rest of the commitment protocol, and uses the knowledge $m_{T'}$ at the end of protocol execution to output $m_T$.
			It follows that such distinguisher breaks the zero knowledge property of $(\zkP_{\text{NP}}, \zkV_{\text{NP}})$, in contradiction.
			
			In the following explanations for the indistinguishabilities we will use the same averaging argument and non-uniform advice that includes the message $m_{T'}$, and refer to it simply as the "averaging argument with non-uniform advice message".
			
			\item $\Hyb_1 \approx_c \Hyb_2:$
			Follows from the same averaging argument and non-uniform advice message reasoning from the proof of $\Hyb_0 \approx_c \Hyb_1$, along with the hiding of the commitment scheme $\Com$.
			
			\item $\Hyb_2 \approx_c \Hyb_3:$
			Follows from the same averaging argument and non-uniform advice message reasoning from the proof of $\Hyb_0 \approx_c \Hyb_1$, along with the input privacy (encryption security) property of the SFE encryption.
			
			\item $\Hyb_3 \approx_s \Hyb_4:$
			Recall that both processes $\Hyb_3$, $\Hyb_4$ generate the output state $\sigma_{\Ext}$ as in the extraction procedure, but differ only in the way they generate the output message.
			Assume toward contradiction that $\Hyb_3$, $\Hyb_4$ are distinguishable and fix, by an averaging argument, the partial transcript $T'$ (and inner state $\sigma'$ of $\commS$) generated at the end of step \ref{com_simulation:Sen_extraction} of the extraction.
			
			Consider two cases for the partial transcript $T'$.
			\begin{itemize}
				\item
				$T'$ is consistent.
				In that case, by the (perfect) correctness of the SFE evaluation it follows that the generated messages $m_T$ (from $\Hyb_3$) and $m_{\Ext}$ (from $\Hyb_4$) are identical. The rest of the protocol execution, which includes only the argument by $\commS$, is also identical between the two distributions. The distinguisher between $\Hyb_3$, $\Hyb_4$ (we assumed toward contradiction exists) cannot distinguish between these two distributions as they are identical, in contradiction.
				
				\item
				$T'$ is not consistent.
				In that case, by the soundness of the argument system $(\zkP_{\text{NP}}, \zkV_{\text{NP}})$, the argument by $\commS$ fails with overwhelming probability, and with the same probability the values of both $m_T$ (from $\Hyb_3$) and $m_{\Ext}$ (from $\Hyb_4$) are $\bot$.
				It follows that the statistical distance between the distributions is negligible, in contradiction.
			\end{itemize}
		\end{itemize}
	\end{proof}

\section{Constant-Round Zero-Knowledge Quantum Arguments for QMA} \label{sec:qma}
	In this section we explain how the tools from previous sections imply a constant-round zero-knowledge quantum argument for QMA, that is, according to Definition \ref{def:qma_qzk} where honest parties are polynomial-time and quantum (prover is efficient given a quantum witness) and communication is quantum.
	
	The construction uses constant-round (post-quantum) zero-knowledge arguments for NP, quantumly-extractable commitments and a quantum sigma protocol for QMA\footnote{In a previous version of this work we used the QMA zero-knowledge (with large soundness error) protocol of \cite{broadbent2016zero} instead of sigma protocols. Using sigma protocols yields a simplified protocol.}.
	
	\noindent We now proceed to the construction and proof.
	
	\paragraph{Ingredients and notation:}
	\begin{itemize}
		\item
		A constant-round quantumly-extractable commitment scheme $(\comS, \comR)$.
		
		\item
		A constant-round post-quantum zero-knowledge argument system $(\zkP_{\text{NP}}, \zkV_{\text{NP}})$ for NP.
		
		\item
		A quantum sigma protocol for QMA $(\qsigmaP, \qsigmaV)$.
	\end{itemize}
	
%
	
	\noindent We describe the protocol in \figref{fig:const_round_qma}.
	
	\protocol
	{\proref{fig:const_round_qma}}
	{A quantum constant-round zero-knowledge argument for $\lang \in \QMA$.}
	{fig:const_round_qma}
	{
		\begin{description}
			\item[Common Input:] An instance $\ins \in \lang\cap \zo^\secp$, for security parameter $\secp \in \Nat$.
			\item[$\zkP$'s private input:] Polynomially many identical witnesses for $x$: $w^{\otimes k(\secp)}$ s.t. $w \in \mathcal{R}_{\lang}(\ins)$.
		\end{description}
		
		\begin{enumerate}
			
			\item {\bf Verifier Extractable Commitment to Challenge:} \label{qma:verifier_commit}
			$\zkV$ computes $\beta \gets \qsigmaV$ and commits to it using the extractable commitment $(\comS, \comR)$.
			$\zkV$ executes $\comS(1^\secp, \beta)$ and $\zkP$ executes $\comR(1^\secp)$, and commitment transcript $T_{\comS}$ is generated.
			
			\item {\bf Prover Commitment:} \label{qma:prover_alpha}
			$\zkP$ computes $(\alpha, \tau) \gets \qsigmaP_1(\ins,w^{\otimes k(\secp)})$ and sends $\alpha$ to $\zkV$.
			
			\item {\bf Verifier Challenge and ZK Argument:}
			\begin{enumerate}
				\item \label{qma:verifier_beta} $\zkV$ sends $\beta$.
				
				\item \label{qma:verifier_zk} $\zkV$ proves in ZK (using the argument system $(\zkP_{\text{NP}}, \zkV_{\text{NP}})$) that the sent $\beta$ is the value inside the extractable commitment, that is, $\exists r \in \{0, 1\}^*$ such that $1 = \decom(T_\comS, \beta, r)$.
				If the argument was not convincing $\zkP$ terminates communication.
			\end{enumerate}
			
			\item {\bf Sigma Protocol Completion:} \label{qma:prover_gamma}
			If the proof by $\zkV$ was convincing then $\zkP$ computes $\gamma \gets \qsigmaP_3(\beta, \tau)$ and sends $\gamma$.
			
			\item {\bf Acceptance:} The verifier accepts iff $1 = \qsigmaV(\alpha, \beta, \gamma)$.
			
		\end{enumerate}
	}

	\subsection{Computational Soundness}
	We prove that Protocol \proref{fig:const_round_qma} has quantum computational soundness.
	
	\begin{proposition}
		For any quantum polynomial-size prover $\zkmP = \set{\zkmP_\secp, \rho_\secp}_{\secp \in \Nat}$, there exists a negligible function $\mu(\cdot)$ such that for any security parameter $\secp\in \Nat$ and any $\ins \in \zo^\secp\setminus\lang$,
		\begin{align*}
			\Pr\left[ \view_{\zkV}\prot{\zkmP_\secp(\rho_\secp)}{\zkV}(\ins) = 1 \right] \leq \mu(\secp)\enspace.
		\end{align*}
	\end{proposition}
	
	\begin{proof}
		Let $\zkmP = \{ \zkmP_\secp, \rho_\secp \}_{\secp \in \Nat}$ a polynomial-size quantum prover and let $x = \{ x_\secp \}_{\secp \in \Nat}$ be a sequence such that $\forall \secp \in \Nat : x_\secp \in \{ 0, 1 \}^\secp \setminus \lang$.
		We prove soundness by a hybrid argument. We consider a series of hybrid processes with output over $\{ 0, 1 \}$, starting from $\view_{\zkV}\prot{\zkmP(\rho)}{\zkV}(\ins)$ the output distribution of $\zkV$ in the interaction with $\zkmP$.
		
		\begin{itemize}
			\item $\Hyb_0:$ The output distribution of $\view_{\zkV}\prot{\zkmP_\secp(\rho_\secp)}{\zkV}(\ins_\secp)$.
			
			\item $\Hyb_1:$ Identical to the process $\Hyb_0$, with the exception that in step \ref{qma:verifier_zk} when the verifier gives a ZK argument, the process instead uses the ZK simulator $\zkS$ of the argument system $(\zkP_{\text{NP}}, \zkV_{\text{NP}})$. To simulate the prover's view, the process executes $\zkS\left( (T_{\comS}, \beta), \zkmP, \rho' \right)$, where $\rho'$ is the inner quantum state of $\zkmP$ at the end of step \ref{qma:verifier_beta} where the verifier sends $\beta$.
			
			\item $\Hyb_2:$ Identical to the process $\Hyb_1$, with the exception that in step \ref{qma:verifier_commit} when the verifier commits to $\beta$, the process instead commits to $0^{|\beta|}$.
		\end{itemize}
	
		We next explain why each consecutive pair of distributions are indistinguishable.
		
		\begin{itemize}
			\item $\Hyb_0 \approx_c \Hyb_1:$ Follows from the quantum zero knowledge property of the protocol $(\zkP_{\text{NP}}, \zkV_{\text{NP}})$.
			
			\item $\Hyb_1 \approx_c \Hyb_2:$ Follows from the computational hiding of the commitment scheme $(\comS, \comR)$.
		\end{itemize}
	
		Now, assume toward contradiction that $\zkmP$ succeeds in making the verifier accept with some noticeable probability $\varepsilon(\secp)$, that is, the probability for the output $1$ in $\Hyb_0$ is noticeable.
		$\Hyb_0 \approx_c \Hyb_2$, and thus the probability for the output $1$ in $\Hyb_2$ is also noticeable.
		Finally, we get a contradiction to the soundness of the sigma protocol $(\qsigmaP, \qsigmaV)$, by using the prover sigma protocol messages from steps \ref{qma:prover_alpha}, \ref{qma:prover_gamma} as messages to convince a quantum sigma protocol verifier $\qsigmaV$. Since the probability that the verifier $\zkV$ is convinced in $\Hyb_2$ is noticeable, and such verifier is convinced if and only if the sigma protocol verifier is convinced, we get our contradiction.
	\end{proof}

	\subsection{Computational Zero Knowledge}
	We prove that Protocol \proref{fig:const_round_qma} is quantum computational zero knowledge.
	
	We describe a universal simulator $\zkS$ for the protocol.
	We denote by $\zkmV = \{ \zkmV_\secp, \rho_\secp \}_{\secp \in \Nat}$ a polynomial-size quantum verifier.
	The simulator takes as input an instance in the language $x \in \{ 0, 1 \}^{\secp} \cap \lang$, a verifier circuit $\zkmV_\secp$ and quantum auxiliary input $\rho_\secp$ for $\zkmV_\secp$.
	Subscripts are dropped when are clear from the context.
	
	\paragraph{$\zkS(x, \zkmV, \rho)$:}
		\begin{enumerate}
			\item {\bf Extraction of Message from Verifirer:} \label{qma_simulation:extraction}
			$\zkS$ executes the extractor $\Ext$ of the extractable commitment scheme $(\comS, \comR)$. $\zkS$ computes a simulation of the commitment interaction transcript, inner state at the end of interaction and extracted message $(T_{\Ext}, \sigma_{\Ext}, \beta_{\Ext}) \gets \Ext(1^\secp, \zkmV, \rho)$ and uses the simulated state $\sigma_{\Ext}$ as inner state for $\zkmV$ in order to continue the protocol simulation\footnote{By the standard definition, the extractor $\Ext$ simulates only the state and extracted message $(\sigma_{\Ext}, m_{\Ext})$, but recall we can assume without the loss of generality that it also simulates the commitment transcript $T_\Ext$ (see Remark \ref{remark:extractable_commitment}), and the triplet is indistinguishable from $(T, \sigma, m_T) \gets \prot{\commS(\rho)}{\comR}(1^\secp)$.}.
			
			\item {\bf Sigma Protocol First Part Simulation:} \label{qma_simulation:prover_alpha}
			$\zkS$ executes $(\alpha_{\zkS}, \gamma_{\zkS}) \gets \qsigmaS(x, \beta_{\Ext})$ and sends $\alpha_{\zkS}$.
			
			\item {\bf Malicious Verifier Challenge and ZK Argument:} \label{qma_simulation:verifier_beta}
			$\zkS$ takes the role of the honest prover $\zkP$ when the verifier sends $\beta$ and gives a ZK argument that $\exists r \in \{ 0, 1 \}^* : 1 = \decom(T_\Ext, \beta, r)$.
			If the argument was not convincing the simulator halts and concludes simulation.
			
			\item {\bf Sigma Protocol Second Part Simulation:}
			\label{qma_simulation:prover_gamma}
			$\zkS$ sends $\gamma_{\zkS}$ and concludes simulation.
			
		\end{enumerate}
	
	\medskip\noindent
	It remains to prove that the simulator's output is computationally indistinguishable from the verifier's output in the real interaction.
	
	\begin{proposition}
		For any polynomial-size quantum verifier $\zkmV = \set{\zkmV_\secp, \rho_\secp}_{\secp \in \Nat}$,
		$$
		\{ \view_{\zkmV_\secp}\prot{\zkP(w^{\otimes k(\secp)})}{\zkmV_\secp(\rho_{\secp})}(x)\}_{\secp, x, w}
		\approx_{c}
		\{\zkS(x,\zkmV_\secp, \rho_\secp)\}_{\secp, x, w}\enspace,
		$$
		where $\secp \in \Nat$, $x \in \lang \cap \{ 0, 1 \}^\secp$, $w \in \rel_{\lang}(x)$.
	\end{proposition} 
	
	\begin{proof}
		We prove the claim by a hybrid argument, specifically, we consider hybrid distributions, all of which will be computationally indistinguishable.
		
		\begin{itemize}
			\item $\Hyb_0:$ The output distribution of the simulator $\zkS(x,\zkmV, \rho)$.
			
			\item $\Hyb_1:$ Identical to the process $\Hyb_0$, except that we erase some extreme cases from the output distribution, by making a check.
			Specifically, in step \ref{qma_simulation:verifier_beta} when the verifier sends $\beta$ and a ZK argument, if $\beta_{\Ext} \neq \beta$ and also the argument by $\zkmV$ was convincing, the output of the process is $\bot$.
			
			\item $\Hyb_2:$ Identical to the process $\Hyb_1$, except that in steps \ref{qma_simulation:prover_alpha}, \ref{qma_simulation:prover_gamma} where the simulator sends $\alpha_{\zkS}$ and $\gamma_{\zkS}$, the process instead uses the real sigma protocol prover to generate the messages, $(\alpha, \tau) \gets \qsigmaP_1(\ins,w^{\otimes k(\secp)})$, $\gamma \gets \qsigmaP_3(\beta_{\Ext}, \tau)$.
			
			\item $\Hyb_3:$ Identical to the process $\Hyb_2$, except that when computing the the last sigma protocol message $\gamma \gets \qsigmaP_3(\beta_{\Ext}, \tau)$, the process uses the $\beta$ that $\zkmV$ sent instead of the extracted $\beta_\Ext$, that is, $\gamma \gets \qsigmaP_3(\beta, \tau)$.
			
			\item $\Hyb_4:$ Identical to the process $\Hyb_3$, except that the check described in $\Hyb_1$ is not performed, that is, even if the extracted challenge $\beta_\Ext$ and the challenge $\beta$ sent by $\zkmV$ are distinct and the ZK argument by $\zkmV$ succeeds, the process carries on to the last step \ref{qma_simulation:prover_gamma} and does not outputs $\bot$.
			
			\item $\Hyb_5:$ At this point in our series of hybrid distributions we do not use the extracted challenge $\beta_\Ext$, and we would like to move to a final process that does not use extraction at all.
			This process is identical to $\Hyb_4$, with the exception that in step \ref{qma_simulation:extraction} of the simulation, where the simulator executes $\Ext$ to simulate the transcript and inner state of $\zkmV$, the process simply executes the real interaction between $\zkmV$ and $\comR$, $(T, \sigma) \gets \prot{\zkmV(\rho)}{\comR}(1^\secp)$.
			Observe that $\Hyb_5$ is exactly the real interaction output $\view_{\zkmV}\prot{\zkP(w^{\otimes k})}{\zkmV(\rho)}(x)$.
		\end{itemize}
		Before proving that each consecutive pair of hybrids is indistnguishable, 
		
		We prove why each consecutive pair of distributions are computationally indistinguishable, and our proof is finished.
		
		\begin{itemize}
			\item $\Hyb_0 \approx_s \Hyb_1:$
			To show the indistinguishability we need to prove that the probabilistic event that exists in $\Hyb_0$ but is erased in $\Hyb_1$ happens with a negligible probability.
			This is exactly the statement proven in Claim \ref{claim:extracted_info_correct}.
			
			\item $\Hyb_1 \approx_c \Hyb_2:$
			This indistinguishability follows from the special zero knowledge property of the quantum sigma protocol.
			
			\item $\Hyb_2 \equiv \Hyb_3:$
			Due to the fact that in both hybrid processes, whenever $\beta_{\Ext} \neq \beta$ the process halts and outputs $\bot$, it is always the case that the first prover sigma protocol message $\gamma$ is computed with respect to the sent $\beta$.
			
			\item $\Hyb_3 \approx_s \Hyb_4:$
			The reasoning for this indistinguishability is identical to the reasoning for the indistinguishability $\Hyb_0 \approx_s \Hyb_1$, and follows from Claim \ref{claim:extracted_info_correct}.
			
			\item $\Hyb_4 \approx_c \Hyb_5:$
			This indistinguishability follows from the extractability property (in Definition \ref{def:q_ex_commit}) of the commitment scheme $(\comS, \comR)$.
		\end{itemize}
		
	\end{proof}
	
	\begin{claim} [Extracted Information is Correct Under an Argument] \label{claim:extracted_info_correct}
		Let $\zkmV = \set{\zkmV_\secp, \rho_\secp}_{\secp \in \Nat}$ a polynomial-size quantum verifier.
		Consider the process of interaction between $\zkmV(\rho)$ and $\zkP$ in the original protocol, with one change: when $\zkmV$ gives an extractable commitment, instead of executing the interaction $(T, \sigma) \gets \prot{\zkmV(\rho)}{\comR}(1^\secp)$, the process executes the extractor $(T_{\Ext}, \sigma_{\Ext}, \beta_{\Ext}) \gets \Ext(1^\secp, \zkmV, \rho)$.
		Then, there is some negligible function $\negl$ such that,
		$$
		\Pr\left[ \left( \beta \neq \beta_{\Ext} \right) \land \left( \text{$\zkmV$ gives a convincing argument} \right) \right] \leq \negl(\secp) \enspace .
		$$
	\end{claim}
	
	\begin{proof}
		Let $T_{\comS}$ be the transcript generated at the end of the extractable commitment protocol, in the original interaction between $\zkmV$ and $\zkP$.
		By the perfect binding of the commitment scheme $(\comS, \comR)$ it follows that if the statement from $\zkmV$'s ZK argument is correct, that is, there is some $r \in \{ 0, 1 \}^*$ s.t. $1 = \decom(T_{\comS}, \beta, r)$, then $\beta$ is necessarily the committed message, in symbols (denoted in the binding property in Definition \ref{def:q_ex_commit}) $\beta = m_{T_\comS}$.
		It follows from the soundness of the argument that $\zkmV$ gives, that only with a negligible probability $\negl'(\secp)$ it happens that both, $\beta \neq m_{T_\comS}$, and $\zkmV$ gives a convincing argument.
		
		Recall that by the extractability property of the commitment scheme (Extractability property in Definition \ref{def:q_ex_commit}), the following two distributions are indistinguishable,
		$$
		\Bigl\{
		(T, \sigma, m_T) \; | \; (T, \sigma, m_T) \gets \prot{\commS_\secp(\rho_\secp)}{\comR}(1^\secp)
		\Bigr\}_{\secp \in \Nat}
		$$
		$$
		\approx_{c}
		\Bigl\{ (T_\Ext, \sigma_\Ext, m_\Ext) \; | \; (\sigma_\Ext, m_\Ext) \gets \Ext(1^\secp, \commS_\secp, \rho_\secp) \Bigr\}_{\secp \in \Nat} \enspace .
		$$
		This means that when considering the process described in this claim's statement, where extraction takes place (instead of executing the commitment procedure), only with some negligible probability $\negl(\secp)$ it can happen that both, $\beta \neq \beta_\Ext$, and $\zkmV$ gives a convincing argument, this is because if this probability wasn't negligible we would be able to break the extractability property of the commitment scheme $(\comS, \comR)$. 
	\end{proof}

\subsubsection*{Acknowledgments}
We thank Zvika Brakerski for insightful discussions about Quantum Fully-Homomorphic Encryption.
We also thank Venkata Koppula for advice regarding the state of the art of Compute-and-Compare Obfuscation.

\bibliographystyle{alpha}
\bibliography{Bibliography}	

\end{document}